\renewenvironment{thebibliography}[1]{%
\begin{oldthebibliography}{#1}%
\setlength{\baselineskip}{.9em}
\linespread{.9}
\small
\setlength{\parskip}{0ex}%
\setlength{\itemsep}{.1em}%
}%
{%
\end{oldthebibliography}%
}
\newtheorem{thm}{Theorem}[section]
\newtheorem{defi}[thm]{Definition}
\newtheorem{prop}[thm]{Proposition}
\newtheorem{lemma}[thm]{Lemma}
\theoremstyle{definition}
\newtheorem{remark}[thm]{Remark}
\newtheorem*{claim}{Claim}
\newcommand{\bt}{\begin{thm}}
\newcommand{\et}{\end{thm}}
\newcommand{\br}{\begin{remark}}
\newcommand{\er}{\end{remark}}
\newcommand{\bl}{\begin{lemma}}
\newcommand{\el}{\end{lemma}}
\newcommand{\bp}{\begin{proof}}
\newcommand{\ep}{\end{proof}}
\newcommand{\bal}{\begin{align*}}
\newcommand{\eal}{\end{align*}}
\newcommand{\bi}{\begin{itemize}}
\newcommand{\be}{\begin{equation}}
\newcommand{\ee}{\end{equation}}
\newcommand{\bea}{\begin{eqnarray}}
\newcommand{\eea}{\end{eqnarray}}
\newcommand{\ba}{\begin{align*}}
\newcommand{\ea}{\end{align*}}
\newcommand{\ei}{\end{itemize}}
\newcommand{\bc}{\begin{claim}}
\newcommand{\ec}{\end{claim}}
\DeclareMathOperator{\sol}{sol}
\DeclareMathOperator{\conv}{conv}
\DeclareMathOperator{\Var}{Var}
\newcommand{\R}{\mathbb{R}}
\newcommand{\N}{\mathbb{N}}
\newcommand{\F}{\mathcal{F}}
\newcommand{\cF}{\mathcal{F}}
\newcommand{\cA}{\mathcal{A}}
\newcommand{\Om}{\Omega}
\newcommand{\om}{\omega}
\newcommand{\cS}{\mathcal{S}}
\newcommand{\cZ}{\mathcal{Z}}
\newcommand{\tvp}{\widetilde\varphi}
\newcommand{\hvp}{\widehat\varphi}
\newcommand{\hy}{\widehat{y}}
\newcommand{\vp}{\varphi}
\newcommand{\vr}{\varrho}
\newcommand{\ve}{\varepsilon}
\newcommand{\omt}{(\om,t)}
\newcommand{\OmT}{\Om\times[0,T]}
\newcommand{\cD}{\mathcal{D}}
\newcommand{\tS}{\widetilde S}
\newcommand{\tQ}{\widetilde Q}
\newcommand{\tZ}{\widetilde Z}
\newcommand{\tV}{\widetilde V}
\newcommand{\tM}{\widetilde M}
\newcommand{\hS}{\widehat S}
\newcommand{\hY}{\widehat Y}
\newcommand{\sint}{\stackrel{\mbox{\tiny$\bullet$}}{}}
\newcommand{\hh}{\widehat h}
\newcommand{\cC}{\mathcal{C}}
\newcommand{\cY}{\mathcal{Y}}
\newcommand{\cB}{\mathcal{B}}
\numberwithin{equation}{section}
\newcommand{\Inf}{\inf\limits}
\newcommand{\Lim}{\lim\limits}
\begin{document}
\title{Duality Theory for Portfolio Optimisation under Transaction Costs}
\author{Christoph Czichowsky\footnote{Department of Mathematics, London School of Economics and Political Science, Columbia House, Houghton Street, London WC2A 2AE, UK, {\tt c.czichowsky@lse.ac.uk}. Financial support by the Swiss National Science Foundation (SNF) under grant PBEZP2\_137313 is gratefully acknowledged.} 
\hspace{20pt}Walter Schachermayer\footnote{Fakult\"at f\"ur Mathematik, Universit\"at Wien, Oskar-Morgenstern-Platz 1, A-1090 Wien, {\tt walter.schachermayer@univie.ac.at}. Partially supported by the Austrian Science Fund (FWF) under grant P25815, the European Research Council (ERC) under grant FA506041 and by the Vienna Science and Technology Fund (WWTF) under grant MA09-003.}
}
\date{\today}
\maketitle

\begin{abstract}
\noindent
For portfolio optimisation under proportional transaction costs, we provide a duality theory for general c\`adl\`ag price processes. In this setting, we prove the existence of a dual optimiser as well as a shadow price process in a generalised sense. This shadow price is defined via a ``sandwiched'' process consisting of a predictable and an optional strong supermartingale and pertains to all strategies which remain solvent under transaction costs. We provide examples showing that in the present general setting the shadow price process has to be of this generalised form.

\end{abstract}
\noindent
\textbf{MSC 2010 Subject Classification:} optimiser\newline
\vspace{-0.2cm}\newline
\noindent
\textbf{JEL Classification Codes:} G11, C61\newline
\vspace{-0.2cm}\newline
\noindent
\textbf{Key words:} utility maximisation, proportional transaction costs, convex duality, shadow prices, supermartingale deflators, optional strong supermartingales, predictable strong supermartingales, logarithmic utility


\section{Introduction}
Utility maximisation under transaction costs is a classical problem in mathematical finance and essentially as old as its frictionless counterpart. A basic question in this context is whether or not it actually makes a difference, if one considers this problem with or without transaction costs, after passing to an appropriate shadow price process. In this paper, we develop a general duality theory for utility maximisation under transaction costs that allows us to investigate this question. Moreover, we provide examples that illustrate the new phenomena that arise due to the transaction costs in our general framework and cannot be observed in frictionless financial markets.

The prototype of such a duality theory has been derived by Cvitanic and Karatzas in their seminal paper \cite{CK96}. For this, Cvitanic and Karatzas used the density processes of \emph{consistent price systems} introduced by Jouini and Kallal \cite{JK95} as dual variables. These are two-dimensional processes $Z=(Z^0_t, Z^1_t)_{0 \leq t \leq T}$ that consist of the density process $Z^0=(Z^0_t)_{0 \leq t \leq T}$ of an equivalent local martingale measure $Q$ for a price process $\widetilde{S}=(\widetilde{S}_t)_{0 \leq t \leq T}:=\frac{Z^1}{Z^0}$ evolving in the bid-ask spread $[(1-\lambda) S,S]$. Requiring that $\widetilde{S}$ is a local martingale under $Q$ is tantamount to the product $Z^1=Z^0\widetilde{S}$ being a local martingale. Under transaction costs these processes play a similar role as equivalent local martingale measures in the frictionless theory. Assuming that the optimiser to the dual problem exists as a local martingale, denoted by $\widehat{Y}=(\widehat{Y}^0_t, \widehat{Y}^1_t)_{0 \leq t \leq T},$ Cvitanic and Karatzas showed in an It\^o process model that the duality theory applies. It is ``folklore'' that the ratio $\widehat{S}:= \frac{\widehat{Y}^1}{\widehat{Y}^0}$ is then a so-called \emph{shadow price process}. This is a price process evolving within the bid-ask spread such that frictionless trading (i.e.~ trading without transaction costs) for this price process yields the same optimal trading strategy and utility as in the original problem under transaction costs. This implies in particular that the optimal trading strategy under transaction costs only buys stocks, if the ratio $\widehat{S}= \frac{\widehat{Y}^1}{\widehat{Y}^0}$ is at the (higher) ask price $S$, and only sells stocks, if it is at the (lower) bid price $(1-\lambda)S$.

In this paper, we establish a duality theory pertaining to general strictly positive c\`adl\`ag price processes $S=(S_t)_{0 \leq t \leq T}$. Without imposing unnecessary regularity assumptions we want to show that the problem of maximising utility from terminal wealth allows for a primal and a dual optimiser, related via the usual first moment conditions, and that the dual optimiser can be interpreted as a {\it shadow price} $\widehat{S}=\frac{\widehat{Y}^1}{\widehat{Y}^0}.$ To do so we have to interpret the notion of a shadow price $\widehat{S}$ in a rather general sense. In particular, it will turn out that $\widehat{S}$ may fail to be c\`adl\`ag (right continuous with left limits) so that we are forced to leave the classical framework of semimartingale theory.

To motivate the new phenomena arising in the present framework of general c\`adl\`ag price processes $S$, we indicate the ideas of two illuminating examples presented in Section  \ref{sec:ex} below. There the c\`adl\`ag stock price process $S=(S_t)_{0 \leq t \leq 1}$ is defined in such a way that it has a jump happening at a predictable stopping time $\tau$, say $\tau = \frac{1}{2}$. You may interpret $\tau$, e.g., as the time of a (previously announced) speech of the chair person of the ECB. Consider the log-optimal investor holding $(\hvp^0_t)_{0 \leq t \leq 1}$ units of cash and $(\hvp^1_t)_{0 \leq t \leq 1}$ units of the stock $S$. The process $S$ in Example \ref{ex1:pp} is designed in such a way that the holdings in stock $\hvp^1_t$ are increasing, for $0 < t \leq \frac{1}{2}$. The reason is that the stock price $S$ is sufficiently favourable for the investor during this period. If there is a shadow price $\widehat{S}$, this process must therefore satisfy $\widehat{S}_t=S_t$ for $0 \leq t < \frac{1}{2}$. Indeed, it is the basic feature of a shadow price that $\widehat{S}_t = S_t$ holds true when the optimising agent {\it buys} stock, while $\widehat{S}_t=(1-\lambda)S_t$ holds true when she {\it sells} stock.

At time $\tau=\frac{1}{2}$ it may happen that the news revealed during the speech are sufficiently negative to cause the agent to immediately {\it sell} stock, so that a shadow price process $\widehat{S}$ has to satisfy $\widehat{S}_{\frac{1}{2}}=(1-\lambda)S_{\frac{1}{2}},$ on a set of positive measure. Immediately, after time $\tau=\frac{1}{2}$ the situation of Example \ref{ex1:pp} quickly improves again for the log-optimising agent so that $\hvp^1_t$ {\it increases} again for $t > \frac{1}{2}$, implying that $\widehat{S}_t = S_t$, for $ t > \frac{1}{2}.$ 

The bottom line is that a shadow price $\widehat{S}$, if it exists in this example, {\it must have} a left as well as a right jump at time $t=\frac{1}{2}$ with positive probability. In particular $\widehat{S}$ cannot be given by the quotient $\frac{\widehat{Y}^1}{\widehat{Y}^0}$ of two local martingales $(\widehat{Y}^0, \widehat{Y}^1)$, as local martingales are c\`adl\`ag. In fact, $\widehat{S}$ cannot be a semimartingale.

Here is the way out of this difficulty. There is the classical notion of an {\it optional strong supermartingale} introduced by Mertens \cite{M72}, which allows for processes which are only optional and may very well have non-trivial left as well as right jumps. It turns out that this notion is tailor-made to replace the usual notion of a c\`adl\`ag supermartingale in the present situation and allows us to establish the existence of a dual optimiser $(\widehat{Y}^0, \widehat{Y}^1)$ in the class of optional strong supermartingales. In particular, it yields a candidate shadow price $\widehat{S}$ defined via $\widehat{S}=\frac{\widehat{Y}^1}{\widehat{Y}^0}$ as the quotient of two optional strong supermartingales $\widehat{Y}^0$ and $\widehat{Y}^1$. 

Actually, the phenomenon revealed by Example \ref{ex1:pp} is not yet the end of the story. In Example \ref{Ex2:S} we construct a variant displaying an even more delicate issue. Fixing again $\tau=\frac{1}{2},$ this example is designed in such a way that, with positive probability, the optimal strategy $\hvp$ sells stock at all times $t < \frac{1}{2}$ and also sells stock at all times $t \geq \frac{1}{2}$. Just ``immediately before'' time $t = \frac{1}{2}$, which is mathematically described by considering the left limit $S_{\frac{1}{2}-}$, it {\it buys} stock. Therefore a shadow price $\widehat{S}$, provided it exists, would have to satisfy $\widehat{S}_t=(1-\lambda) S_t,$ for $t < \frac{1}{2}$ as well as for $t \geq \frac{1}{2}$, while for $t=\frac{1}{2}$ we have $\widehat{S}_{t-}=S_{t-}$. Such a process $\widehat{S}$ cannot exist as the above properties do not make sense. The way out of this difficulty is to pass to {\it two} ``sandwiched'' processes $(\widehat{S}^p, \widehat{S})$ where $\widehat{S}$ is a quotient of optional strong supermartingales $(\widehat{Y}^0, \widehat{Y}^1)$ as above, while $\widehat{S}^p$ is a quotient of two {\it predictable strong supermartingales} $(\widehat{Y}^{0,p}, \widehat{Y}^{1,p})$, another classical notion from the general theory of stochastic processes \cite{CG79}. The process $\widehat{S}^p$ pertains to the left limits of $S$ and describes the buying or selling of the agent ``immediately before'' predictable stopping times. This turns out to be the final step of the complications. Using the notion of a {\it ``sandwiched shadow price process''} $\widehat{\mathcal{S}}:=(\widehat{S}^p, \widehat{S})$ as above we are able to characterise the dual optimiser as a shadow price and to prove positive results.  

Here is a verbal description of Theorem \ref{mt3} which is one of the main positive results of this paper. Under general hypotheses on an $\mathbb{R}_+$-valued c\`adl\`ag price processes $S=(S_t)_{0 \leq t \leq T}$, transaction costs $\lambda \in(0, 1)$, and a utility function $U$, there is a primal optimiser $\hvp=(\hvp^0_t, \hvp^1_t)_{0 \leq t \leq T}$ and a shadow price $\widehat{\mathcal{S}}=(\widehat{S}^p, \widehat{S})$ taking values in the bid-ask spread $[(1-\lambda)S, S]$ in the above ``sandwiched'' sense satisfying the following properties:
a competing strategy $\varphi=(\varphi^0_t, \varphi^1_t)_{0 \leq t \leq T}$ which  is allowed to trade {\it without transaction costs} at prices defined by $\widehat{\mathcal{S}}$, while remaining solvent with respect to prices defined by $S$ under transaction costs $\lambda$, cannot do better than $\hvp$ with respect to expected utility. 

Summing up our four main contributions are:

\bi
\item[{\bf1)}] We show that the solution $\widehat{Y}=(\widehat{Y}^0, \widehat{Y}^1)$ to the dual problem is attained as an optional strong supermartingale deflator.
\item[{\bf2)}] We explain how to extend the candidate shadow price $\widehat{S}:=\frac{\widehat{Y}^1}{\widehat{Y}^0}$ to a sandwiched shadow price $\widehat{\mathcal{S}}=(\widehat{S}^p, \widehat{S})$ that allows to obtain the optional strategy $\hvp=(\hvp^0, \hvp^1)$ under transaction costs for $S$ by frictionless trading for $\widehat{\mathcal{S}}.$
\item[{\bf3)}] We clarify in which sense the primal optimiser $\hvp=(\hvp^0, \hvp^1)$ for $S$ under transaction costs is also optimal for $\widehat{\mathcal{S}}$ without transaction costs.
\item[{\bf4)}] We provide examples that illustrate that a shadow price has to be of this generalised form and a detailed analysis that exemplifies how and why these new phenomena arise.
\ei

The remainder of the article is organised as follows. We introduce our setting and formulate the problem in Section \ref{sec2}. This leads to our main results that are stated and explained in Section \ref{sec3}. For better readability, the proofs are deferred to Appendix \ref{App:B}. Section \ref{sec:ex} contains the two examples that illustrate that a shadow price has to be of our generalised form. A more detailed analysis of the examples is given in Appendix \ref{App:A}.

\section{Formulation of the problem}\label{sec2}
We consider a financial market consisting of one riskless asset and one risky asset. The riskless asset has constant price $1$. Trading in the risky asset incurs proportional transaction costs of size $\lambda \in (0,1).$ This means that one has to pay a higher ask price $S_t$ when buying risky shares but only receives a lower bid price $(1-\lambda)S_t$ when selling them. The price of the risky asset is given by a strictly positive c\`adl\`ag adapted stochastic process $S=(S_t)_{0 \leq t \leq T}$ on some underlying filtered probability space $\big(\Om, \mathcal{F}, (\mathcal{F}_t)_{0 \leq t \leq T}, P\big)$ satisfying the usual assumptions of right continuity and completeness. As usual equalities and inequalities between random variables hold up to $P$-nullsets and between stochastic processes up to $P$-evanescent sets. 

\emph{Trading strategies} are modelled by $\R^2$-valued, predictable processes $\vp=(\vp^0_t,\vp^1_t)_{0\leq t\leq T}$ of finite variation, where $\vp^0_{t}$ and $\vp^1_{t}$ describe the holdings in the riskless and the risky asset, respectively, after rebalancing the portfolio at time $t$. For any process $\psi=(\psi_t)_{0\leq t\leq T}$ of finite variation we denote by $\psi=\psi_0+\psi^{\uparrow}-\psi^{\downarrow}$ its Jordan-Hahn decomposition into two non-decreasing processes $\psi^{\uparrow}$ and $\psi^{\downarrow}$ both null at zero. The total variation $\Var_t(\psi)$ of $\psi$ on $[0,t]$ is then given by $\Var_t(\psi)=\psi^{\uparrow}_t+\psi^{\downarrow}_t$. Note that, any process $\psi$ of finite variation is in particular l\`adl\`ag (with right and left limits). For any l\`adl\`ag process $X=(X_t)_{0\leq t\leq T}$ we denote by $X^c$ its continuous part given by
$$X^c_t:=X_t-\sum_{s<t} \Delta_+ X_s -  \sum_{s\leq t} \Delta X_s,$$
where $\Delta_+ X_t:=X_{t+}-X_t$ are its right and $\Delta X_t:=X_t-X_{t-}$ its left jumps. As explained in Section 7 of \cite{CS13} in more detail, we can define for a finite variation process $\psi=(\psi_t)_{0\leq t\leq T}$ and a l\`adl\`ag process $X=(X_t)_{0\leq t\leq T}$ the integrals
\begin{align}\label{def:SI:1}
\int^t_0 X_u(\om) d\psi_u(\om):=& \int^t_0 X_u(\om) d\psi^c_u(\om) + \sum_{0 <u \leq t} X_{u-}(\om) \Delta\psi_u(\om) + \sum_{0 \leq u < t} X_u(\om) \Delta_+\psi_u(\om)
\end{align}
and
\begin{align}
\label{def:SI:2}
\psi \sint X_t:= \int^t_0 \psi_u (\om) dX_u (\om):={}& \int^t_0 \psi^c_u (\om) dX_u (\om)+ \sum_{0 <u \leq t} \Delta\psi_u(\om) \big(X_t(\om) - X_{u-}(\om)\big)\notag\\
& + \sum_{0 \leq u < t} \Delta_+\psi_u(\om) \big(X_t(\om) - X_{u}(\om)\big)
\end{align}
pathwise by using Riemann-Stieltjes integrals such that the integration by parts formula
\be
\psi_t(\om)X_t(\om)  =\psi_0(\om) X_0 (\om)+ \int^t_0 \psi_u(\om) dX_u(\om) + \int^t_0 X_u (\om)d\psi_u(\om)\label{SI:IP}
\ee
holds true. Note that, if $X=(X_t)_{0\leq t\leq T}$ is a semimartingale and $\psi=(\psi_t)_{0\leq t\leq T}$ is in addition predictable, the pathwise integral \eqref{def:SI:2} coincides with the classical stochastic integral. 

A strategy $\vp=(\vp^0_t,\vp^1_t)_{0\leq t\leq T}$ is called \emph{self-financing under transaction costs $\lambda$}, if
\begin{align}\label{sfc}
\int^t_s d\varphi^0_u \leq - \int^t_s S_u d\varphi^{1,\uparrow}_u + \int^t_s(1-\lambda)S_u d\varphi^{1,\downarrow}_u
\end{align}
for all $0 \leq s < t \leq T$, where the integrals are defined via \eqref{def:SI:1}. The self-financing condition \eqref{sfc} then states that purchases and sales of the risky asset are accounted for in the riskless position:
\begin{align}
d\varphi^{0,c}_t&\leq-S_td\varphi^{1,\uparrow,c}_t+(1-\lambda)S_td\varphi^{1,\downarrow,c}_t, &0 \leq t \leq T,\label{eq:sf2.1}\\
\Delta\varphi^0_t&\leq-S_{t-}\Delta\varphi^{1,\uparrow}_t +(1-\lambda)S_{t-}\Delta\varphi^{1,\downarrow}_t, &0 \leq t \leq T,\label{eq:sf2.2}\\
\Delta_+\varphi^0_t&\leq-S_t\Delta_+\varphi^{1,\uparrow}_t +(1-\lambda)S_t\Delta_+\varphi^{1,\downarrow}_t, & 0 \leq t \leq T.\label{eq:sf2.3}
\end{align}
A self-financing strategy $\vp$ is \emph{admissible} under transaction costs $\lambda$, if its \emph{liquidation value} $V^{liq}(\vp)$ verifies
\begin{align}\label{eq:lv}
V_t^{liq}(\vp):={}&\vp^0_t+(\vp^1_t)^+(1-\lambda) S_t-(\vp^1_t)^-S_t\geq 0 
\end{align}
for all $t\in[0,T]$.

For $x>0$, we denote by $\cA(x)$ the set of all self-financing, admissible trading strategies under transaction costs $\lambda$ starting with initial endowment $(\vp^0_0,\vp^1_0)=(x,0)$.

Applying integration by parts to \eqref{eq:lv} yields that, for $\varphi \in \cA(x)$, the liquidation value $V^{liq}_t(\vp)$ is given by the initial value of the position $\varphi^0_0 = x$ plus the gains from trading $\int^t_0 \varphi^1_s dS_s$ minus the transaction costs for rebalancing the portfolio $\lambda \int^t_0 S_s d \varphi_s^{1,\downarrow}$ minus the costs $\lambda S_t (\varphi^1_t)^+$ for liquidating the position at time $t$, i.e.~
\begin{equation}\label{eq:gtc}
V^{liq}_t(\vp) = \varphi^0_0 + \int^t_0 \varphi^1_s dS_s- \lambda \int^t_0 S_s d\varphi_s^{1,\downarrow} - \lambda S_t (\varphi^1_t)^+. 
\end{equation}

We consider an investor whose preferences are modelled by a standard utility function\footnote{This means a strictly concave, increasing and continuously differentiable function satisfying the Inada conditions $U'(0)= \lim_{x \searrow 0} U'(x)= \infty$ and $U'(\infty)= \lim_{x \nearrow \infty} U'(x)=0.$} $U:(0, \infty) \to \mathbb{R}$ that tries to maximise expected utility of terminal wealth. Her basic problem is to find the optimal trading strategy $\hvp = (\hvp^0, \hvp^1)$ to
\begin{equation}\label{PP1}
E[U(V^{liq}_T(\vp))] \to \max!, \quad \varphi \in \cA(x). 
\end{equation}
Alternatively, \eqref{PP1} can be formulated as the problem for random variables to find the optimal payoffs $\widehat{g}$ to
\begin{equation}\label{PP2}
E[U(g)] \to \max!, \quad g \in \cC(x), 
\end{equation}
where 
$$\cC(x) =\{V^{liq}_T(\vp)~|~\varphi \in \cA(x)\} \subseteq L^0_+(P)$$
denotes the set of all attainable payoffs under transaction costs.

As explained in Remark 4.2 in \cite{CS06}, we can always assume without loss of generality that the price cannot jump at the terminal time $T$, while the investor can still liquidate her position in the risky asset. This implies that we can assume without loss of generality that $\vp^1_T=0$ and therefore have
$$\cC(x)=\{\vp^0_T~|~\varphi=(\vp^0,\vp^1) \in \cA(x)\}\subseteq L^0_+(P).$$

Following the seminal paper \cite{CK96} by Cvitanic and Karatzas we investigate \eqref{PP1} by duality. For this, we consider the notion of a $\lambda$-consistent price system. A \emph{$\lambda$-consistent price system} is a pair of processes $Z=(Z^0_t, Z^1_t)_{0 \leq t \leq T}$ consisting of the density process $Z^0=(Z^0_t)_{0 \leq t \leq T}$ of an equivalent local martingale measure $Q\sim P$ for a price process $\widetilde{S}=(\widetilde{S}_t)_{0 \leq t \leq T}$ evolving in the bid-ask spread $[(1-\lambda)S,S]$ and the product $Z^1=Z^0\widetilde{S}.$ Requiring that $\widetilde{S}$ is a local martingale under $Q$ is tantamount to the product $Z^1=Z^0\widetilde{S}$ being a local martingale. We say that $S$ satisfies the condition $(CPS^\lambda)$, if it admits a $\lambda$-consistent price system, and denote the set of all $\lambda$-consistent price systems by $\mathcal{Z}.$ As has been initiated by Jouini and Kallal \cite{JK95}, these processes play a similar role under transaction costs as equivalent local martingale measures in the frictionless theory. Similarly as in the frictionless case (see \cite{KK07} and \cite{K10}) it is sufficient for the existence of an optimal strategy for \eqref{PP1} under transaction costs to assume the existence of $\lambda'$-consistent price systems locally; see \cite{BY13}. We therefore say that $S$ admits locally a $\lambda$-consistent price system or shorter satisfies the condition $(CPS^\lambda)$ locally, if there exists a strictly positive stochastic process $Z=(Z^0, Z^1)$ and a localising sequence $(\tau_n)^\infty_{n=1}$ of stopping times such that $Z^{\tau_n}$ is a $\lambda$-consistent price system for the stopped process $S^{\tau_n}$ for each $n \in \mathbb{N}$. We denote the set of all such process $Z$ by $\cZ_{loc}$.

To motivate the dual problem, let $Z=(Z^0,Z^1)$ be any $\lambda$-consistent price system or, more generally, any process in $\cZ_{loc}$. Then trading for the price $\widetilde{S}=\frac{Z^1}{Z^0}$ without transaction costs allows to buy and sell at possibly more favourable prices than applying the price $S$ under transaction costs. Therefore any attainable payoff in the market with transaction costs can be dominated by trading at the price $\widetilde{S}$ without transaction costs and hence
\be
u(x):=\sup_{\varphi \in \mathcal{A}(x)} E[U(V^{liq}_T(\vp))]\leq \sup_{\vp \in \mathcal{A}(x;\widetilde{S})}E[U(x+\vp^1\sint \tS_T)]=:u(x;\tS).\label{sec1:eq1}
\ee

Here $\mathcal{A}(x;\widetilde{S})$ denotes the set of all self-financing and admissible trading strategies $\vp=(\vp^0_t,\vp^1_t)_{0\leq t\leq T}$ for the price process $\widetilde{S}=(\tS_t)_{0\leq t\leq T}$ without transaction costs ($\lambda=0$) in the classical sense, i.e.~that $\vp^1=(\vp^1_t)_{0\leq t\leq T}$ is an $\tS$-integrable predictable process such that $X+\varphi^1\sint \widetilde{S}_t \geq 0$ for all $t \in [0,T]$ and $\vp^0=(\vp^0_t)_{0\leq t\leq T}$ is defined via $\vp^0_t=x+\int_0^t\vp^1_ud\tS_u-\vp^1_t\tS_t$, for $t\in[0,T]$. Note that $\cA(x)\subseteq\cA(x;\tS)$.

As usual we denote by
\be
V(y):= \sup_{x > 0} \{U(x) - xy\}, \quad y > 0,\label{eq:defV}
\ee
the Legendre transform of $-U(-x).$ 

By definition of $\cZ_{loc}$ we have that $Z^0\widetilde{S}=Z^1$ is a local martingale. Therefore $Z^0$ is an \emph{equivalent local martingale deflator} for the price process $\widetilde{S}=(\widetilde{S}_t)_{0 \leq t \leq T}$ in the language of Kardaras \cite{K10} and 
\begin{equation*}
Z^0 \vp^0 + Z^1 \vp^1 = Z^0(\vp^0 + \vp^1 \widetilde{S})=Z^0 (x + \vp^1 \sint \widetilde{S})
\end{equation*}
is a non-negative local martingale and hence a supermartingale for all $\vp \in \cA (x;\widetilde{S}).$

Combining the supermartingale property with the Fenchel inequality we obtain
\begin{align*}
u(x;\widetilde{S}) &= \sup_{\varphi \in \cA (x;\widetilde{S})} E[U(x+\varphi^1 \sint \widetilde{S}_T)] \leq E[V(yZ^0_T) + yZ^0_T (x + \varphi^1 \sint \widetilde{S}_T)]\leq E[V(y Z^0_T)] + xy.
\end{align*}
As $u(x) \leq u(x;\widetilde{S})$ by \eqref{sec1:eq1}, the above inequality implies that
$$u(x) \leq E[V(y Z^0_T)]$$
for all $Z=(Z^0, Z^1) \in \cZ_{loc}$ and $y >0$ and therefore motivates to consider
\begin{align}
E[V(y Z^0_T)] \to \min!, \quad Z=(Z^0, Z^1) \in \cZ_{loc},\label{DP1}
\end{align}
as dual problem. Again problem \eqref{DP1} can be alternatively formulated as a problem over a set of random variables
\begin{equation}\label{DP1.2}
E[V(h)] \to \min!, \quad h \in D(y),
\end{equation}
where 
\begin{equation}\label{DP1.3}
D(y)=\{yZ^0_T~|~Z =(Z^0, Z^1)  \in \cZ_{loc}\}= yD(1)
\end{equation}
for $y>0$ and $D(1)=D$.

If the solution $\widehat{Z}=(\widehat{Z}^0, \widehat{Z}^1) \in \cZ_{loc}$ to problem \eqref{DP1} exists, the ratio
$$\widehat{S}_t := \frac{\widehat{Z}^1_t} {\widehat{Z}^0_t}, \quad t \in [0,T],$$
is a \emph{shadow price} in the sense of the subsequent definition (compare \cite{KMK10, KMK11}). This result seems to be folklore going back to the works of Cvintanic and Karatzas~\cite{CK96} and Loewenstein\cite{L00}, but we did not find a reference. We state and prove it in Proposition \ref{lem:martingale} below.
\begin{defi}\label{def:shadow}
A semimartingale $\tS=(\tS_t)_{0\leq t\leq T}$ is called a \emph{shadow price}, if
\bi
\item[\bf{1)}] $\tS=(\tS_t)_{0\leq t\leq T}$ takes values in the bid-ask spread $[(1-\lambda)S,S]$.
\item[\bf{2)}] The solution $\tvp=(\tvp^0,\tvp^1)$ to the corresponding frictionless utility maximisation problem
\be
E[U(x+\vp^1\sint \tS_{T})]\to\max!,\qquad (\varphi^0,\varphi^1)\in\cA(x; \tS),\label{p1}
\ee
exists and coincides with the solution $\hvp=(\hvp^0,\hvp^1)$ to \eqref{PP1} under transaction costs.
\ei
\end{defi}
Note that a shadow price $\tS=(\tS_t)_{0\leq t\leq T}$ depends on the process $S$, the investor's utility function, and on her initial endowment.

The intuition behind the concept of a shadow price is the following. If a shadow price $\widetilde{S}$ exists, then an optimal strategy $\widetilde{\varphi}=(\widetilde{\varphi}^0, \widetilde{\varphi}^1)$ for the frictionless utility maximisation problem \eqref{p1} can also be realised in the market with transaction costs in the sense spelled out in \eqref{2.10} below. As the expected utility for $\widetilde{S}$ without transaction costs is by \eqref{sec1:eq1} a priori higher than that of any other strategy under transaction costs, it is -- a fortiori -- also an optimal strategy under transaction costs. In this sense the price process $\widetilde{S}$  is a least favourable frictionless market evolving in the bid-ask spread. 
The existence of a shadow price $\tS$ implies in particular that the optimal strategy $\hvp=(\hvp^0,\hvp^1)$ under transaction costs only trades, if $\widetilde{S}$ is at the bid or ask price, i.e. 
$$\{d\hvp^1>0\}\subseteq\{\tS=S\}  \quad \mbox{and} \quad \{d\hvp^1<0\}\subseteq\{\tS=(1-\lambda)S\}$$
in the sense that
\begin{align}
\{d\hvp^{1,c}>0\}&\subseteq \{\tS=S\}, & \{d\hvp^{1,c}<0\}&\subseteq \{\tS=(1-\lambda)S\},\notag \\
\{\Delta \hvp^{1}>0\}&\subseteq \{\tS_-=S_-\}, &  \{\Delta \hvp^{1}<0\}&\subseteq \{\tS_-=(1-\lambda)S_-\}, \notag \\
\{\Delta_+ \hvp^{1}>0\}&\subseteq \{\tS=S\}, &  \{\Delta_+ \hvp^{1}<0\}&\subseteq \{\tS=(1-\lambda)S\}.\label{2.10}
\end{align}

As the counter-examples in \cite{BCKMK11} and \cite{CMKS14} illustrate and we shall show in Section \ref{sec:ex} below, shadow prices fail to exit in general, at least in the rather narrow sense of Def \ref{def:shadow}. The reason for this is that, similarly to the frictionless case \cite{KS99}, the solution $\widehat{h}$ to \eqref{DP1.2} is in general only attained as a $P$-a.s.~limit
\begin{align}\label{p7}
\widehat{h} = y\lim_{n \to \infty} Z^{0,n}_T
\end{align}
of a minimising sequence $(Z^n)_{n=1}^\infty$ of local consistent price systems $Z^n=(Z^{0,n}, Z^{1,n})$.

To ensure the existence of an optimiser, one has therefore to work with relaxed versions of the dual problems \eqref{DP1} and \eqref{DP2}. For the dual problem \eqref{DP2} on the level of random variables it is clear that one has to consider
\begin{equation}\label{DP2}
E[V(h)] \to \min!, \quad h \in\overline{\sol\big(D(y)\big)}, 
\end{equation}
where
$$\overline{\sol\big(D(y)\big)} = \{yh \in L^0_+ (P)~| ~\exists Z^n =(Z^{0,n}, Z^{1,n}) \in \cZ_{loc} \ \mbox{such that}\ h \leq \lim_{n \to \infty} Z^{0,n}_T\}$$
is the closed, convex, solid hull of $D(y)$ defined in \eqref{DP1.3} for $y>0.$ 

As sets $\cC(x)$ and $\overline{\sol\big(D(y)\big)}$ are polar to each other in $L^0_+(P)$ (see Lemma \ref{lpolar}), the abstract versions (Theorems 3.1 and 3.2) of the main results of \cite{KS99} carry over verbatim to the present setting under transaction costs. This has already been observed in \cite{CW01, CMKS14, BY13} and gives \emph{static} duality results in the sense that they provide duality relations between the solutions to the problems \eqref{PP2} and \eqref{DP2} which are problems for random variables rather than stochastic processes. See also \cite{DPT01, CO11} for static results for more general multivariate utility functions. 
However, in the context of dynamic trading,  this is not yet completely satisfactory. Here one would not only like to know the optimal terminal positions but also how to dynamically trade to actually attain those. We therefore ask, if we can extend these static results to dynamic ones in the same spirit as Theorems 2.1 and 2.2 of \cite{KS99}. In particular, we address the following questions:
\bi
\item[\bf{1)}] Is there a ``reasonable'' stochastic process $\hY=(\hY_t^0,\hY_t^1)_{0\le t\le T}$ such that $\hY^0_T=\hh,$ where $\hh$ is a dual optimiser as in \eqref{p7}?
\item[\bf{2)}] Do we have $\{d\hvp^1>0\}\subseteq\{\hS=S\}$ and $\{d\hvp^1<0\}\subseteq\{\hS=(1-\lambda)S\}$ as in \eqref{2.10} for $\hS=\frac{\hY^1}{\hY^0}$?\label{questions}
\item[\bf{3)}] In which sense is $\hvp=(\hvp^0, \hvp^1)$ optimal for $\widehat{S}$?
\ei
\section{Main results}\label{sec3}
In this section, we consider the three questions above that lead to our main results. For better readability the proofs are deferred to Appendix \ref{App:B}.

Let us begin with the first question. Similarly as in the frictionless duality \cite{KS99}, we consider \emph{supermartingale deflators} as dual variables. These are non-negative (not necessarily c\`adl\`ag) supermartingales $Y=(Y^0, Y^1) \geq 0$ such that $\widetilde{S}:= \frac{Y^1}{Y^0}$ is valued in the bid-ask spread $[(1-\lambda)S,S]$ and that turn all trading strategies $\varphi=(\varphi^0, \varphi^1) \in \cA(1)$ into supermartingales, i.e.
\begin{equation}\label{df}
Y^0\varphi^0 + Y^1 \varphi^1 = Y^0 (\varphi^0 + \varphi^1 \widetilde{S})
\end{equation}
is a supermartingale for all $\varphi \in \cA (1).$
Recall that in the frictionless case \cite{KS99}, the solution to the dual problem for an arbitrary semimartingale price process $\widetilde{S}=(\widetilde{S}_t)_{0 \leq t \leq T}$ is attained in the set of (one-dimensional) \emph{c\`adl\`ag} supermartingale deflators
\begin{multline*}
Y(y;\widetilde{S})=\{Y=(Y_t)_{0 \leq t \leq T}\geq 0~|~Y_0=y \quad \mbox{and} \quad Y(\varphi^0 + \varphi^1\widetilde{S})=Y(1 + \varphi^1\sint\widetilde{S})\\
\text{is a c\`adl\`ag supermartingale for all $\varphi \in \cA(1;\widetilde{S})$}\}.
\end{multline*}
The reason for this is that by the frictionless self-financing condition the value $\varphi^0 + \varphi^1\widetilde{S}$ of the position is equal to the gains from trading given by $x + \varphi^1 \sint \widetilde{S}.$ As the stochastic integral $x + \varphi^1 \sint \widetilde{S}$ is right-continuous, the optimal supermartingale deflator to the dual problem can be obtained as the c\`adl\`ag Fatou limit of a minimising sequence of equivalent local martingale or supermartingale deflators; see Lemma 4.2 and Proposition 3.1 in \cite{KS99}. This means as the c\`adl\`ag modification of the $P$-a.s.~pointwise limits along the rationals that are obtained by combining Koml\'os' lemma with a diagonalisation procedure.
 
We show in \cite{CS14} that the dual optimiser is attained as Fatou limit under transaction costs as well, if the price process $S$ is \emph{continuous}. As the price process does not jump, it doesn't matter, if one is trading immediately before, or just at a given time and one can model trading strategies by \emph{c\`adl\`ag} adapted finite variation processes. By \eqref{df} the right-continuity of $(\varphi^0, \varphi^1)$ then allows to pass the supermartingale property onto to the Fatou limit as in the frictionless case.

For c\`adl\`ag price processes $S=(S_t)_{0 \leq t \leq T}$ under transactions costs $\lambda$, however, one has to use predictable finite variation strategies $\varphi=(\vp^0_t,\vp^1_t)_{0 \leq t \leq T}$ that can have left and right jumps to model trading strategies as motivated in the introduction. This is unavoidable in order to obtain that the set $\cC(x)$ of attainable payoffs under transaction costs is closed in $L^0_+(P)$ (see Theorem 3.5 in \cite{CS06} or Theorem 3.4 in \cite{S14}). As we have to optimise simultaneously over $Y^0$ and $Y^1$ to obtain the optimal supermartingale deflator, we need a different limit than the Fatou limit in \eqref{df} to remain in the class of supermartingale deflators. This limit also needs to ensure the convergence of a minimising sequence $Z^n=(Z^{0,n}_t, Z^{1,n}_t)_{0 \leq t \leq T}$ of consistent price systems at the jumps of the trading strategies. It turns out that the convergence in probability at all finite stopping times is the right topology to work with (compare \cite{CS13}). The limit of the non-negative local martingales $Z^n=(Z^{0,n}_t, Z^{1,n}_t)_{0 \leq t \leq T}$ for this convergence is then an \emph{optional strong supermartingale}.
\begin{defi}
A real-valued stochastic process $X=(X_t)_{0\leq t\leq T}$ is called an \emph{optional strong supermartingale}, if
\bi
\item[\bf{1)}] $X$ is optional.
\item[\bf{2)}] $X_\tau$ is integrable for every $[0,T]$-valued stopping time $\tau$.
\item[\bf{3)}] For all stopping times $\sigma$ and $\tau$ with $0\leq\sigma\leq\tau\leq T$ we have
$$X_\sigma\geq E[X_\tau|\cF_\sigma].$$
\ei
\end{defi}
These processes have been introduced by Mertens \cite{M72} as a generalisation of the notion of a c\`adl\`ag supermartingale. Like the Doob-Meyer decomposition in the c\`adl\`ag case every optional strong supermartingale admits a unique decomposition
\begin{equation}\label{MD}
X=M-A
\end{equation}
called the \emph{Mertens decomposition} into a c\`adl\`ag local martingale $M=(M_t)_{0\leq t\leq T}$ and a non-decreasing and  hence l\`adl\`ag (but in general neither c\`adl\`ag nor c\`agl\`ad) predictable process $A=(A_t)_{0\leq t\leq T}$. The existence of the decomposition \eqref{MD} implies in particular that every optional strong supermartingale is l\`adl\`ag.

As dual variables we then consider the set of \emph{optional strong supermartingale deflators}
\begin{multline}
\cB(y)=\big\{(Y^0,Y^1) \geq 0\ \big|\ Y^0_0=y,\, \tS=\tfrac{Y^1}{Y^0} \in[(1-\lambda)S,S]\text{ and }Y^0(\vp^0+  \vp^1\tS)=Y^0\vp^0+Y^1\vp^1  \\
\text{is a non-negative optional strong supermartingale for all $(\vp^0,\vp^1)\in\cA(1)$}\big\}\label{def:B}
\end{multline}
and, accordingly, 
$$\cD(y)=\{Y^0_T~|~\text{$(Y^0,Y^1)\in\cB(y)$}\}  \quad \mbox{for $y>0$}.$$
We will show in Lemma \ref{lpolar} below that we have $\cD(y)=\overline{\sol\big(D(y)\big)}$ with this definition.

Using a version of Koml\'os' lemma (see Theorem 2.7 in \cite{CS13}) pertaining to optional strong supermartingales then allows us to establish our first main result. It is in the well-known spirit of the duality theory of portfolio optimisation as initiated by \cite{Pliska,KLSX91,HP91, KS99}.
\bt\label{mainthm}
Suppose that the adapted c\`adl\`ag process $S$ admits locally a $\lambda'$-consistent price system for all $\lambda'\in(0,\lambda)$, the asymptotic elasticity of $U$ is strictly less than one, i.e., $AE(U):=\limsup\limits_{x\to\infty}\frac{xU'(x)}{U(x)}<1$, and the maximal expected utility is finite, $u(x):=\sup_{g\in\cC(x)}E[U(g)]<\infty$, for some $x\in(0,\infty)$. Then:
\bi
\item[{\bf 1)}] The primal value function $u$ and the dual value function
$$v(y):=\inf_{h\in\cD(y)}E[V(h)]$$
are conjugate, i.e.,
\begin{eqnarray*}u(x)=\inf_{y>0}\{v(y)+xy\},\qquad v(y)=\sup_{x>0}\{u(x)-xy\},
\end{eqnarray*}
and continuously differentiable on $(0,\infty)$. The functions $u$ and $-v$ are strictly concave, strictly increasing, and satisfy the Inada conditions
$$\text{$\Lim_{x\to0}u'(x)=\infty,\qquad\Lim_{y\to\infty}v'(y)=0,\qquad\Lim_{x\to\infty}u'(x)=0,\qquad\Lim_{y\to0}v'(y)=-\infty$}.$$
\item[{\bf 2)}] For all $x,y>0$, the solutions $\widehat g (x)\in\cC(x)$ and $\widehat h(y)\in\cD(y)$ to the primal problem
\begin{equation*}
\textstyle
E\left[U(g)\right]\to\max!, \qquad{g\in\cC(x)},
\end{equation*}
and the dual problem
\begin{equation}
\textstyle
E\left[V(h)\right]\to\min!\label{DP3}, \qquad{h\in\cD(y)},
\end{equation}
exist, are unique, and there are
$\big(\hvp^0(x),\hvp^1(x)\big)\in\cA(x)$ and $\big(\widehat{Y}^0(y),\widehat{Y}^1(y)\big)\in\cB(y)$ such that
\be
\text{$V_{T}^{liq}\big(\hvp(x)\big)=\widehat g(x)\qquad$ and $\qquad\widehat{Y}^0_T(y)=\widehat h(y)$.}\label{martcond}
\ee
\item[{\bf 3)}] For all $x>0$, let $\widehat y (x)=u'(x)>0$ which is the unique solution to
$$
v(y)+xy\to\min!,\qquad y>0.
$$
Then, $\widehat g (x)$ and $\widehat h \big(\widehat y(x)\big)$ are given by $(U')^{-1}\big(\widehat h \big(\widehat y(x)\big)\big)$ and $U'\big(\widehat g (x)\big)$, respectively, and we have that $E\big[\widehat g(x)\widehat h\big(\widehat y (x)\big)\big]=x\widehat y(x)$. In particular, the process
$$\widehat{Y}^0\big(\widehat y(x)\big)\hvp^0(x)+ \widehat{Y}^1\big(\widehat y(x)\big)\hvp^1(x)=\Big(\widehat{Y}^0_t\big(\widehat y(x)\big)\hvp^0_t(x)+ \widehat{Y}^1_t\big(\widehat y(x)\big)\hvp_t^1(x)\Big)_{0\leq t\leq T}$$
is a c\`adl\`ag martingale for all $\big(\hvp^0(x),\hvp^1(x)\big)\in\cA(x)$ and $\big(\widehat{Y}^0\big(\widehat y(x)\big),\widehat{Y}^1\big(\widehat y(x)\big)\big)\in\cB\big(\widehat y (x)\big)$ satisfying \eqref{martcond} with $y=\widehat y (x)$.
\item[{\bf 4)}] Finally, we have
\be
v(y)=\Inf_{(Z^0,Z^1)\in\mathcal{Z}_{loc}}E[V(yZ^0_T)].\label{mt1:eq4}
\ee
\ei
\et
Before we continue, let us briefly comment -- for the specialists -- on the assumption that $S$ admits locally a $\lambda'$-consistent price system for \emph{all} $\lambda'\in(0,\lambda)$. We have to make this assumption, since we chose that $V^{liq}(\vp)\geq 0$ as admissibility condition; compare \cite{S13} and \cite{S14}. Without this assumption Bayraktar and Yu show that a primal optimiser still exists, if $S$ admits locally a $\lambda'$-consistent price system for \emph{some} $\lambda'\in(0,\lambda)$; see \cite[Theorem 5.1]{BY13}. However, then a modification of the example in \cite[Lemma 2.1]{S13} shows that the dual optimiser is only a supermartingale deflator in this case that can no longer be approximated by local consistent price systems. To resolve this issue, one can alternatively use (a local version of) the admissibility condition of Campi and Schachermayer \cite[Definition 2.7]{CS06} and say that a self-financing trading strategy $\vp=(\vp^0,\vp^1)$ is admissible, if $Z^0\vp^0+Z^1\vp^1$ is a non-negative supermartingale for all $Z=(Z^0,Z^1)\in\cZ_{loc}$. Then one could also replace the ``\emph{all}'' by a ``\emph{some}'' in the assumption.

In order to obtain a crisp theorem instead of getting lost in the details of the technicalities we therefore have chosen to use the (stronger) hypothesis pertaining to {\it all} $\lambda' \in (0, \lambda).$

Let us now turn to the second question raised at the end of the last section. Defining $\widehat{S}:= \frac{\widehat{Y}^1}{\widehat{Y}^0}$ the above theorem provides a price process evolving in the bid-ask spread and so the natural question is in which sense this can be interpreted as a shadow price. For example, we show in \cite{CS14} that for continuous processes $S=(S_t)_{0 \leq t \leq T}$ satisfying the condition $(NUPBR)$ of ``no unbounded profit with bounded risk'' the definition $\widehat{S}= \frac{\widehat{Y}^1}{\widehat{Y}^0}$ does yield a shadow price in the sense of Definition \ref{def:shadow}. However, in general, the counter-examples in \cite{BCKMK11, CMKS14, CS14} illustrate that the frictionless optimal strategy for $\widehat{S}$ to \eqref{p1} might do strictly better (with respect to expected utility of terminal wealth) than the optimal strategy under transaction costs and both strategies are different. While we show in Theorem 2.6 in \cite{CS14} that the dual optimiser is always a c\`adl\`ag supermartingale, if the underlying price process $S$ is continuous, we shall see in Example \ref{sec:ex1} below that it may indeed happen that the dual optimiser $\widehat{Y}=(\widehat{Y}^0, \widehat{Y}^1)$ as well as its ratio $\widehat{S}$ do \emph{not} have c\`adl\`ag trajectories and therefore fail to be semimartingales. Though we are not in the standard setting of stochastic integration we can still define the stochastic integral $\widehat{\varphi}^1 \sint \widehat{S}$ of a predictable finite variation process $\hvp^1=(\hvp^1_t)_{0 \leq t \leq T}$ with respect to the l\`adl\`ag process $\widehat{S}=(\widehat{S}_t)_{0 \leq t \leq T}$ by integration by parts; see \eqref{def:SI:1} and \eqref{def:SI:2}. This yields
\begin{align}\label{D4}
(\vp^1 \sint \widehat{S})_t &= \int^t_0 \vp_u^{1,c} d\widehat{S}_u+ \sum_{0 < u \leq t} \Delta \vp^1_u \big(\widehat{S}_t - \widehat{S}_{u-} \big)+ \sum_{0 \leq u < t} \Delta_+ \vp^1_u  \big( \widehat{S}_t - \widehat{S}_u\big).
\end{align}
The integral \eqref{D4} can still be interpreted as the gains from trading of the self-financing trading strategy $\hvp^1=(\hvp^1_t)_{0 \leq t \leq T}$ without transaction costs for the price process $\widehat{S}=(\widehat{S}_t)_{0 \leq t \leq T}.$ We may ask, whether $\widehat{S}$ is the frictionless price process for which the optimal trading strategy $\hvp=(\hvp^0, \hvp^1)$ under transaction costs trades in the sense of \eqref{2.10}.

It turns out that the left jumps $\Delta \hvp^1_u$ of the optimiser $\hvp^1$ need special care. The crux here is that, as shown in \eqref{D4}, the trades $\Delta\hvp^1_u$ are not carried out at the price $\widehat{S}_u$ but rather at its left limit $\widehat{S}_{u-}.$ As motivated in the introduction we need to consider a pair of processes $Y^p=(Y^{0, p}_t, Y^{1, p}_t)_{0 \leq t \leq T}$ and $Y=(Y^0_t, Y^1_t)_{0 \leq t \leq T}$ that correspond to the limit of the left limits $Z^n_-=(Z^{0,n}_-, Z^{1,n}_-)$ and the limit of the approximating consistent price systems $Z^n=(Z^{0,n}, Z^{1,n})$ themselves retrospectively. As we shall see in Example \ref{sec:ex2} below, the process $Y^p$ and $Y_-$ do not need to coincide so that we have that ``limit of left limits $\ne$ left limit of limits''.

Like the left limits $Z^n_-=(Z^{0,n}_-, Z^{1,n}_-)$ their limit $Y^p=(Y^{0,p}, Y^{1,p})$ is a predictable strong supermartingale.
\begin{defi}\label{def:pred}
A real-valued stochastic process $X=(X_t)_{0 \leq t \leq T}$ is called a \emph{predictable strong supermartingale}, if
\bi
\item[\bf{1)}] $X$ is predictable.
\item[\bf{2)}] $X_{\tau}$ is integrable for every $[0,T]$-valued \emph{predictable} stopping time $\tau$.
\item[\bf{3)}] For all \emph{predictable} stopping times $\sigma$ and $\tau$ with $0 \leq \sigma \leq \tau \leq T$ we have
$$X_\sigma \geq E[X_\tau | \cF_{\sigma-}].$$
\ei
\end{defi}
These processes have been introduced by Chung and Glover \cite{CG79} and we refer also to Appendix I of \cite{DM82} for more information on this concept.

We combine the two classical notions of predictable and optional strong supermartingales in the following concept.
\begin{defi}
A {\it sandwiched strong supermartingale} is a pair $\mathcal{X}=(X^p, X)$ such that $X^p$ (resp. $X$) is a predictable (resp. optional) strong supermartingale and such that 
\begin{align}\label{D5}
X_{\tau-} \geq X^p_\tau \geq E[X_\tau | \cF_{\tau-}],
\end{align}
for all predictable stopping times $\tau.$
\end{defi}
For example, starting from an optional strong supermartingale $X=(X_t)_{0 \leq t \leq T}$ we may define the process
\be
X^p_t:= X_{t-},\quad t\in[0,T],\label{D5.1}
\ee
to obtain a ``sandwiched'' strong supermartingale $\mathcal{X}=(X^p, X)$. If $X$ happens to be a local martingale, this choice is unique as we have equalities in \eqref{D5}. But in general there may be strict inequalities. This is best illustrated in the (trivial) deterministic case: if $X_t=f_t$ for a non-increasing function $f$, we may choose $X^p_t=f^p_t$, where $f^p_t$ is any function sandwiched between  $f_{t-}$ and $f_t.$

For a sandwiched strong supermartingale $\mathcal{X}=(X^p, X)$ and a predictable process $\psi$ of finite variation we may define a stochastic integral in {\it ``a sandwiched sense''} by
\begin{align}\label{D7}
(\psi \sint \mathcal{X}) &= \int^t_0 \psi^c_udX_u+ \sum_{0 \leq u <t} \Delta \psi_u(X_t - X^p_u)+ \sum_{0 < u \leq t} \Delta_+ \psi_u (X_t - X_u).
\end{align}
We note that \eqref{D7} differs from \eqref{D4} and \eqref{def:SI:2} only by replacing $X_-$ by $X^p$ and the two formulas are therefore consistent, as we can extend every optional strong supermartingale $X=(X_t)_{0 \leq t \leq T}$ to a sandwiched strong supermartingale $\mathcal{X}=(X^p, X)$ by \eqref{D5.1}.
Hence in the case of a local martingale both integrals \eqref{D4} and \eqref{D7} are equal to the usual stochastic integral.

In the context of Theorem \ref{mainthm} above we call $\cY=(Y^p, Y)=\big( (Y^{0,p}, Y^{1,p}), (Y^0, Y^1)\big)$ a \emph{sandwiched strong supermartingale deflator} (see \eqref{def:B}), if $Y=(Y^0, Y^1) \in \cB(y)$ and $(Y^{0,p}, Y^0)$ and $(Y^{1,p}, Y^1)$ are sandwiched strong supermartingales and the process $\widetilde{S}^p$ lies in the bid-ask spread, i.e.
$$\widetilde{S}^p_t:=\frac{Y^{1,p}_t}{Y^{0,p}_t} \in [(1-\lambda)S_{t-},S_{t-}], \quad t\in[0,T].$$

The definitions above allow us to obtain the following extension of Theorem \ref{mainthm}, which is our second main result. Roughly speaking, it states that the hypotheses of Theorem \ref{mainthm} suffice to yield a shadow price if one is willing to interpret this notion in a more general ``sandwiched sense'' rather than in the strict sense of Definition \ref{def:shadow}.
\bt\label{mt2}
Under the assumptions of Theorem \ref{mainthm}, let $(Z^n)_{n=1}^\infty$ be a minimising sequence of local $\lambda$-consistent price systems $Z^n=(Z^{0,n}_t, Z^{1,n}_t)_{0 \leq t \leq T}$ for the dual problem \eqref{mt1:eq4}, i.e.
$$
E\big[V\big(\widehat{y}(x)Z^{0,n}_T\big)\big]\searrow v\big(\widehat{y}(x)\big),\quad\text{as $n\to\infty$}.
$$
Then there exist convex combinations $\tZ^n\in\conv(Z^n, Z^{n+1},\ldots)$ and a sandwiched strong supermartingale deflator $\widehat{\cY}=(\widehat{Y}^p, \widehat{Y})$ such that
\begin{align}
\widehat{y}(x)(\tZ^{0,n}_{\tau-}, \tZ^{1,n}_{\tau-})&\stackrel{P}{\longrightarrow} (\widehat{Y}^{0,p}_\tau, \widehat{Y}^{1,p}_\tau), \label{12.1}\\
\widehat{y}(x)(\tZ^{0,n}_\tau, \tZ^{1,n}_\tau)&\stackrel{P}{\longrightarrow} (\widehat{Y}^{0}_\tau, \widehat{Y}^{1}_\tau), \label{12.2}
\end{align}
as $n\to\infty$, for all $[0,T]$-valued stopping times $\tau$ and we have, for any primal optimiser $\hvp=(\hvp^0,\hvp^1)$, that
\begin{equation}\label{mt2:eq1}
\widehat{Y}^0 \hvp^0(x) + \widehat{Y}^1 \hvp^1(x) = \widehat{Y}^0 \big(x+\hvp^1(x) \sint \widehat{\mathcal{S}}\big),
\end{equation}
where 
$$\widehat{\mathcal{S}}=(\widehat{S}^p, \widehat{S})= \left( \frac{\widehat{Y}^{1,p}}{\widehat{Y}^{0,p}}, \frac{\widehat{Y}^1}{\widehat{Y}^0} \right)$$
and 
\begin{equation}\label{p13}
x+\hvp^1(x) \sint \widehat{\mathcal{S}}_t:=x+ \int^t_0 \hvp^{1,c}_u(x) d\hS_u+ \sum_{0 \leq u <t} \Delta\hvp^1_u(x)(\hS_t - \hS^p_u)+ \sum_{0 < u \leq t} \Delta_+ \hvp^1_u(x) (\hS_t - \hS_u).
\end{equation}
This implies (after choosing a suitable version of $\hvp^1 (x)$) that
\begin{align}
\{d\hvp^{1,c}(x)>0\}&\subseteq \{\hS=S\},\notag&\{d\hvp^{1,c}(x)<0\}&\subseteq \{\hS=(1-\lambda)S\},\notag\\
\{\Delta \hvp^{1}(x)>0\}&\subseteq \{\hS^p=S_-\},\notag&\{\Delta \hvp^{1}(x)<0\}&\subseteq \{\hS^p=(1-\lambda)S_-\},\notag\\
\{\Delta_+ \hvp^{1}(x)>0\}&\subseteq \{\hS=S\},&\{\Delta_+ \hvp^{1}(x)<0\}&\subseteq \{\hS=(1-\lambda)S\}.\label{mt2:eq2}
\end{align}
\et
\vskip10pt
For \emph{any} sandwiched supermartingale deflator $\mathcal{Y}=(Y^p, Y)$, with the associated price process $\widetilde{\mathcal{S}}=(\widetilde{S}^p, \widetilde{S}) =(\frac{Y^{1,p}}{Y^{0,p}}, \frac{Y^1}{Y^0})$, and \emph{any} trading strategy $\varphi\in\cA(x)$ we have for the liquidation value $V^{liq}(\vp)$ defined in \eqref{eq:lv} that
\begin{equation}\label{D9}
V^{liq}_t(\vp)\leq x + \int^t_0 \varphi^{1,c}_u d \widetilde{S}_u + \sum_{0 \leq u < t} \Delta\varphi^1_u(\widetilde{S}_t-\widetilde{S}^p_u) + \sum_{0 <u \leq t} \Delta_+\varphi^1_u (\widetilde{S}_t - \widetilde{S}_u)=: x + \varphi^1 \sint \widetilde{\mathcal{S}}_t.
\end{equation}
Indeed, the usual argument applies that a self-financing trading for any price process $\widetilde{\mathcal{S}}=(\widetilde{S}^p, \widetilde{S})$ taking values in the bid-ask spread and without transaction costs is at least as favourable than trading for $S$ with transaction costs. The relations \eqref{mt2:eq1} and \eqref{mt2:eq2} illustrate that the optimal strategy $\hvp=(\hvp^0, \hvp^1)$ only trades when $\widehat{\mathcal{S}}=(\widehat{S}^p, \widehat{S})$ assumes the least favourable position in the bid-ask spread.

Let us now come to the third question posed at the end of section 2. We shall state in Theorem \ref{mt3} that the sandwiched strong supermartingale deflator $\widehat{\mathcal{S}}=(\widehat{S}^p, \widehat{S})$ may be viewed as a frictionless shadow price if one is ready to have a more liberal concept than Def. \ref{def:shadow} above.

Recall once more that the basic message of the concept of a shadow price $\widehat{\mathcal{S}}$ is that a strategy $\varphi$ which is trading in this process without transaction costs cannot do better (w.r.~to expected utility) than the above optimiser $\hvp$ by trading on $S$ under transaction costs $\lambda$. For this strategy $\hvp$ we have established in \eqref{p13} that trading at prices $\widehat{\mathcal{S}}$ without transaction costs or trading in $S$ under transaction costs $\lambda$ amounts to the same thing. These two facts can be interpreted as the statement that $\widehat{\mathcal{S}}$ serves as shadow price.

Let us be more precise which class of processes $\varphi^1=(\varphi^1_t)_{0 \leq t \leq T}$ we allow to compete against $\hvp^1=(\hvp^1_t)_{0 \leq t \leq T}$ in \eqref{p13}. First of all, we require that $\varphi^1$ is predictable and of finite variation so that the stochastic integral \eqref{p13} is well-defined. Secondly, we allow $\varphi^1$ to trade without transaction costs in the process $\widehat{\mathcal{S}}$ which is precisely reflected by \eqref{p13}. 
More formally, we may associate to the process $\varphi^1$ of holdings in stock the process $\varphi^0$ of holdings in bond by equating $\varphi^0_t + \varphi^1_t \widehat{S}_t$ to the right hand side of \eqref{D9}, i.e.
\begin{align}\label{D1}
\varphi^0_t&:=x+\vp^1\sint \widehat{\mathcal{S}}_t-\varphi^1_t \widehat{S}_t, \quad 0 \leq t \leq T.
\end{align}
One may check that $\varphi^0$ is a predictable finite variation process and also satisfies $\varphi^0_{t-} = x+\vp^1\sint \widehat{\mathcal{S}}_{t-} - \varphi^1_{t-} \widehat{S}^p_{t-}.$ The process $\varphi=(\varphi^0_t, \varphi^1_t)_{0 \leq t \leq T}$ then models the holdings in bond and stock induced by the process $\varphi^1$ considered as trading strategy without transaction costs on  $\widehat{\mathcal{S}}$.

We now come to the third requirement on $\varphi$ , namely the delicate point of {\it admissibility}. The admissibility condition which naturally corresponds to the notion of frictionless trading is $\varphi^0_t + \varphi^1_t \widehat{S}_t \geq 0$, for all $0 \leq t \leq T$. This notion was used in Definition \ref{def:SI:1}. However, it is too wide in order to allow for a meaningful theorem in the present general context, even if we restrict to continuous processes $\widehat{S}$. This is shown by a counterexample in \cite{CS14} (compare also \cite{BCKMK11} and \cite{CMKS14} for examples in discrete time).
Instead, we have to be more modest and define the admissibility in terms of the original process $S$ under transaction costs $\lambda$. We therefore impose the requirement that the liquidation value $V^{liq}_t(\vp)$ as defined in \eqref{eq:lv} has to remain non-negative, i.e.
\begin{align}\label{2.32}
V_t^{liq}(\vp):={}&\vp^0_t+(\vp^1_t)^+(1-\lambda) S_t-(\vp^1_t)^-S_t\geq 0.
\end{align}
Summing up in economic terms: we compare the process $\hvp$ in Theorem \ref{mt2} with all competitors $\varphi$  which are self-financing w.r.~to $\widehat{\mathcal{S}}$ (without transaction costs) and such that their liquidation value $V^{liq}_t(\vp)$ under transaction costs $\lambda$ remains non-negative \eqref{2.32}.

\bt\label{mt3}
Under the assumptions of Theorem \ref{mt2} let $\varphi = (\varphi^0_t, \varphi^1_t)_{0 \leq t \leq T}$ be a predictable process of finite variation which is self-financing for $\widehat{\mathcal{S}}$ without transaction costs, i.e.~satisfies \eqref{D1} and is admissible in the sense of \eqref{2.32}.
Then the process 
\begin{align}\label{P1}
\widehat{Y}^0_t \varphi^0_t + \widehat{Y}^1_t \varphi^1_t=\widehat{Y}^0_t\big(x+\vp^1\sint\widehat{\cS}_t\big), \quad 0 \leq t \leq T,
\end{align}
is a non-negative supermartingale and
\begin{align}
E\big[U\big(x+\varphi^1\sint \widehat{\cS}_T\big)\big] \leq E\big[U\big(x+\hvp^1 \sint \widehat{\cS}_T \big) \big] = E\big[U\big(\hvp^0_T + \hvp^1_T \widehat{S}_T \big)\big] = E\big[U\big(V_{T}^{liq}(\hvp)\big)\big].\label{P1.2}
\end{align}
\et

We finish this section by formulating some positive results in the context of Theorem \ref{mainthm}. As in \cite{CMKS14}, we have \emph{under the assumptions of Theorem \ref{mainthm}}, the following two results clarifying the connection between dual minimisers and shadow prices in the sense of Def. \ref{def:shadow}. The first result is motivated by the work of Cvitanic and Karatzas \cite{CK96} shows that the following ``folklore'' is also true in the present framework of general  c\`adl\`ag processes $S$: if there is no ``loss of mass'' in the dual problem under transaction costs, then its minimiser corresponds to a shadow price in the usual sense.

\begin{prop}\label{lem:martingale}
If there is a minimiser $(\widehat{Y}^0,\widehat{Y}^1)\in\cB\big(\widehat{y}(x)\big)$ of the dual problem \eqref{DP3} which is a local martingale, then $\hS:=\widehat{Y}^1/\widehat{Y}^0$ is a shadow price in the sense of Def. \ref{def:shadow}.
\end{prop}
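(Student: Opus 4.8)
\textbf{Proof proposal for Proposition \ref{lem:martingale}.}

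The plan is to verify the two defining properties of a shadow price in Definition \ref{def:shadow}. Property 1) is immediate: since $(\widehat{Y}^0,\widehat{Y}^1)\in\cB(\widehat{y}(x))$, the ratio $\widehat{S}=\widehat{Y}^1/\widehat{Y}^0$ takes values in the bid-ask spread $[(1-\lambda)S,S]$ by definition of $\cB$. The content is in property 2): one must show that the frictionless optimiser $\tvp$ for $\widehat{S}$ exists and equals the optimiser $\hvp$ under transaction costs. First I would record that, since $\widehat{Y}=(\widehat{Y}^0,\widehat{Y}^1)$ is assumed to be a local martingale with $\widehat{S}=\widehat{Y}^1/\widehat{Y}^0$, the deflator $\widehat{Y}^0$ is an equivalent local martingale deflator for the (now genuine semimartingale) price process $\widehat{S}$; hence $\widehat{S}$ satisfies $(NUPBR)$ and the frictionless duality of \cite{KS99}, in the deflator formulation of Kardaras \cite{K10}, applies to the frictionless market $\widehat{S}$.

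The key step is to show that the two markets have the same value function at $x$ and that $\hvp$ is feasible and optimal in the frictionless market for $\widehat{S}$. By \eqref{sec1:eq1} (applied with $\widetilde{S}=\widehat{S}$, which is admissible since $\cA(x)\subseteq\cA(x;\widehat{S})$) we have $u(x)\le u(x;\widehat{S})$. For the reverse inequality I would use $\widehat{Y}^0$ as a dual variable in the frictionless problem: by the local-martingale (hence supermartingale) property of $\widehat{Y}^0(\varphi^0+\varphi^1\widehat{S})=\widehat{Y}^0\varphi^0+\widehat{Y}^1\varphi^1$ for $\varphi\in\cA(x;\widehat{S})$, together with the Fenchel inequality exactly as in the displayed computation preceding \eqref{DP1}, one gets $u(x;\widehat{S})\le E[V(\widehat{y}(x)\widehat{Y}^0_T)]+x\widehat{y}(x)$. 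But $\widehat{Y}^0_T=\widehat{h}(\widehat{y}(x))$ is the dual optimiser by \eqref{martcond}, so the right-hand side equals $v(\widehat{y}(x))+x\widehat{y}(x)=u(x)$ by part 1) of Theorem \ref{mainthm}. Hence $u(x;\widehat{S})=u(x)$, and all inequalities in the chain are equalities. Now Theorem \ref{mainthm}, part 3), gives $\widehat{g}(x)=(U')^{-1}(\widehat{h}(\widehat{y}(x)))$ and the martingale identity $\widehat{Y}^0\hvp^0+\widehat{Y}^1\hvp^1$ is a true martingale, so $\widehat{Y}^0_T V^{liq}_T(\hvp)=\widehat{Y}^0_T\hvp^0_T=\widehat{g}(x)\widehat{h}(\widehat{y}(x))$ has expectation $x\widehat{y}(x)$; combined with equality in Fenchel this forces $x+\hvp^1\sint\widehat{S}_T=V^{liq}_T(\hvp)=\widehat{g}(x)$ a.s., i.e.\ $\hvp$, viewed as a frictionless strategy for $\widehat{S}$, attains $u(x;\widehat{S})$. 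By strict concavity of $U$ the frictionless optimiser is unique, so $\tvp=\hvp$, which is property 2).

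Two technical points will need care and are, I expect, the main obstacles. First, one must check that $\hvp$ really is \emph{self-financing and admissible as a frictionless strategy for $\widehat{S}$}, i.e.\ that the frictionless self-financing relation $\hvp^0_t=x+\hvp^1\sint\widehat{S}_t-\hvp^1_t\widehat{S}_t$ holds and that $x+\hvp^1\sint\widehat{S}_t\ge0$ for all $t$. The non-negativity follows because $\widehat{Y}^0(x+\hvp^1\sint\widehat{S})=\widehat{Y}^0\hvp^0+\widehat{Y}^1\hvp^1\ge0$ (Theorem \ref{mainthm}, part 3), is a non-negative martingale) and $\widehat{Y}^0>0$; the self-financing identity follows from integration by parts \eqref{SI:IP} together with the fact that, at the (predictable) jump times of $\hvp^1$, the trades are carried out at exactly the bid or ask price where $\widehat{S}$ (being now c\`adl\`ag, with $\widehat{S}^p=\widehat{S}_-$ in the local-martingale case) agrees with the transaction-cost price, so that the transaction-cost self-financing constraint \eqref{sfc} holds with equality. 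This last claim in turn rests on the inclusions \eqref{mt2:eq2} from Theorem \ref{mt2}, which, since $\widehat{Y}$ is a local martingale and hence $\widehat{S}^p=\widehat{S}_-$, collapse to the classical relations \eqref{2.10}. Second, one should make sure the Fenchel-equality argument is applied on the event where it matters and that integrability is not an issue — but this is routine given the finiteness $u(x)<\infty$ and $AE(U)<1$ already assumed in Theorem \ref{mainthm}. I would therefore structure the write-up as: (i) property 1) is trivial; (ii) $u(x)=u(x;\widehat{S})$ via the two-sided estimate above; (iii) $\hvp$ is frictionless-admissible for $\widehat{S}$ and attains $u(x;\widehat{S})$; (iv) uniqueness gives $\tvp=\hvp$.
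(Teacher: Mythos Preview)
Your proposal is correct and follows essentially the same route as the paper: both arguments (i) read off that $\widehat S\in[(1-\lambda)S,S]$ from membership in $\cB$, (ii) use that $\widehat Y^0$ is a local martingale deflator for $\widehat S$ to sandwich $u(x)\le u(x;\widehat S)\le E[V(\widehat Y^0_T)]+x\widehat y(x)=v(\widehat y(x))+x\widehat y(x)=u(x)$, and (iii) invoke Theorem~\ref{mainthm} part~3) together with Theorem~\ref{mt2} (which in the local-martingale case collapses $\widehat S^p$ to $\widehat S_-$ and the sandwiched integral to the ordinary one) to conclude that $\hvp$ is frictionless self-financing for $\widehat S$ and attains the optimum, whence uniqueness yields $\tvp=\hvp$. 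The paper phrases step~(iii) slightly differently, by directly quoting the frictionless duality Theorem~2.2 of \cite{KS99} once $\widehat y(x;\widehat S)=\widehat y(x)$ is established, but the substance is identical.

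One small notational slip: since $(\widehat Y^0,\widehat Y^1)\in\cB(\widehat y(x))$ already has $\widehat Y^0_0=\widehat y(x)$, the Fenchel bound should read $u(x;\widehat S)\le E[V(\widehat Y^0_T)]+x\widehat y(x)$, not $E[V(\widehat y(x)\widehat Y^0_T)]+x\widehat y(x)$; you use the correct value $v(\widehat y(x))$ in the next line, so this is purely cosmetic.
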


Conversely, the following result shows that \emph{if} a shadow price exists as above and satisfies $(NUPBR)$, it is necessarily derived from a dual minimiser. Note that by Proposition 4.19 in \cite{KK07} the existence of an optimal strategy to the frictionless utility maximisation problem \eqref{p1} for $\hS$ essentially implies that $\hS$ satisfies $(NUPBR)$.

\begin{prop}\label{lem:connection}
If a shadow price $\hS$ in the sense of Def. \ref{def:shadow} exists and satisfies $(NUPBR)$, it is given by $\hS=\widehat{Y}^1/\widehat{Y}^0$ for a minimiser $(\widehat{Y}^0,\widehat{Y}^1)\in\cB\big(\widehat{y}(x)\big)$ of the dual problem \eqref{DP3}.
\end{prop}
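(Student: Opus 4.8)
The plan is to prove Proposition~\ref{lem:connection} by showing that the supermartingale deflator associated to a shadow price $\hS$ satisfying $(NUPBR)$ is in fact a dual minimiser. First I would invoke the standard frictionless duality theory (Kramkov--Schachermayer, as applied in \cite{KS99} and in the $(NUPBR)$-setting via \cite{K10, KK07}) for the price process $\hS$: since $\hS$ satisfies $(NUPBR)$ there exists an equivalent supermartingale deflator $Y^0$ for $\hS$, and the frictionless dual problem for $\hS$ admits a minimiser $\hY^0$ with $\hY^0_0 = \widehat y(x)$ such that $\hY^0(x + \vp^1\sint\hS)$ is a supermartingale for all admissible $\vp^1$, and moreover $\hY^0_T = V'\big((U')^{-1}\big)$-type first-order conditions hold: concretely, if $\tvp$ denotes the frictionless optimiser for $\hS$, then $\hY^0_T = U'\big(x + \tvp^1\sint\hS_T\big)/\,$(normalisation) and $E[\hY^0_T(x+\tvp^1\sint\hS_T)] = x\widehat y(x)$.

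The key step is then to set $\hY^1 := \hY^0\hS$ and check that $(\hY^0,\hY^1) \in \cB\big(\widehat y(x)\big)$, i.e.\ that it is an optional strong supermartingale deflator in the sense of \eqref{def:B}. Boundedness in the bid-ask spread is immediate from property~1) of Def.~\ref{def:shadow}. The content is that $\hY^0\vp^0 + \hY^1\vp^1 = \hY^0(\vp^0 + \vp^1\hS)$ is a non-negative optional strong supermartingale for every $\vp = (\vp^0,\vp^1)\in\cA(1)$ under transaction costs. Here one uses the inequality \eqref{D9} (in its one-deflator incarnation, i.e.\ with $\widehat S^p$ replaced by $\hS_-$, which is legitimate since $\hS$ is a semimartingale and hence c\`adl\`ag): for $\vp\in\cA(1)$ one has $\vp^0_t + \vp^1_t\hS_t \le 1 + \vp^1\sint\hS_t$, and the right-hand side is a non-negative (by admissibility of $\vp$ for $S$ under transaction costs, which forces the liquidation value and hence, via $\hS \in [(1-\lambda)S,S]$, the frictionless value under $\hS$ to stay $\ge 0$) supermartingale under $\hY^0$. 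A non-negative process dominated by a c\`adl\`ag supermartingale and agreeing with the frictionless gains process at the relevant times is an optional strong supermartingale; this is exactly the bridge between the frictionless deflator and $\cB$.

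Finally I would verify that $(\hY^0,\hY^1)$ is actually the \emph{minimiser} of the dual problem \eqref{DP3} over all of $\cB\big(\widehat y(x)\big)$, not merely over the frictionless deflators for $\hS$. This follows from the abstract duality already in hand: by Theorem~\ref{mainthm} the dual minimiser $\widehat h(\widehat y(x)) \in \cD(\widehat y(x))$ is unique and characterised by $\widehat h = U'(\widehat g(x))$ where $\widehat g(x) = V^{liq}_T(\hvp)$ is the primal optimiser under transaction costs. Since $\hS$ is a shadow price, $\hvp$ coincides with the frictionless optimiser $\tvp$ for $\hS$ and $V^{liq}_T(\hvp) = x + \tvp^1\sint\hS_T$ (the liquidation value is attained with no transaction costs being paid, because trading happens only at the favourable end of the spread, cf.\ \eqref{2.10}); hence $\hY^0_T = U'\big(x+\tvp^1\sint\hS_T\big) = U'\big(\widehat g(x)\big) = \widehat h\big(\widehat y(x)\big)$, so $\hY^0_T$ realises the dual optimal value and $(\hY^0,\hY^1)$ is the dual minimiser. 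The main obstacle I anticipate is the careful bookkeeping in the middle step: ensuring that admissibility of $\vp$ \emph{under transaction costs for $S$} genuinely delivers non-negativity of the frictionless wealth process $1 + \vp^1\sint\hS$ (using $\hS \in [(1-\lambda)S,S]$ together with \eqref{eq:lv}), and that the supermartingale property transfers correctly across the jumps of $\vp^1$ — precisely the point where \eqref{D9} and the identity $\vp^0_t + \vp^1_t\hS_t \le 1 + \vp^1\sint\hS_t$ must be applied at stopping times rather than fixed times, invoking optional sampling for the c\`adl\`ag supermartingale $\hY^0(1+\vp^1\sint\hS)$.
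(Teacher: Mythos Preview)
Your overall strategy coincides with the paper's: pass to the frictionless problem for $\hS$, take its dual optimiser $\hY$, pair it with $\hY\hS$, and identify this as a minimiser in $\cB(\widehat y(x))$. Two points need attention.

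First, you assert from the outset that the frictionless dual optimiser satisfies $\hY^0_0=\widehat y(x)$. A priori it starts at $\widehat y(x;\hS)$, and nothing in your first paragraph forces $\widehat y(x;\hS)=\widehat y(x)$. The paper closes this gap by first proving $v(y)\le v(y;\hS)$ for all $y$ (from $\cA(x)\subseteq\cA(x;\hS)$, so every frictionless deflator for $\hS$ lies in $\cB(y)$) and then running the conjugacy chain
\[
u(x)=v\big(\widehat y(x)\big)+x\widehat y(x)\le v\big(\widehat y(x;\hS)\big)+x\widehat y(x;\hS)\le v\big(\widehat y(x;\hS);\hS\big)+x\widehat y(x;\hS)=u(x;\hS)=u(x),
\]
the last equality being the shadow-price property. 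Equality throughout then yields $\widehat y(x)=\widehat y(x;\hS)$. Your step~3 contains the right ingredients to recover this a posteriori, but as presented the logic is circular: you use $\hY^0_0=\widehat y(x)$ before it is established.

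Second, and more seriously, the implication ``a non-negative process dominated by a c\`adl\`ag supermartingale is an optional strong supermartingale'' is false in general: take a deterministic process that first drops and then rises back, dominated by a constant. So the bridge you propose between $\hY^0(1+\vp^1\sint\hS)$ and $\hY^0(\vp^0+\vp^1\hS)$ does not hold by domination alone. What makes it work is the \emph{structure} of the gap: for $(\vp^0,\vp^1)\in\cA(1)$ one has $\vp^0_t+\vp^1_t\hS_t=(1+\vp^1\sint\hS_t)-C_t$ with $C$ non-decreasing predictable and $C_0=0$ (the cumulative cost of trading at $S$ rather than at $\hS$). The paper records this simply as ``$\cA(x)\subseteq\cA(x;\hS)$ implies $(Y,Y\hS)\in\cB(y)$ for all $Y\in\cY(y;\hS)$''; the underlying argument uses integration by parts on $Y\cdot C$ together with non-negativity of $Y(\vp^0+\vp^1\hS)$, not a bare domination inequality. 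Your closing paragraph correctly anticipates that this is the delicate step, but the mechanism you name for resolving it is the wrong one.
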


Similarly as in the frictionless case the duality relations above simplify for logarithmic utility.

\begin{prop}\label{prop:log}
For $U(x)=\log(x)$, we have under the assumptions of Theorem \ref{mainthm} that the solutions $\hvp=(\hvp^0_t,\hvp^1_t)_{0\leq t\leq T}$ to the primal problem
\begin{equation*}
\textstyle
E\big[\log\big(V_T^{liq}(\vp)\big)\big]\to\max!, \qquad \vp\in\cA(x),
\end{equation*}
and $\widehat{Y} = (\widehat{Y}^0_t, \widehat{Y}^1_t)_{0\leq t\leq T}$ to the dual problem
$$E[-\log(Y^0_T)-1] \to \min!, \quad Y=(Y^0, Y^1)\in \cB\big(\hy (x) \big),$$
for $\hy (x) = u'(x)=\frac{1}{x}$ exist and satisfy
\begin{align*}
\big(\hY^0,\hY^1\big)=\left(\frac{1}{\hvp^0_t + \hvp^1_t\hS_t},\frac{\widehat{S}_t}{\hvp^0_t + \hvp^1_t\widehat{S}_t}\right)_{0\leq t\leq T}
\end{align*}
where $\hS=\Big(\frac{\widehat{Y}^1_t}{\hY^0_t}\Big)_{0\leq t\leq T}$ can be characterised by \eqref{mt2:eq2}.
\end{prop}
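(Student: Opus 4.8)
The plan is to specialise the general duality machinery of Theorems \ref{mainthm} and \ref{mt2} to $U(x)=\log(x)$ and to exploit the well-known self-dual structure of the logarithm, namely $V(y)=-\log(y)-1$ and $(U')^{-1}(y)=1/y$. The logic is to first identify the dual optimiser in terms of the primal one via the marginal-utility relations of Theorem \ref{mainthm}.3), then to verify directly that the candidate pair is a sandwiched supermartingale deflator, and finally to read off the characterisation of $\hS$ from \eqref{mt2:eq2}.

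First I would record that for $U=\log$ the conjugate is $V(y)=\sup_{x>0}\{\log x-xy\}=-\log y-1$, so that the abstract dual problem \eqref{DP3} becomes exactly $E[-\log(Y^0_T)-1]\to\min!$ over $\cB(\hy(x))$, and that the Inada conditions and $AE(\log)=0<1$ hold, so Theorem \ref{mainthm} applies with $\hy(x)=u'(x)=1/x$. Next, Theorem \ref{mainthm}.3) gives that the primal and dual optimisers satisfy $\hg(x)=(U')^{-1}(\hh(\hy(x)))=1/\hh(\hy(x))$ and $\hh(\hy(x))=U'(\hg(x))=1/\hg(x)$, where $\hg(x)=V^{liq}_T(\hvp)=\hvp^0_T+\hvp^1_T\hS_T$ (using $\hvp^1_T=0$, or the liquidation identity). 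Moreover Theorem \ref{mainthm}.3) asserts that $\hY^0\hvp^0+\hY^1\hvp^1=\hY^0(x+\hvp^1\sint\widehat{\cS})$ is a true martingale, not merely a supermartingale. The key normalisation step is then: since this process is a martingale starting at $\hY^0_0\cdot x = \hy(x)\cdot x=1$ and ending at $\hY^0_T(\hvp^0_T+\hvp^1_T\hS_T)=\hh(\hy(x))\cdot\hg(x)=1$, and since $\hY^0(\hvp^0+\hvp^1\hS)$ is a non-negative martingale, I would argue it is identically equal to $1$. (A non-negative martingale which equals a constant at a fixed terminal time $T$ must equal that constant at all earlier times: its value at time $t$ is the conditional expectation of the terminal value.) Wait — this requires the process to be constant, which holds because the terminal value $\hg(x)\hh(\hy(x))=1$ is deterministic; hence $\hY^0_t(\hvp^0_t+\hvp^1_t\hS_t)=E[1\mid\cF_t]=1$ for all $t$, and similarly the predictable version gives $\hY^{0,p}_t(\hvp^0_{t-}+\hvp^1_{t-}\hS^p_t)=1$ using the identity $\hvp^0_{t-}=x+\hvp^1\sint\widehat{\cS}_{t-}-\hvp^1_{t-}\hS^p_{t-}$ from \eqref{D1}. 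This pins down $\hY^0_t=1/(\hvp^0_t+\hvp^1_t\hS_t)$, and then $\hY^1_t=\hY^0_t\hS_t=\hS_t/(\hvp^0_t+\hvp^1_t\hS_t)$, which is the claimed formula. The same computation with left limits identifies $\hY^{0,p}$ and $\hY^{1,p}$.

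To finish I would note that the existence of the optimisers $\hvp$ and $\hY$, their membership in $\cA(x)$ and $\cB(\hy(x))$ respectively, and the relations \eqref{12.1}--\eqref{12.2} are all immediate instances of Theorems \ref{mainthm} and \ref{mt2}; the characterisation of $\hS$ through the inclusions \eqref{mt2:eq2} is literally the content of \eqref{mt2:eq2} once we know $\hS=\hY^1/\hY^0$ comes from the dual optimiser. One should double-check that the denominator $\hvp^0_t+\hvp^1_t\hS_t$ is strictly positive: this follows because $\hY^0>0$ (strict positivity of the deflator, inherited from the strictly positive approximating consistent price systems and the fact that $\log$ forces an interior optimiser away from $0$), so $1/\hY^0_t=\hvp^0_t+\hvp^1_t\hS_t>0$.

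The main obstacle I anticipate is the martingale-to-constant argument and, relatedly, making the identification of $\hY^{0,p}$ rigorous. On the optional side the martingale property is handed to us by Theorem \ref{mainthm}.3), but for the predictable version $\hY^{0,p}\hvp^0_-+\hY^{1,p}\hvp^1_-$ one has to check it equals $\hY^{0,p}(x+\hvp^1\sint\widehat{\cS}_-)$ and that this is again the constant $1$; this uses the sandwiched-integration-by-parts bookkeeping of \eqref{D7}--\eqref{D9} at predictable times together with the already-established optional identity. A second, more technical point is legitimising the step ``a non-negative martingale with deterministic terminal value is constant'' in the possibly non-c\`adl\`ag setting — but since $\hY^0(\hvp^0+\hvp^1\hS)$ is explicitly asserted to be a c\`adl\`ag martingale in Theorem \ref{mainthm}.3), the classical argument applies verbatim. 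Everything else is bookkeeping with the explicit form of $V$ and $(U')^{-1}$ for the logarithm.
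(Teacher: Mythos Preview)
Your proposal is correct and follows essentially the same route as the paper: identify $\hY^0_T=1/(\hvp^0_T+\hvp^1_T\hS_T)$ from the marginal-utility relation in Theorem \ref{mainthm}.3), observe that the c\`adl\`ag martingale $\hY^0\hvp^0+\hY^1\hvp^1$ therefore has deterministic terminal value $1$ and hence is identically $1$, and read off the formula for $(\hY^0,\hY^1)$. Your extra discussion of $\hY^{0,p}$ and strict positivity is harmless but not needed for the statement as formulated.
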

\bp
Since $V_{T}^{liq}(\hvp)=\hvp^0_T+\hvp^1_T\hS_T$ and $U'(x)=\frac{1}{x}$, we have that $\hY^0_T=\frac{1}{\hvp^0_T + \hvp^1_T\hS_T}$ and 
$$\widehat{Y}^0_T\hvp^0_T+ \widehat{Y}^1_T\hvp^1_T=\widehat{Y}^0_T(\hvp^0_T+ \hvp^1_T\widehat{S}^1_T)=1$$
by part 3) of Theorem \ref{mainthm}. Therefore the martingale $\widehat{Y}^0\hvp^0+ \widehat{Y}^1\hvp^1=(\widehat{Y}^0_t\hvp^0_t+ \widehat{Y}^1_t\hvp^1_t)_{0\leq t\leq T}$ is constant and equal to $1$, which implies that $\big(\hY^0,\hY^1\big)=\left(\frac{1}{\hvp^0_t + \hvp^1_t\hS_t},\frac{\widehat{S}_t}{\hvp^0_t + \hvp^1_t\widehat{S}_t}\right)_{0\leq t\leq T}$.
\ep
\section{Examples}\label{sec:ex}
\subsection{Truly l\`adl\`ag primal and dual optimisers}\label{sec:ex1}
We give an example of a price process $S=(S_t) _{0 \le t \le 1}$ in continuous time  such that for the problem of maximising expected logarithmic utility $U(x)=\log(x)$ the following holds for a fixed and sufficiently small $\lambda\in(0,1)$.
\bi
\item[{\bf 1)}] $S$ satisfies $(NFLVR)$ and therefore also $(CPS^{\lambda'})$ for all levels $\lambda'\in(0,1)$ of transaction costs.
\item[\textbf{2)}] The optimal trading strategy $\hvp=(\hvp^0,\hvp^1)\in\cA(1)$ under transaction costs exists and is truly l\`adl\`ag. This means that it is neither c\`adl\`ag nor    c\`agl\`ad. 
\item[\textbf{3)}]The candidate shadow price $\hS:=\frac{\hY^1}{\hY^0}$ given by the ratio of both components of the dual optimiser $\hY=(\hY^0,\hY^1)$ is truly l\`adl\`ag.
\ei
In particular, 3) implies that $\hS$ cannot be a semimartingale and therefore
\bi
\item[\textbf{4)}]No shadow price exists (in the sense of Def.\ref{def:shadow}).
 \ei
 
Note, however, that a shadow price in the more general ``sandwiched sense'' exists as made more explicit in Theorem \ref{mt3}.

For the construction of the example, let $\xi$ and $\eta$ be two random variables such that
\begin{align*}
P[\xi=3] &=\textstyle 1 - P[\xi=\frac{1}{2}] = \frac{5}{6} = p,\\
\\
P[\eta=2] &= (1-\ve),\\
\textstyle P[\eta=\frac{1}{n}] &= \ve 2^{-n}, \quad n\geq 1,
\end{align*}
where $\ve\in (0,\frac{1}{3})$. Let $\tau$ be an exponentially distributed random variable normalised by $E[\tau]=1$. We assume that $\xi,\, \eta$ and $\tau$ are independent of each other.
The ask price of the risky asset is given by
\be\label{ex1:pp}
S_t:=(1+(\xi-1)\mathbbm{1} _{[\frac{1}{2},1]}(t))\big(1+a_t(\eta-1)\mathbbm{1}_{[(\tau+\frac{1}{2})\wedge 1,1]}(t)\big)\quad\text{for  $t\in[0,1]$},
\ee
where $a_t=\frac{1}{3}-\frac{1}{3}(t-\frac{1}{2})$ is a linearly decreasing function and $\sigma=(\tau+\frac{1}{2})\wedge 1$. As filtration $\cF=(\F _t) _{0\le t \le 1}$ we take the one generated by $S=(S_t)_{0\le t \le 1}$ made right continuous and complete.

In prose the behaviour of the ask price $S$ is described as follows. The process starts at $1$ and remains constant until it jumps by $\Delta S_{\frac{1}{2}} = (\xi-1)$ at time $\frac{1}{2}$. After time $\frac{1}{2}$ the process jumps again by $\Delta S_{\sigma}=(1+(\xi-1)\mathbbm{1} _{\llbracket\frac{1}{2},1\rrbracket})\big(1+a_\sigma(\eta-1)\big)$ at the stopping time $\sigma$. 

Let us motivate intuitively why $S$ enjoys the above properties 1) - 4). We first concentrate on $t \in [\frac{1}{2},1]$ where the definition of $\eta$ plays a crucial role. There is an overwhelming probability for $\eta$ to assume the value $2$ which causes a positive jump of $S$ at time $\sigma$. Hence the log utility maximiser wants to hold many of these promising stocks when $\sigma$ happens. What prevails her from buying too many stocks is the (small but) strictly positive probability that $\eta$ takes values less than $1$, which results in a negative jump of $S$ at time $\sigma$. Similarly as in (\cite{KS99}, Example 5.1') the definition of $\eta$ is done in a way that at time $\sigma$ the ``worst case"', i.e. $\{\eta=0\}$, does not happen with positive probability, while the ``approximately worst cases'' $\{\eta=\frac{1}{n}\}$ happen with strictly positive probability. The explicit calculations in Appendix \ref{A:Ex1} below show that, similarly as in (\cite{KS99}, Ex. 5.1'), the optimal strategy for the log utility maximiser consists in holding precisely as many stocks such that, if $S$ happens to jump at time $t$ and $\eta$ {\it would} assume the value $\eta=0$ (which $\eta$ does {\it not} with positive probability) the resulting liquidation value $V^{liq}_t(\vp)$ {\it would be} precisely $0$ (compare Appendix \ref{A:Ex1} below) which would result in $U(0)=-\infty$. Spelling out the corresponding equation (see Proposition \ref{A:Ex1:prop1}) results in
$$\hvp^1_t=\frac{\hvp^0_{t-}+\hvp^1_{t-}S_{t-}}{S_{t-}}\frac{1}{\lambda+(1-\lambda) a_t}, \qquad (t,\om) \in\textstyle \rrbracket\frac{1}{2}, \sigma  \rrbracket$$
which the log utility maximiser will follow for $t \in (\frac{1}{2}, \sigma]$. As $(a_t)_{\frac{1}{2} \leq t \leq 1}$ was chosen to be strictly decreasing we obtain
$$d\hvp^1_t > 0, \qquad t \in \textstyle( \frac{1}{2}, \sigma].$$
Speaking economically, the log utility maximiser increases her holdings in stock during the entire time interval $( \frac{1}{2}, \sigma]$. Hence a candidate $\widehat{S}=(\widehat{S}_t)_{0 \leq t \leq 1}$ for a shadow price process has to equal the ask price $S_t$ for $t \in (\frac{1}{2}, \sigma).$

Let us also discuss the optimal strategy $\hvp_t$ for $0 \leq t \leq \frac{1}{2}.$ The random variable $\xi$ is designed in such a way that the resulting jump $\Delta S_{\frac{1}{2}}$ of $S$ at time $t=\frac{1}{2}$ has sufficiently positive expectation so that the log utility maximiser wants to be long in stock at time $t= \frac{1}{2}$, i.e.~$\hvp^1_{\frac{1}{2}} > 0$ (compare Proposition \ref{A:Ex1:prop1}). As the initial endowment $\hvp_{0}=(1,0)$ has no holdings in stock, the log utility maximiser will purchase the stock at some time during $[0, \frac{1}{2}).$ It does not matter when, as $S$ is constant during that time interval. As a consequence, a candidate $\widehat{S}$ for a shadow price process must equal the ask price $S$ during the entire time interval $[0, \frac{1}{2})$, i.e.~$S_t = \widehat{S}_t,$ for $t \in [0, \frac{1}{2}).$

Finally let us have a look what happens to the log utility maximiser at time $t=\frac{1}{2}.$ If $\Delta S_{\frac{1}{2}} < 0$ (which happens with positive probability as $P[\xi= \frac{1}{2}] = \frac{1}{6} >0$) she immediately has to reduce her holdings in stock, i.e.~at time $t=\frac{1}{2}.$ Otherwise there is the danger that the totally inaccessible stopping time $\sigma$ will happen arbitrarily shortly after $t=\frac{1}{2}.$ If, in addition, $\eta$ assumes the value $\frac{1}{n}$, for large enough $n$, this would result in a negative liquidation value $V_{T}^{liq}(\hvp)$ with positive probability which is forbidden. Hence, conditionally on the set $\{ \xi= \frac{1}{2}\}$, each candidate $\widehat{S}$ for a shadow price must equal the bid price $(1-\lambda)S$ at time $t=\frac{1}{2}$,~i.e. 
$$\widehat{S}_{\frac{1}{2}}= (1-\lambda)S_{\frac{1}{2}} \quad \mbox{on}\quad  \{\Delta S_{\frac{1}{2}}<0\}.$$
Summing up: On $\{\Delta S_{\frac{1}{2}}<0\}=\{\xi=\frac{1}{2}\}$ a shadow price process $\widehat{S}=(\widehat{S}_t)_{0 \leq t \leq 1}$ necessarily satisfies with positive probability
$$
\widehat{S}_t:=\begin{cases}
S_t  &: 0 \leq t < \frac{1}{2},\\
(1-\lambda) S_t &: t= \frac{1}{2}, \\
S_t &: \frac{1}{2} < t < \sigma,\\
(1-\lambda)S_t &: \sigma \leq t\leq 1.
\end{cases}
$$

In other words, the process $\widehat{S}$ has to be {\it truly  l\`adl\`ag} at $t=\frac{1}{2}.$ In particular, $\widehat{S}$ cannot be a semimartingale and therefore there cannot be a shadow price process in the sense of Definition \ref{def:shadow}. We have thus shown the validity of assertions 1)--4) above.

Let us still have a look at the dual optimiser which can be explicitly calculated (see Proposition \ref{A:Ex1:prop1})
\begin{align*}
\hY=\left(\widehat{Y}^0,\widehat{Y}^1\right)=\left(\frac{1}{\hvp^0 + \hvp^1 \widehat{S}},\frac{\widehat{S}}{\hvp^0 + \hvp^1 \widehat{S}}\right).
\end{align*}
This process is a genuine optional strong supermartingale which displays right jumps
\begin{align*}
\Delta_+ \hY^0_{\frac{1}{2}} &=\hY^{0}_{\frac{1}{2}}\frac{-\lambda}{\lambda+(1-\lambda)a_{\frac{1}{2}}}\label{ex1:do:a}\\
\Delta_+ \hY^1_{\frac{1}{2}}&=\hY^{1}_{\frac{1}{2}}\left(1-\frac{\lambda}{\lambda+(1-\lambda)a_{\frac{1}{2}}}\right).
\end{align*}

The property of having right jumps is in stark contrast to being a (local) martingale which is always c\`adl\`ag.

However, according to Theorem \ref{mt2} we know that there exists an approximating sequence $(Z^n)_{n=1}^\infty$ of $\lambda$-consistent price systems $Z^n=(Z^{0,n}_t, Z^{1,n}_t)_{0 \leq t \leq 1}$ for the dual minimiser $\hY=(\hY^0_t,\hY^1_t)_{0 \leq t \leq 1}$ such that
$$(Z^{0,n}_\tau, Z^{1,n}_\tau)\xrightarrow{P}(\hY^0_\tau,\hY^1_\tau),\quad \text{as $n\to\infty$},$$
for all $[0,1]$-valued stopping times $\tau$. This illustrates nicely how a sequence of c\`adl\`ag processes produces a right jump in the limit and we give such an approximating sequence $(Z^n)_{n=1}^\infty$ of $\lambda$-consistent price systems $Z^n=(Z^{0,n}_t, Z^{1,n}_t)_{0 \leq t \leq 1}$ in Proposition \ref{A:Ex1:propZn}.

The reader who wants to verify the above characteristics may consult the explicit calculations in Appendix \ref{A:Ex1} below.

\subsection{Left limit of limits $\ne$ limit of left limits}\label{sec:ex2}
While the previous example showed the necessity of going beyond the framework of c\`adl\`ag processes we now show that there is indeed no way to avoid the appearance of ``sandwiched processes'' for the dual optimiser in Theorem \ref{mt2}.

For the problem of maximizing logarithmic utility under transaction costs $\lambda \in(0,1)$ with initial endowment $(\varphi^0_0, \varphi^1_0)=(1,0)$, we give an example of a semimartingale price process $S=(S_t)_{0 \leq t \leq 1}$ such that:
\bi
\item[{\bf 1)}] $S$ satisfies $(NFLVR)$ and therefore also $(CPS^{\lambda'})$ for all levels $\lambda'\in(0,1)$ of transaction costs.
\item[{\bf 2)}] The primal and dual optimisers $\hvp=(\hvp^0, \hvp^1)$ and $\hY=(\hY^0, \hY^1)$ exist.
\item[{\bf 3)}] The predictable supermartingale $\hY^p=(\hY^p_t)_{0\leq t\leq T}$ in Theorem \ref{mt2} does not coincide with the left limit $\hY_-=(\hY_{t-})_{0\leq t\leq T}$ of the optional strong supermartingale.
\ei
More precisely, the more detailed properties are:
\bi
\item[{\bf 4)}] There exists a predictable stopping time $\vr > 0$ such that, on $\{\vr<\infty\}$, the optimal strategy buys stocks immediately before time $\vr$, i.e. $\Delta\hvp_\vr=\hvp_\vr-\hvp_{\vr-} > 0$, but $\hS_{\vr-}:= \frac{\hY^1_{\vr-}}{\hY^0_{\vr-}}=(1-\lambda)S_{\vr-} \not=S_{\vr-}$.
\item[{\bf 5)}] There is a minimising sequence $Z^n=(Z^{0,n}, Z^{1,n})$ of consistent price systems for the dual problem \eqref{mt1:eq4} such that
$$(Z^{0,n}_\tau, Z^{1,n}_\tau) \stackrel{P}{\longrightarrow}  (\hY^0_\tau, \hY^1_\tau)$$
for all finite stopping times $\tau$ and 
$$\widetilde{S}^n_{\vr-}:= \frac{Z^{1,n}_{\vr-}}{Z^{0,n}_{\vr-}} \stackrel{P}{\longrightarrow} S_{\vr-}\ne(1-\lambda)S_{\vr-}=\hS_{\vr-}= \frac{\hY^1_{\vr-}}{\hY^0_{\vr-}}\quad\text{ on $\{\vr<\infty\}$}.$$
\ei

To construct the example, we set $t_j:=\frac{1}{2}-\frac{1}{2+j}$ for $j\in \mathbb{N}$ and $t_\infty=\frac{1}{2}$ and consider a stopping time $\sigma$ valued in $\{\frac{1}{2} - \frac{1}{2+j}~|~j \in \mathbb{N}\} \cup \{\frac{1}{2}\}$ such that $P(\sigma=t_j)=\frac{1}{2} \cdot \frac{1}{2^j}$ and $P(\sigma=t_\infty=\frac{1}{2})=\frac{1}{2}$. Let $\eta$ be a random variable independent of $\sigma$ such that 
\begin{align*}
P(\eta=2)&=(1-\varepsilon),\\
P(\eta=\textstyle\frac{1}{n})&=\varepsilon 2^{-n}, \qquad n\in\N,
\end{align*}
where $\varepsilon \in(0,\frac{1}{3})$. Let $(a_j)^\infty_{j=1}$ be a strictly increasing sequence of real numbers such that $a_j > \frac{1}{2}$ and $\lim_{j\to\infty} a_j=\frac{2}{3}.$ We then define the ask price $S=(S_t)_{0 \leq t \leq 1}$ to be a process such that $S_0=1$ and 
\be
\Delta S_\sigma=S_\sigma - S_{\sigma-} =\begin{cases}a_j(\eta-1) &: \sigma = t_j,\\
\frac{1}{2}(\eta-1) &:\sigma = \frac{1}{2},\end{cases}\label{Ex2:S}
\ee
and that is constant anywhere else.

As the jumps $\Delta S_{t_j} = a_j(\eta-1) \mathbbm{1}_{\{\sigma=t_j\}}$ and $\Delta S_{\frac{1}{2}}=\frac{1}{2} (\eta-1) \mathbbm{1}_{\{\sigma=\frac{1}{2}\}}$ are very favourable for the logarithmic investor, she wants to hold as many stocks as possible, provided the admissibility constraint $V_{T}^{liq}(\hvp) \geq 0$ is not violated. Similarly as in the preceding example this amounts to buying before time $t_1$ the maximal amount $\hvp^1_{t_1}$ of stocks such that in the hypothetic event $\{\eta=0\}$ the liquidation value would equal precisely zero which results in 
$$\hvp^1_{t_1}=\frac{1}{\lambda+(1-\lambda)a_1}.$$
At time $t_1$ we have to possibilities: either $\sigma=t_1$ in which case the investor may liquidate her position and go home, as the stock will remain constant after time $t_1.$ Or $\sigma > t_1$ so that there is still the possibility of jumps at time $t_2, t_3, \dots, t_\infty.$ At some point during the internal $[t_1, t_2)$ the utility maximiser will adjust the portfolio so that the liquidity constraint $V_{t_2}(\hvp)\geq 0$ is not violated. Again this results in holding the maximal amount $\hvp^1_{t_2}$ of stocks at time $t_2$ so that, in the hypothetical event $\{\eta=0\}$ we find for the liquidation value $V_{t_2} (\hvp)=0.$  A straightforward computation (see Proposition \ref{A:Ex2:prop} below) yields
$$\hvp^1_{t_2} = (1-\lambda \hvp^1_{t_1}) \frac{1}{(1-\lambda)a_2}.$$
The decisive point is the following: as $a_2 > a_1$, we obtain $\hvp^1_{t_2} < \hvp^1_{t_1}$; in other words, the investor has to {\it sell} stock between $t_1$ and $t_2$. Of course she can only do this at the bid price $(1-\lambda)S$. Continuing in an obvious way, the investor keeps selling stock in each interval $[t_j, t_{j+1})$ if she was not stopped before, i.e. in the event $\{\sigma >t_j\}.$ Therefore a shadow price must satisfy $\hS_{t_j}=(1-\lambda)S_{t_j}$ for all $j\geq 2$ and hence $\hS_{\frac{1}{2}-}=\lim_{j\to\infty}(1-\lambda)S_{t_j}=(1-\lambda)S_{\frac{1}{2}-}$ on the event $\{\sigma \geq \frac{1}{2}\}$. At time $t=\frac{1}{2}$ the situation changes again. As $\lim_{j\to\infty} a_j = \frac{2}{3}$ is higher than $\frac{1}{2}$, the agent {\it buys} stock immediately before $t=\frac{1}{2}$ (but after all the $t_j$'s), i.e.~at time $t=\frac{1}{2}-.$ Of course, for this purchase the ask price $S_{\frac{1}{2}-}$ applies. But this is in flagrant contradiction to the above requirement  that $\hS_{t_j}=(1-\lambda)S_{t_j}$ for all $j\geq 2$ on $\{\sigma=\frac{1}{2}\}$. The way out of this dilemma is precisely the notion of a ``sandwiched supermartingale deflator'' as isolated in Theorem \ref{mt2}.

Let us understand this phenomenon in some detail. We approximate the process $S$ by a sequence $(S^n)^\infty_{n=1}$ of simpler processes, all defined on the same filtered probability space $\big(\Omega, \mathcal{F}, (\mathcal{F}_t)_{0 \leq t \leq 1}, P\big)$ generated by $S$.
Let
\begin{align*}
\eta^n(\omega)= 
\begin{cases}
\eta(\omega) &: \eta(\omega) \geq \frac{1}{\eta},\\
\frac{1}{n} &: \eta(\omega) <  \frac{1}{\eta}
\end{cases}
\end{align*}
and
\begin{align*}
\sigma_n(\omega)= 
\begin{cases}
\sigma(\omega) &: \sigma(\omega) \leq t_n, \\
 \frac{1}{2} &: \mbox{else}.
\end{cases}
\end{align*}
Similarly as above we define
\begin{align}
\Delta S^n_{\sigma_n} = S^n_{\sigma_n} - S^n_{\sigma_n-} = \begin{cases}a_j(\eta^n-1)&: \sigma_n \leq t_n,\\
 \frac{1}{2} (\eta^n-1)&: \sigma_n=\frac{1}{2}.
 \end{cases}
 \label{Ex2:Sn}
\end{align}
The $\sigma$-algebra generated by process $S^n$ is finite and therefore the duality theory of portfolio optimisation is straightforward (compare \cite{KMK11} and \cite{S14}). The primal and dual optimiser for the log utility maximisation problem for $S^n$ can be easily computed; see Lemma \ref{A:Ex2:lSn} below.  The dual optimiser $\widehat{Z}^n=(\widehat{Z}_t^{0,n}, \widehat{Z}_t^{1,n})_{0 \leq t \leq 1}$ now is a true martingale (taking only finitely many values). One may explicitly show that the quotient $\widehat{S}^n=\frac{\widehat{Z}^{1,n}}{\widehat{Z}^{0,n}}$ is a shadow prices in the sense of Definition \ref{def:shadow} for which we obtain
\be\label{C2}
\widehat{S}^n_t= 
\begin{cases}
S^n_t&: 0 \leq t < t_1,\\
(1-\lambda) S^n_t &: t_1 \leq t < t_n,\\
S^n_t &: t_n \leq t < \frac{1}{2},\\
(1-\lambda) S^n_t &: \frac{1}{2} \leq t \leq 1 
\end{cases}
\ee
on $\{\sigma=\frac{1}{2}\}$ for sufficiently large $n$. (More precisely, that it can be extended to a shadow price.) What is the limit of the processes $(\widehat{S}^n_t)_{0 \leq t \leq 1}$? Obviously the process $\widehat{S}=(\widehat{S}_t)_{0 \leq t \leq T}$ defined as
\begin{align}
\widehat{S}_t= 
\begin{cases}
S_t&: 0 \leq t < t_1,\\
(1-\lambda)S_t&: t_1 \leq t \leq 1
\end{cases}\label{C3}
\end{align}
satisfies $\widehat{S}^n_\tau \to (1-\lambda) S_\tau$ $P$-a.s.~for all $[0, 1]$-valued stopping times $\tau$. However, 
\begin{align}\label{C6}
\widehat{S}^n_{\frac{1}{2}-} \xrightarrow{\text{$P$-a.s.}}S_{\frac{1}{2}-},\quad\text{as $\to\infty$},
\end{align}
an information which is not encoded in the process $\widehat{S}$, but only in the approximating sequence $\hS^n$. The remedy is to pass to the ``sandwiched supermartingales'' $\big((\widehat{Y}_t^{0,p})_{0 \leq t \leq 1}, (\widehat{Y}_t^0)_{0 \leq t \leq 1}\big)$ and $\big((\widehat{Y}_t^{1,p})_{0 \leq t \leq 1}, (\widehat{Y}_t^1)_{0 \leq t \leq 1}\big)$ and to accompany the process $\widehat{S} = \frac{\widehat{Y}^1}{\widehat{Y}^0}$ with the predictable process $\widehat{S}^p=\frac{\widehat{Y}^{1,p}}{\widehat{Y}^{0,p}}$ for which we find 
$$\widehat{S}^p_{\frac{1}{2}} = \lim_{n \to \infty} \frac{\widehat{Y}^{1,n}_{\frac{1}{2}-}}{\widehat{Y}^{0,n}_{\frac{1}{2}-}} = S_{\frac{1}{2}-}$$
as in \eqref{C6} above.

Again the reader who wants to verify the above characteristics may consult the explicit calculations in Appendix \ref{App:Ex2} below.
\appendix
\section{Proofs for Section \ref{sec3} }\label{App:B}

The proof of  parts 1)--3) of Theorem \ref{mainthm} follow from the abstract versions of the main results in \cite[Theorems 3.1 and 3.2]{KS99} once we have shown in the lemma below that the relations in \cite[Proposition 3.1]{KS99} hold true. We call a set $\mathcal{G}\subseteq L^0_+(P)$ \emph{solid}, if $0\leq f\leq g$ and $g\in\mathcal{G}$ imply that $f\in\mathcal{G}$, and use that $\cC(x)=x\cC(1)=:x\cC$ and \mbox{$\cD(y)=y\cD(1)=:y\cD$.} 

\bl\label{lpolar}
Suppose that $S$ satisfies $(CPS^{\lambda'})$ locally for all $\lambda'\in(0,\lambda)$. Then:
\bi
\item[{\bf 1)}] $\cC$ and $\cD$ are convex, solid and closed in the topology of convergence in measure.
\item[{\bf 2)}] $g\in\cC$ iff $E[gh]\leq 1$, for all $h\in\cD$, and $h\in\cD$ iff $E[gh]\leq 1$, for all $g\in\cC$.
\item[{\bf 3)}] The closed, convex, solid hull of $D$ in $L^0_+(P)$ is given by $\cD$, i.e.~$\overline{\sol(D)}=\cD$.
\item[{\bf 4)}] $\cC$ is a bounded subset of $L^0_+(P)$ and contains the constant function $1$.
\item[{\bf 5)}] $D:=\{Z^0_T~|~(Z^0,Z^1)\in\mathcal{Z}_{loc}\}$ is closed under countable convex combinations.
\ei
\el
\bp
1) The sets $\cC$ and $\cD$ are convex and solid by definition.

To prove the closedness of $\cC$, let $\varphi^n=(\varphi^{0,n}, \varphi^{1,n}) \in \cA (1)$ be such that $g^n:=V_T(\varphi^n)$ converge to some $g \in L^0_+(P)$ in probability. By the proof of Theorem 3.5 in \cite{CS06} (or Theorem 3.4 in \cite{S14}) it is then sufficient to show that $\big(\Var_T(\varphi^{1,n})\big)^\infty_{n=1}$ and hence also $\big(\Var_T(\varphi^{0,n})\big)^\infty_{n=1}$ are bounded in probability to deduce that $g=V_T(\varphi) \in \cC$ for some $\varphi=(\varphi^0, \varphi^1) \in \cA(1).$ Indeed, by Proposition 3.4 in \cite{CS06} (or the proof of Theorem 3.4 in \cite{S14}) there then exists a sequence of convex combinations $(\widetilde{\varphi}^{0,n}, \widetilde{\varphi}^{1,n}) \in \conv \big((\varphi^{0,n}, \varphi^{1,n}), (\varphi^{0,n+1}, \varphi^{1,n+1}),\dots\big)$ and a predictable finite variation process $\varphi=(\varphi^0, \varphi^1)$ such that
$$P\Big[(\widetilde{\varphi}^{0,n}_t, \widetilde{\varphi}^{1,n}_t) \to (\varphi^0_t, \varphi^1_t),\ \forall t \in [0,T]\Big]=1,$$
which already implies that $\varphi=(\varphi^0, \varphi^1) \in \cA(1).$ To see the boundedness of $\big(\Var_T(\varphi^{1,n})\big)^\infty_{n=1}$ in probability, we observe that it is sufficient to establish that $\big(\Var_{\tau_m}(\varphi^{1,n})\big)^\infty_{n=1}$ is bounded in probability for each $m \in \mathbb{N}$ for a localising sequence $(\tau_m)^\infty_{m=1}$ of stopping times. But this follows from the assumption that $S$ satisfies $( CPS^{\lambda'})$ locally for \emph{some} $\lambda' \in (0, \lambda)$ by Lemma 3.2 in \cite{CS06} (or Lemma 3.1 in \cite{S14}). Note that our notion of admissibility in \eqref{eq:lv} implies condition (iii) of Definition 2.7 in \cite{CS06} for any $a > 0$ locally.

The closedness of $\cD$ follows by combining similar arguments as in Lemma 4.1 in \cite{KS99} with a new version of Koml\'os lemma for non-negative optional strong supermartingales in \cite{CS13}. To that end, let $(h^n)$ be a sequence in $\cD$ converging to some $h$ in measure. Then there exists a sequence $\big((Y^{0,n},Y^{1,n})\big)^\infty_{n=1}$ in $\cB(1)$ such that $Y^{0,n}_T=h^n$ for each $n\in\N$. Since $Y^{0,n}$ and $Y^{1,n}$ are non-negative optional strong supermartingales, there exist by Theorem 2.7 in \cite{CS13} a sequence $(\widetilde{Y}^{n,0},\widetilde{Y}^{n,1})\in\mathrm{conv}\big((Y^{0,n},Y^{1,n}),(Y^{0,n+1},Y^{1,n+1}),\ldots\big)$ for $n\geq 1$ and optional strong supermartingales $\widetilde{Y}^0$ and $\widetilde{Y}^1$ such that
\be
(\widetilde{Y}^{n,0}_\tau,\widetilde{Y}^{n,1}_\tau)\overset{P}{\longrightarrow}(\widetilde{Y}^0_\tau,\widetilde{Y}^1_\tau),\quad\text{as $n\to\infty$,}\label{eq:conv}
\ee
for all $[0,T]$-valued stopping times $\tau$. This convergence in probability is then sufficient to deduce that $\widetilde{Y}^0_0=1$, $\widetilde{Y}^0_T= h$, and that $\widetilde{Y}^0\vp^0+\widetilde{Y}^1\vp^1$ is a non-negative optional strong supermartingale for all $(\vp^0,\vp^1)\in\cA(1)$. To see the latter, observe that, for all stopping times $\sigma$ and $\tau$ such that $0\leq \sigma\leq \tau\leq T$, we have that
\begin{align*}
\widetilde{Y}^0_\sigma\vp^0_\sigma+\widetilde{Y}^1_\sigma\vp^1_\sigma&=\liminf_{n\to\infty}\big(\widetilde{Y}^{0,n}_\sigma\vp^0_\sigma+\widetilde{Y}^{1,n}_\sigma\vp^1_\sigma\big)\\
&\geq\liminf_{n\to\infty}E\big[\widetilde{Y}^{0,n}_\tau\vp^0_\tau+\widetilde{Y}^{1,n}_\tau\vp^1_\tau\big|\cF_\sigma\big]\\
&\geq E\Big[\liminf_{n\to\infty}\big(\widetilde{Y}^{0,n}_\tau\vp^0_\tau+\widetilde{Y}^{1,n}_\tau\vp^1_\tau\big)\Big|\cF_\sigma\Big]\\
&=E\big[\widetilde{Y}^0_\tau\vp^0_\tau+\widetilde{Y}^1_\tau\vp^1_\tau\big|\cF_\sigma\big]
\end{align*}
by Fatou's lemma for conditional expectations.

To conclude that $(\widetilde{Y}^0,\widetilde{Y}^1)\in\cB(1)$ and hence that $h\in\cD$, it remains to show that $(\widetilde{Y}^0,\widetilde{Y}^1)$ is $\R_+^2$-valued and $\tS:=\frac{\widetilde{Y}^1}{\widetilde{Y}^0}$ is valued in $[(1-\lambda)S,S]$. We begin with the latter assertion. For this, we assume by way of contradiction that the set $F:=\big\{\tS\notin[(1-\lambda)S,S]\big\}$ is not $P$-evanescent in the sense that $P\big(\pi(F)\big)>0$, where $\pi$ denotes the projection from $\OmT$ onto $\Om$ given by $\pi\big(\omt\big)=\om$. Since $F=\big\{\tS\notin[(1-\lambda)S,S]\big\}$ is optional, there exists by the optional cross-section theorem (see Theorem IV.84 in \cite{DM78}) a $[0,T]\cup\{\infty\}$-valued stopping time $\sigma$ such that $\llbracket \sigma_{\{\sigma<\infty\}}\rrbracket\subseteq F$, which means that $\tS_\sigma\notin[(1-\lambda)S_\sigma,S_\sigma]$ on $\{\sigma<\infty\}$, and $P(\sigma<\infty)>0$. By \eqref{eq:conv} we obtain that $\tS^n_\tau:=\frac{\widetilde{Y}^{1,n}_\tau}{\widetilde{Y}^{0,n}_\tau}\overset{P}{\longrightarrow}\tS_\tau$ for the $[0,T]$-valued stopping time $\tau:=\sigma\wedge T$. As $\tS^n_\tau\in[(1-\lambda)S_\tau,S_\tau]$, this implies that also $\tS_\tau$ is valued in $[(1-\lambda)S_\tau,S_\tau]$ and therefore yields a contradiction to the assumption that $P\big(\pi(F)\big)>0$. The assertion that $(\widetilde{Y}^0,\widetilde{Y}^1)$ is $\R_+^2$-valued follows by the same arguments and its proof is therefore omitted.

2) The first assertion follows by the local version of the superreplication theorem under transaction costs (Lemma \ref{A2}) below. We then obtain the second assertion  by the same arguments as the proof of Proposition 3.1 in \cite{KS99} which also imply 3).

4) The fact that $\cC$ contains the constant function $1$ follows by definition; the $L^0_+(P)$-boundedness is implied by the existence of a strictly positive element in $\cD$.

5) Given $(Z^{0,n},Z^{1,n})_{n=1}^\infty$ in $\cZ_{loc}$ and $(\mu_n)_{n=1}^\infty$ positive numbers such that $\sum_{n=1}^\infty\mu_n=1$, we have that $\sum_{n=1}^\infty\mu_n Z^{0,n}$ is a non-negative local martingale starting at $1$, $\sum_{n=1}^\infty\mu_n Z^{1,n}$ is a local martingale and $\frac{\sum_{n=1}^\infty\mu_n Z^{1,n}}{\sum_{n=1}^\infty\mu_n Z^{0,n}}$ takes values in $[(1-\lambda)S,S]$ which already gives 5).
\ep

\bl\label{A2}
Suppose that $S$ satisfies $(CPS^{\lambda'})$ locally for all $\lambda'\in(0,\lambda)$. Then we have that $g\in L^0_+(P)$ is in $\cC$ if and only if $E[gZ^0_T]\leq 1$ for all $Z=(Z^0,Z^1)\in\cZ_{loc}$.
\el

\bp The ``only if'' part follows from the fact that $Z^0\vp^0+Z^1\vp^1$ is a non-negative local supermartingale by Proposition 1.6 in \cite{S13} and hence a true supermartingale for all $\vp=(\vp^0,\vp^1)\in\cA(1)$ and $Z=(Z^0,Z^1)\in\cZ_{loc}$.

For the ``if'' part, let $(\tau_m)^\infty_{m=1}$ be a localising sequence of stopping times for some $Z \in \cZ_{loc}$ such that $Z^{\tau_m}=(Z^0, Z^1)^{\tau_m}$ is a $\lambda'$-consistent price system for $S^{\tau_m}$ for some $\lambda' \in (0,\lambda).$ Then 
$$g_m:=g \mathbbm{1}_{\{\tau_m = T\}} \in \cC_m:=\{V_{\tau_m}(\varphi)| \varphi \in \cA (1)\}$$
and $g \in \cC$ if and only if $g_m \in \cC_m$ for each $m \in \mathbb{N}$, as $\cC_m \subseteq \cC$, $g_m \stackrel{P}{\longrightarrow} g$ and $\cC$ is closed.

Assume now for a proof by contradiction that there exists some $m' \in \mathbb{N}$ such that $g_{m'}\notin \cC_{m'}$. As $S^{\tau_{m'}}$ satisfies the assumptions of the superreplication theorem under transaction costs in the version of Theorem 1.4 in \cite{S14}, there exists a $\lambda'$-consistent price system $\overline{Z}=(\overline{Z}^0, \overline{Z}^1)$ for $S^{\tau_{m'}}$ such that
$$E[g_{m'} \overline{Z}^0_{\tau_{m'}}] > 1.$$ 
By the assumption that $S$ admits a local $\mu$-consistent price system for any $\mu \in (0, \lambda)$ we can extend $\overline{Z}$ to a local $\lambda$-consistent price system $\widetilde{Z}=(\widetilde{Z}^0, \widetilde{Z}^1)$ by setting 
\begin{equation*}
\widetilde{Z}^0_t= 
\begin{cases}
\overline{Z}^0_t&: 0 \leq t < \tau_{m'},\\
\check{Z}^0_t \frac{\overline{Z}^0_{\tau_{m'}}}{\check{Z}^0_{\tau_{m'}} }&: \tau_{m'} \leq t \leq T,  \\
\end{cases}
\end{equation*}
\begin{equation*}
\widetilde{Z}^1_t= 
\begin{cases}
\overline{Z}^10_t&: 0 \leq t < \tau_{m'},\\
\check{Z}^1_t \frac{\overline{Z}^1_{\tau_{m'}}}{\check{Z}^1_{\tau_{m'}} }&: \tau_{m'} \leq t \leq T  \\
\end{cases}
\end{equation*}
for some local $\mu'$-consistent price system $\check{Z}=(\check{Z}^0, \check{Z}^1)$ with $0 < \mu' < \frac{\lambda - \lambda'}{2}.$
Since
$$E[g \widetilde{Z}^0_T] \geq E[g_m \overline{Z}^0_{\tau_m}] > 1,$$
this yields the contradiction to the assumption that $E[g Z^0_T] \leq 1$ for all $Z \in \cZ_{loc}$.
\ep

\bp[Proof of Theorem \ref{mainthm}]
The proof follows immediately from the abstract versions of the main results (Theorems 3.1 and 3.2) and Proposition 3.2 in \cite{KS99} by Lemma B.1. The process
$$\hY^0(\widehat{y}(x)) \hvp^0(x) + \hY^1 (\widehat{y}(x)) \hvp^1(x)= \big(\hY^0_t(\widehat{y}(x)) \hvp^1_t(x) + \hY^1_t(\widehat{y}(x)) \hvp^1_t(x)\big)_{0\leq t \leq T}$$
is a martingale, as it is an optional strong supermartingale that has constant expectation.
\ep

\bp[Proof of Theorem \ref{mt2}]
By the self-financing condition and integration by parts we can write
\begin{align*}
\hY^0_t(\widehat{y}(x)) \hvp^0_t(x) + \hY^1_t(\widehat{y}(x)) \hvp^1_t(x) = \hY^0_t(\widehat{y}(x))(\hvp^0_t(x) + \hvp^1_t(x) \hS_t) = \hY^0_t(\widehat{y}(x)) (x+\hvp^1 \sint \widehat{\cS}_t + A_t),
\end{align*}
where $A=(A_t)_{0 \leq t \leq T}$ is a non-increasing predictable process given by
\begin{align*}
A_t:={}&\int_0^t\big(\hS_u-S_u\big)d\hvp^{1,\uparrow,c}_u(x)+\int_0^t\big((1-\lambda)S_u-\hS_u\big)d\hvp^{1,\downarrow,c}_u(x)\\
&+\sum_{0<u\leq t}\big(\hS^p_{u}-S_{u-}\big)\Delta\hvp^{1,\uparrow}_u(x)+\sum_{0<u\leq t}\big((1-\lambda)S_u - \hS^p_{u}\big)\Delta\hvp^{1,\downarrow}_u(x)\\
&+\sum_{0\leq u< t}\big(\hS_u-S_u\big)\Delta_+\hvp^{1,\uparrow}_u(x)+\sum_{0\leq u< t} \big(\hS_u-(1-\lambda)S_u\big)\Delta_+\hvp^{1,\downarrow}_u(x)
\end{align*}
for $t \in [0,T]$. Since $A\equiv 0$ if and only if \eqref{mt2:eq2} holds $P\otimes \Var (\hvp^1)$-a.e., we immediately obtain the equivalence of \eqref{mt2:eq1} and  \eqref{mt2:eq2} after choosing a suitable version of $\hvp^1$ and therefore that it is sufficient to prove \eqref{mt2:eq2}.

To that end, we observe that by the proof of part 1) of Lemma \ref{lpolar} above and part 4) of Theorem \ref{mainthm} there exists a sequence $\big((Z^{0,n},Z^{1,n})\big)^{\infty}_{n=1}$ in $\mathcal{Z}_{loc}$ such that
\be
\Big(\widehat{y}(x)Z^{0,n}_\tau,\widehat{y}(x)Z^{1,n}_\tau\Big)\overset{P}\longrightarrow\Big(\hY^0_\tau\big(\widehat{y}(x)\big),\hY^1_\tau\big(\widehat{y}(x)\Big)\label{eq:convCPS1}
\ee
and
\be
\Big(\widehat{y}(x)Z^{0,n}_{\tau-},\widehat{y}(x)Z^{1,n}_{\tau-}\Big)\overset{P}\longrightarrow\Big(\hY^{0,p}_{\tau}\big(\widehat{y}(x)\big),\hY^{1,p}_{\tau}\big(\widehat{y}(x)\Big)\label{eq:convCPS2}
\ee
for all $[0,T]$-valued stopping times $\tau$. As $\tS^n:=\frac{Z^{1,n}}{Z^{0,n}}$ is valued in the bid-ask-spread $[(1-\lambda)S,S]$, any $(\vp^0,\vp^1)\in\cA(x)$ is also self-financing for $\tS^n$ without frictions ($\lambda=0$) and $Z^{0,n}\big(x+\vp^1\sint\tS^n\big)$ is a non-negative local martingale and hence a supermartingale. By integration by parts (see \eqref{SI:IP}) and the self-financing condition \eqref{sfc} we can write
\begin{align*}
\hvp^0_t(x)+\hvp^1_t(x)\tS^n_t={}&\hvp^0_t(x)+\hvp^1(x)\sint\tS^n_t\\
&+\int_0^t\tS^n_ud\hvp^{1,c}_u(x)+\sum_{0<u\leq t}\tS^n_{u-}\Delta\hvp^1_u(x)+\sum_{0\leq u< t}\tS^n_{u}\Delta_+\hvp^1_u(x)\\
={}&x+\hvp^1(x)\sint \tS^n_t+A^n_t,
\end{align*}
where
\begin{align*}
A^n_t:={}&\int_0^t\big(\tS^n_u-S_u\big)d\hvp^{1,\uparrow,c}_u(x)+\int_0^t\big((1-\lambda)S_u-\tS^n_u\big)d\hvp^{1,\downarrow,c}_u(x)\\
&+\sum_{0<u\leq t}\big(\tS^n_{u-}-S_{u-}\big)\Delta\hvp^{1,\uparrow}_u(x)+\sum_{0<u\leq t}\big((1-\lambda)S_{u-}-\tS^n_{u-}\big)\Delta\hvp^{1,\downarrow}_u(x)\\
&+\sum_{0\leq u< t}\big(\tS^n_u-S_u\big)\Delta_+\hvp^{1,\uparrow}_u(x)+\sum_{0\leq u< t}\big((1-\lambda)S_u-\tS^n_u\big)\Delta_+\hvp^{1,\downarrow}_u(x)
\end{align*}
is a non-increasing predictable process. Combining this with the supermartingale property of $Z^{0,n}\big(x+\vp^1\sint\tS^n\big)$ we obtain
\begin{align*}
E[Z^{0,n}_T\hvp^0_T(x)]&=\textstyle E\left[Z^{0,n}_T\left(A^n_T+x+\hvp^1(x)\sint \tS^n_{T}\right)\right]\leq\textstyle E\left[Z^{0,n}_TA^n_T\right]+x.
\end{align*}
By Fatou's Lemma the latter implies that
\begin{align*}
x\widehat{y}(x)&=E\big[\widehat{Y}^0_{T}\big(\widehat{y}(x)\big)\hvp_{T}^0(x)\big]\leq \liminf_{n\to\infty}E[\widehat y(x)Z^{0,n}_T\hvp^0_{T}(x)]=\liminf_{n\to\infty}\textstyle E\left[\widehat y(x)Z^{0,n}_TA^n_T\right]+x\widehat y(x)
\end{align*}
and therefore that
\be
Z^{0,n}_TA^n_T\overset{L^1(P)}{\longrightarrow} 0,\label{eq:1}
\ee
as $Z^{0,n}_TA^n_T\leq 0$. Defining $B:=\Var(\hvp)$ and $P_B:=P\times B$ on $\Big(\Om\times[0,T],\F\otimes \cB\big([0,T]\big)\Big)$, there exists by \eqref{eq:1} a subsequence $\big((Z^{0,n},Z^{1,n})\big)^\infty_{n=1}$, again indexed by $n$, and an optional process $\tS^{1,\infty}$ and a predictable process $\tS^{0,\infty}$ such that $\tS^n \longrightarrow\tS^{1,\infty}$ $P_B$-a.e.~on $F_1:=\{d\hvp^{1,c}(x)\ne0\}\cup\{\Delta_+ \hvp^{1}(x)\ne0\}$, $\tS^n_- \longrightarrow\tS^{0,\infty}$ $P_B$-a.e.~on $F_0:=\{\Delta \hvp^{1}(x)\ne0\}$ and
\begin{align*}
\{d\hvp^{1,c}(x)>0\}&\subseteq \{\tS^{1,\infty}=S\}, & \{\Delta \hvp^{1}(x)>0\}&\subseteq \{\tS^{0,\infty}=S_-\},\\
\{\Delta_+ \hvp^{1}(x)>0\}&\subseteq \{\tS^{1,\infty}=S\}, & \{d\hvp^{1,c}(x)<0\}&\subseteq \{\tS^{0,\infty}=(1-\lambda)S\},\\
\{\Delta \hvp^{1}(x)<0\}&\subseteq \{\tS^{0,\infty}=(1-\lambda)S_-\}, & \{\Delta_+ \hvp^{1}(x)<0\}&\subseteq \{\tS^{1,\infty}=(1-\lambda)S\}.
\end{align*}
To complete the proof, we only need to show that $\tS^{1,\infty}$ and $\tS^{0,\infty}$ are indistinguishable from $\hS:=\frac{\hY^1}{\hY^0}$ and $\hS^p:=\frac{\hY^{1,p}}{\hY^{0,p}}$ on $F_0$ and $F_1$, respectively, which means that $P\big(\pi(G_1)\big)=0$ and $P\big(\pi(G_0)\big)=0$, where  $G_1:=\{\tS^{1,\infty}\ne\hS\}\cap F_1$, $G_0:=\{\tS^{0,\infty}\ne\hS^p\}\cap F_0$ and $\pi:\Om\times[0,T]\to\Om$ is given by $\pi\big((\om,t)\big)=\om$. For this, we argue by contradiction and suppose that $P\big(\pi(G_i)\big)>0$ for $i=0,1$. As $G_0$ and $G_1$ are optional, there exist $[0,T]\cup\{\infty\}$-valued stopping times $\sigma_0$ and $\sigma_1$ such that $\llbracket (\sigma_i)_{\{\sigma_i<T\}}\rrbracket\subseteq G_i$ and $P(\sigma_i<\infty)>0$ for $i=0,1$ by the optional cross-section theorem (see Theorems IV.84 in \cite{DM78}).
By the definition of the stopping times $\sigma_0$ and $\sigma_1$ we then have that $\tS^{1,\infty}_{\sigma_1}\ne\hS_{\sigma_1}$ and $\tS^{0,\infty}_{\sigma_0}\ne\hS^p_{\sigma_0}$ on $\{\sigma_1<\infty\}$ and $\{\sigma_0<\infty\}$, respectively. But this contradicts the convergence \eqref{eq:convCPS1} and \eqref{eq:convCPS2}, since
\begin{align*}
\tS^n_{\tau_1}&=\frac{Z^{1,n}_{\tau_1}}{Z^{0,n}_{\tau_1}}\overset{P}{\longrightarrow}\frac{\hY^1_{\tau_1}}{\hY^0_{\tau_1}}=\hS_{\tau_1},\\
\tS^n_{\tau_0-}&=\frac{Z^{1,n}_{\tau_0-}}{Z^{0,n}_{\tau_0-}}\overset{P}{\longrightarrow}\frac{\hY^{1,p}_{\tau_0}}{\hY^{0,p}_{\tau_0}}=\hS^p_{\tau_0}
\end{align*}
for the $[0,T]$-valued stopping times $\tau_1:=\sigma_1\wedge T$ and $\tau_0:=\sigma_0\wedge T$.
\ep
\begin{proof}[Proof of Theorem \ref{mt3}] 
Fix $\varphi$ with the properties as in Theorem \ref{mt3} and let $\widetilde{Z}^n$ be a sequence of local $\lambda$-consistent price systems satisfying \eqref{12.1} and \eqref{12.2} in Theorem \ref{mt2}. To alleviate notation we write $Z^n=\widetilde{Z}^n.$ 

Define the process $\tV^n$ by 
\begin{align}\label{R1}
\tV^n_t = Z^{0,n}_t \varphi^0_t + Z^{1,n}_t \varphi^1_t.
\end{align}

This process is non-negative by \eqref{2.32} and the fact that $\widetilde{S}^n=\frac{Z^{1,n}}{Z^{0,n}}$ takes its values in the bid-ask spread of $S$.

As $\varphi$ is of finite variation, we obtain by integration by parts that
$$d\tV^n_t=(\varphi^0_t dZ^{0,n}_t + \varphi^1_t dZ^{1,n}_t) + (Z^{0,n}_t d\varphi^0_t + Z^{1,n}_t d\varphi^1_t).$$

We decompose $\tV^n$ into $\tV^n=VG^n + VT^n,$ where
\begin{align}
VG^n_t& =x+  \int^t_0(\varphi^0_u dZ^{0,n}_u + \varphi^1_u dZ^{1,n}_u)\label{R2}\\
VT^n_t &= \int^t_0(Z^{0,n}_u d\varphi^0_u + Z^{1,n}_u d\varphi^1_u). \label{R2a}
\end{align}
The names indicate that $VG^n$ correspond to a value originating from the {\it gains} due to the movements of the local consistent price system $Z^n$, while $VT^n$ corresponds to the value originating from {\it trading}, i.e.~from switching between the positions $\varphi^0$ and $\varphi^1$ at price $\widetilde{S}^n = \frac{Z^{1,n}}{Z^{0,n}}$. If $\varphi$ were self-financing for $S$ under transaction costs $\lambda$ (see \eqref{eq:sf2.1}---\eqref{eq:sf2.3}), we could conclude that the process $VT^n$ is non-increasing. However, we can only use the weaker hypothesis that $\varphi$ is self-financing for $\widehat{\cS}$ without transaction costs \eqref{D1} which does not allow for this conclusion.

But here is a substitute.

\begin{claim}
For $\ve > 0$, there is a $[0,T]\cup \{\infty\}$-valued stopping time $\sigma$ with $P[\sigma < \infty] < \ve$ and a subsequence $(n_k)^\infty_{k=1}$ such that the stopped processes $VT^{n_k, \sigma}$ satisfy the uniform estimate $|VT^{n_k, \sigma}|\leq k^{-1}.$
\end{claim}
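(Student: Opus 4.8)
The plan is to rewrite $VT^n$ so that its ``trading integrand'' is explicitly small by \eqref{12.1}--\eqref{12.2}, to control the size of the deflators $Z^{0,n},Z^{1,n}$ by a maximal inequality, and only then to pass to a subsequence and to the stopping time. Concretely: by \eqref{R2a}, the self-financing relations for $\widehat{\mathcal S}$ read off from \eqref{D1} (i.e.\ $d\varphi^{0,c}=-\widehat S\,d\varphi^{1,c}$, $\Delta\varphi^0=-\widehat S^p\Delta\varphi^1$, $\Delta_+\varphi^0=-\widehat S\,\Delta_+\varphi^1$), one gets
\[
VT^n_t=\int_0^t W^n_u\,d\varphi^{1,c}_u+\sum_{0<u\le t}W^{n,p}_u\,\Delta\varphi^1_u+\sum_{0\le u<t}W^n_u\,\Delta_+\varphi^1_u,
\]
where $W^n_u:=Z^{1,n}_u-\widehat S_u Z^{0,n}_u$ and $W^{n,p}_u:=Z^{1,n}_{u-}-\widehat S^p_u Z^{0,n}_{u-}$. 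Since $\widehat S=\widehat Y^1/\widehat Y^0$ and $\widehat S^p=\widehat Y^{1,p}/\widehat Y^{0,p}$, the convergences \eqref{12.1}--\eqref{12.2} say precisely that $W^n_\tau\to0$ and $W^{n,p}_\tau\to0$ in probability for every $[0,T]$-valued stopping time $\tau$; moreover, as $\widetilde S^n,\widehat S,\widehat S^p$ all lie in the bid--ask spread, $|W^n_u|\le\lambda S_u Z^{0,n}_u\le\tfrac{\lambda}{1-\lambda}Z^{1,n}_u$ and likewise for $W^{n,p}$. The goal is to prove $\sup_{t\le T}|VT^n_t|\to0$ in probability; the subsequence and $\sigma$ are then produced by a routine fast-subsequence argument.

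For that goal I would first localise so that $\Var_T(\varphi^1)$ and $\sup_{t\le T}S_t$ are bounded. Doob's weak maximal inequality, applied to the non-negative local martingales (hence supermartingales) $Z^{0,n}$ and $Z^{1,n}$, which start at $1$ and at $\widetilde S^n_0\le S_0$, shows that $\{\sup_{t\le T}Z^{0,n}_t\}_n$ and $\{\sup_{t\le T}Z^{1,n}_t\}_n$ are bounded in probability uniformly in $n$. The decisive structural point is that $\varphi^1$ is a \emph{fixed} predictable finite-variation process, so the jump times entering the two sums above do not move with $n$: splitting each sum into a finite head (over the first $J$ jump times, which tends to $0$ in probability by \eqref{12.1}--\eqref{12.2}) and a tail (whose size, on the event $\{\sup_t Z^{1,n}_t\le M\}$, is dominated by $\tfrac{\lambda M}{1-\lambda}\sum_{j>J}|\Delta\varphi^1_{\rho_j}|\to0$ as $J\to\infty$ uniformly in $n$) shows both sums tend to $0$ in probability. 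For the continuous integral $\int_0^{\cdot}W^n_u\,d\varphi^{1,c}_u$ one localises at $\rho^n_M:=\inf\{t:Z^{0,n}_t\vee Z^{1,n}_t\ge M\}$, whose failure event $\{\rho^n_M<T\}$ has probability $<\varepsilon/2$ uniformly in $n$ for $M$ large, so that the integrand becomes uniformly bounded; there $W^n_\tau\to0$ at all stopping times, together with the atomlessness of $d\varphi^{1,c}$ and a dominated-convergence / predictable-projection argument, gives $\sup_t\bigl|\int_0^{t\wedge\rho^n_M}W^n_u\,d\varphi^{1,c}_u\bigr|\to0$ in probability. Combining these yields $\sup_{t\le T\wedge\rho^n_M}|VT^n_t|\to0$ in probability.

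Finally one chooses a subsequence $(n_k)$ with $P[\rho^{n_k}_M<T]+P\bigl[\sup_{t\le T\wedge\rho^{n_k}_M}|VT^{n_k}_t|>\tfrac1{3k}\bigr]<\varepsilon 2^{-k}$ and sets $\sigma:=\inf\{t:|VT^{n_k}_t|\ge\tfrac1{3k}\text{ or }t\ge\rho^{n_k}_M\text{ for some }k\ge1\}$; a harmless refinement (stopping just before the predictable jump times of $\varphi^1$, or enlarging the threshold) removes a possible overshoot of the stopped value at $\sigma$, after which $P[\sigma<\infty]<\varepsilon$ and $|VT^{n_k,\sigma}|\le k^{-1}$. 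The hard part of the argument is exactly the step ``$\sup_{t\le T}|VT^n_t|\to0$ in probability'': one must upgrade the convergence of $\widetilde S^n$ to $\widehat S$ and of $\widetilde S^n_-$ to $\widehat S^p$ \emph{at all stopping times} --- which is all that Theorem \ref{mt2} and the Koml\'os-type lemma of \cite{CS13} deliver --- to a bound that is \emph{uniform in time}. Such an upgrade is false for a general sequence of processes; it works here only because the jump times of the fixed strategy $\varphi^1$ do not depend on $n$ and because $\Var(\varphi^1)$ supplies an integrating measure against which the $Z^{1,n}$ are uniformly bounded in $L^1$. The stopping time $\sigma$ in the statement is unavoidable precisely because the deflators $Z^{0,n},Z^{1,n}$ need not be uniformly integrable --- the same ``loss of mass'' that already forced us out of the class of local martingales in Theorem \ref{mt2}.
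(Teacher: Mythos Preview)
Your approach is correct and reaches the same conclusion, but it takes a genuinely different and more laborious route than the paper. The paper's proof is a two-liner: it invokes Lemma~7.1 of \cite{CS13}, which asserts directly that for a sequence $(Z^n)$ converging to a sandwiched strong supermartingale $\widehat{\cY}$ in the sense of \eqref{12.1}--\eqref{12.2}, the Stieltjes integrals $\int_0^{\cdot}(Z^{0,n}_u\,d\varphi^0_u+Z^{1,n}_u\,d\varphi^1_u)$ converge in the u.c.p.\ topology to $\int_0^{\cdot}(\widehat{\cY}^0_u\,d\varphi^0_u+\widehat{\cY}^1_u\,d\varphi^1_u)$. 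Since $\varphi$ is self-financing for $\widehat{\cS}$, this limit is identically zero, and the u.c.p.\ convergence $VT^n\to0$ is immediate. From there the paper notes that $VT^n$ is \emph{predictable} (each of the three pieces in \eqref{def:SI:1} is), so the hitting time $\sigma_{k,n}=\inf\{t:|VT^n_t|\ge k^{-1}\}$ is a predictable stopping time; picking a term of an announcing sequence strictly before $\sigma_{k,n_k}$ removes the overshoot cleanly.

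What you do instead is essentially re-prove Lemma~7.1 of \cite{CS13} from scratch: you substitute the $\widehat{\cS}$-self-financing relations into $VT^n$ to isolate the ``error integrands'' $W^n=Z^{1,n}-\widehat S Z^{0,n}$ and $W^{n,p}=Z^{1,n}_--\widehat S^p Z^{0,n}_-$, bound them via the bid--ask spread and Doob's maximal inequality, split the jump sums into head and tail, and sketch a dominated-convergence argument for the continuous part. This is a legitimate and more self-contained argument, and your diagnosis of the ``hard part'' is exactly right --- upgrading convergence at all stopping times to u.c.p.\ convergence of the pathwise integral is precisely the content of that lemma. Your overshoot control via thresholds and stopping at $\rho^n_M$ also works, though the paper's use of predictability of $VT^n$ and announcing sequences is crisper. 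In short: same destination, but the paper outsources the heavy lifting to \cite{CS13}, while you redo that lifting by hand.
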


Indeed, by Lemma 7.1 of \cite{CS13} we know that 
\begin{align*}
\lim_{n \to \infty} VT^n_t ={}& \lim_{n \to \infty} \int^t_0 (Z^{0,n}_u d\varphi^0_u + Z^{1,n}_u d\varphi^1_u )\\
={}&\int^t_0 \hY^0_ud\vp^{0,c}_u+ \sum_{0 \leq u <t} \hY^{0,p}_u\Delta \vp^0_u+ \sum_{0 < u \leq t} \hY^0_u \Delta_+ \vp^0_u\\
&+\int^t_0 \hY^1_ud\vp^{1,c}_u+ \sum_{0 \leq u <t} \hY^{1,p}_u\Delta \vp^1_u+ \sum_{0 < u \leq t} \hY^1_u \Delta_+ \vp^1_u\\
=&:\int^t_0 (\widehat{\cY}^0_u d\varphi^0_u + \widehat{\cY}^1_u d\varphi^1_u)
\end{align*}
the limit holding true in probability, uniformly in $t \in [0,T]$ (u.c.p.~topology).

The last process is identically equal to zero as $\varphi$ is self-financing for $\widehat{\mathcal{S}}$ (see \eqref{D1}). For $k \in \mathbb{N},$ let 
$$\sigma_{k,n}= \inf\{t:|VT^n_t| \geq k^{-1}\}.$$
Choose $n_k$ large enough so that we have 
$$P[\sigma_{k,n} < \infty] < \ve 2^{-k}.$$
We still have to control the possible final jump of $VT^n$ at $\{\sigma_{k,n} < \infty\}.$ To do so, note that $VT^n$ is a predictable process so that $\sigma_{k,n}$ is a predictable stopping time. We therefore may find an announcing sequence $(\sigma_{k, n_k, j})$ of stopping times, i.e. $\sigma_{k,n_k,j} < \sigma_{k, n_k}$ on $\{ \sigma_{k, n_k} < \infty\}$ and $(\sigma_{k, n_k, j})^\infty_{j=1}$ increases a.s.~to $\sigma_{k, n_k}$. Letting $\sigma_k = \sigma_{k, n_k,j}$ for large enough $j$, we have
$$P[\sigma_k < \infty] < \ve 2^{-k} \quad \mbox{and} \quad |VT^{n_k, \sigma_k}| \leq k^{-1}.$$

Defining $\sigma$ as the infimum of $(\sigma_k)^\infty_{k=1},$ we have proved the claim.
\vskip10pt
Now we turn to the processes $(VG^n)^\infty_{n=1}$ in \eqref{R2} which are local martingales. By the above claim and Proposition 2.12 of \cite{CS13} we conclude that 
\begin{align}
P-\lim_{n \to \infty}VG^n_\tau &= x+ P-\lim_{n \to \infty}\int^\tau_0 (\varphi^0_u dZ^{0,n}_u + \varphi^1_u dZ^{1,n}_u)\\
&=x+\varphi^0\sint \widehat{\cY}^0_\tau + \varphi^1\sint\widehat{\cY}^1_\tau ,
\end{align}
for all $[0,\sigma \wedge T]$-valued stopping times $\tau.$

In particular, the stopped process $(x+\varphi^0\sint \widehat{\cY}^0_t + \varphi^1\sint\widehat{\cY}^1_t)^\sigma_{0 \leq t \leq T}$ equals the stopped process $(\widehat{Y}^0_t\varphi^0_t  + \widehat{Y}^1_t\varphi^1_t)^\sigma_{0 \leq t \leq T}$ and is a non-negative supermartingale. As $\ve > 0$ in the above claim was arbitrary we may conclude that $(\varphi^0_t \widehat{Y}^0_t + \varphi^1_t \widehat{Y}^1_t)_{0 \leq t \leq T}$ is a non-negative supermartingale.

Fo the proof of \eqref{P1.2} we observe that
$$E\big[U\big(x+\varphi^1\sint \widehat{\cS}_T\big)\big] \leq E\big[V(\hY^0_T)+\hY^0_T(x+\varphi^1\sint \widehat{\cS}_T)\big]\leq E\big[V(\hY^0_T)+\hY^0_T(x+\hvp^1\sint \widehat{\cS}_T)\big]$$
by Fenchel's inequality, the supermartingale and the martingale property of $\hY^0(x+\varphi^1\sint \widehat{\cS}_T)$ and $\hY^0(x+\hvp^1\sint \widehat{\cS}_T)$, respectively. Combining this with $\hY^0_T=U'(x+\hvp^1\sint \widehat{\cS})$ and the fact that $V(y)=U\big((U')^{-1}(y)\big)-(U')^{-1}(y)y$ for $y>0$ we obtain
$$E\big[U\big(x+\varphi^1\sint \widehat{\cS}_T\big)\big]\leq E\big[U\big(x+\hvp^1\sint \widehat{\cS}_T\big)\big]= E\big[U\big(\hvp^0_T + \hvp^1_T \widehat{S}_T \big)\big] = E\big[U\big(V_{T}^{liq}(\hvp)\big)\big],$$
which completes the proof.
\end{proof}
\bp[Proof of Proposition \ref{lem:martingale}]
Suppose that $(\hY^0,\hY^1)\in\cB\big(\widehat y(x)\big)$ is a local martingale and hence c\`adl\`ag. Then the process $(\hY^{0,p},\hY^{1,p})$ coincides with $(\hY^0,\hY^1)$ as explained below \eqref{D5.1} and the integral $x+\hvp^1\sint\widehat{\mathcal{S}}$ reduces to the usual stochastic integral $x+\hvp^1\sint \hS$ with $\hS:=\frac{\hY^1}{\hY^0}$. Moreover, the process $\hY^0\vp^0+ \hY^1\vp^1= \hY^0\big(x+\vp^1\sint\hS\big)$ is a non-negative local martingale and hence a supermartingale for all $(\vp^0,\vp^1)\in\cA\big(x;\hS\big)$, which implies that $\hY^0$ is an equivalent local martingale deflator for $\hS$ without transaction costs starting at $\hY^0_0=\widehat y(x)$ and hence $\widehat{Y}^0 \in \mathcal{Y} (\hat{y}(x); \widehat{S})$. As $\hY^0_{T}=U'\big(V_{T}^{liq}(\hvp)\big)$ and $\hY^0\hvp^0+ \hY^1\hvp^1= \hY^0\big(x+\hvp^1\sint\hS\big)$ is a martingale by Theorem \ref{mainthm}, we obtain by the duality for the frictionless utility maximisation problem, i.e., Theorem 2.2 in \cite{KS99}, that $(\hvp^0,\hvp^1)\in\cA\big(x;\hS\big)$ and $\hY^0\in\cY\big(\widehat y(x);\hS\big)$ are the solutions to the frictionless primal and dual problem for $\hS$, if $\widehat y(x;\hS)=\widehat y(x)$. To see the latter, we observe that $u(x)=v\big(\widehat y(x)\big)+x\widehat y(x)$ by Theorem \ref{mainthm} and therefore
$$v\big(\widehat y(x)\big)+x\widehat y(x)=u(x)\leq u\big(x;\hS\big)\leq u\big(\widehat y(x);\hS\big)+x\widehat y(x)$$
by \eqref{sec1:eq1}. Since $v\big(\widehat y(x)\big)=E\big[V(\hY^0_T)\big]$, $E[V_{T}^{liq}(\hvp)\hY^0_T]=x\widehat y(x)$ and $\hY^0\in\cY\big(\widehat y(x);\hS\big)$, we obtain that $\widehat y(x;\hS)=\widehat y(x)$, which completes the proof.
\ep 

\bp[Proof of Proposition \ref{lem:connection}]
By Theorem 2.1 in \cite{K10} the assumption that the shadow price $\hS=(\hS)_{0\leq t\leq T}$ satisfies $(NUPBR)$ implies that $\hS$ admits an equivalent local martingale deflator. As $\hS$ is valued in the bid-ask spread $[(1-\lambda)S,S]$, any $(\vp^0,\vp^1)\in\cA(x)$ is also self-financing and admissible for $\hS$ without frictions ($\lambda=0$) and hence $\cA(x)\subseteq\cA(x;\hS)$. Since $\cA(x)\subseteq\cA(x;\hS)$, we obtain that $(Y^0,Y^1):=(Y,Y\hS)\in\cB(y)$ for all $Y\in\cY(y;\hS)$ and therefore similarly to \eqref{sec1:eq1}:
$$
v(y)=\inf_{(Y^0,Y^1)\in\cB(y)}E[V(Y^0_T)]\leq \inf_{Y\in\cY(y;\hS)}E[V(Y_T)]=: v(y;\hS).
$$
Moreover, as
$$u(x) = v\big(\widehat y(x)\big)+x\widehat y(x)\leq v\big(\widehat y(x;\hS)\big)+x\widehat y(x;\hS) \leq v\big(\widehat y(x;\hS);\hS\big)+x\widehat y(x;\hS) =u(x;\hS)=u(x),$$
it follows that $\widehat y(x)=\widehat y(x;\hS)$ and therefore that $(\widehat{Y}^0,\widehat{Y}^1):=(\hY,\hY\hS)\in\cB\big(\widehat y(x)\big)$ is the solution to the frictional dual problem \eqref{D1}, where $\hY\in\cY\big(\widehat y(x;\hS);\hS\big)$ is the solution to its frictionless counterpart
$$E[V(Y_T)]\to\min!,\quad Y\in\cY(\widehat y(x);\hS),$$
for $\widehat{S}$.
\ep 

\section{A more detailed analysis of the examples}\label{App:A}
After the formal discussion of the examples in Section \ref{sec:ex} let us now give in the next two subsections a more detailed analysis.
\subsection{Truly l\`adl\`ag primal and dual optimisers}\label{A:Ex1}
We begin with the first example discussed in Section \ref{sec:ex1}.
\begin{prop}\label{A:Ex1:prop1}
Let $\varepsilon \in (0, \frac{1}{3})$ and fix $\lambda\in(0,1)$ sufficiently small. Then:

{\bf{1)}} The solution $\hvp=(\hvp^0, \hvp^1) \in \cA(1)$ to the problem
\be
E\big[\log\big(V_1^{liq}(\vp)\big)\big] \to \max!, \quad \vp\in \cA(1),\label{A:Ex1:PP}
\ee
for the ask price $S=(S_t)_{0 \leq t \leq 1}$ defined in \eqref{ex1:pp} exists and is given by
$$\hvp^1_t=\hvp^1_{\frac{1}{2}}\mathbbm{1}_{\rrbracket 0, \frac{1}{2} \rrbracket} + \hvp^1_{\frac{1}{2}+} \mathbbm{1}_{\rrbracket \frac{1}{2},1 \rrbracket} + \int^t_{\frac{1}{2}\wedge t} d\hvp^{1,c}_s,$$
where 
\begin{align*}
\hvp^1_{\frac{1}{2}}={}& \frac{4-\lambda}{1+\lambda},\\
\hvp^1_{\frac{1}{2}+}={}&\frac{1+\hvp^1_{\frac{1}{2}} \Delta S_{\frac{1}{2}}}{S_{\frac{1}{2}} } \frac{1}{\lambda +(1-\lambda)a_{\frac{1}{2}}}\mathbbm{1}_{\{ \Delta S_{\frac{1}{2}}>0\}}+\frac{1+\hvp^1_{\frac{1}{2}} \big((1-\lambda)S_{\frac{1}{2}}-S_{\frac{1}{2}-}\big)}{(1-\lambda)S_{\frac{1}{2}} }\frac{1}{a_{\frac{1}{2}}}\mathbbm{1}_{\{ \Delta S_{\frac{1}{2}}<0\}},\\
d\hvp^{1,c}_t={}&\mathbbm{1}_{\rrbracket \frac{1}{2},\sigma \rrbracket} \frac{\hvp^0_{\frac{1}{2}+}+\hvp^1_{\frac{1}{2}+} S_{\frac{1}{2}}}{S_{\frac{1}{2}} }\frac{1-\lambda}{\big(\lambda+(1-\lambda) a_t\big)^2} \frac{1}{3}dt
\end{align*}
and $\hvp^0_0=1$ and $d\hvp^0$ is determined by the self-financing condition \eqref{eq:sf2.1} - \eqref{eq:sf2.3} with equality. 

{\bf{2)}} The solution $\widehat{Y} = (\widehat{Y}^0, \widehat{Y}^1)$ to the dual problem
\be
E[-\log(Y^0_T)-1] \to \min!, \quad Y=(Y^0, Y^1) \in \cB\big(\hy (x) \big),\label{A:Ex1:DP}
\ee
for $\hy (x) = u'(x)=\frac{1}{x}=1$ exists and is given by
\be
(\hY^0,\hY^1)=\left(\frac{1}{\hvp^0 + \hvp^1\hS},\frac{\widehat{S}}{\hvp^0 + \hvp^1\widehat{S}}\right),\label{A:Ex1:DO}
\ee
where
\be
\hS_t=\begin{cases}S_0&: 0\leq t<\frac{1}{2},\\
 S_{\frac{1}{2}}\mathbbm{1}_{\{ \Delta S_{\frac{1}{2}}>0\}}+(1-\lambda)S_{\frac{1}{2}} \mathbbm{1}_{\{\Delta S_{\frac{1}{2}}<0\}}&: t=\frac{1}{2},\\
 S_{\frac{1}{2}}&:\frac{1}{2}<t<\sigma,\\
 (1-\lambda)S_{\sigma}&:\sigma\leq t\leq 1
 \end{cases}
\label{A:Ex1:hS}
\ee
and
\begin{align}
\hvp^0_t + \hvp^1_t\hS_t=1+{}&\hvp^1\sint \hS_t\nonumber\\
=1+{}&\hvp^1_{\frac{1}{2}}\Delta S_{\frac{1}{2}}\mathbbm{1}_{\{\Delta S_\frac{1}{2}>0\}}\mathbbm{1}_{\{\frac{1}{2}\leq t\}}+\hvp^1_{\frac{1}{2}}\big( (1-\lambda ) S_{\frac{1}{2}} - S_{\frac{1}{2}-}\big)\mathbbm{1}_{\{\Delta S_\frac{1}{2}<0\}}\mathbbm{1}_{\{\frac{1}{2}\leq t\}}\nonumber\\
+{}&\hvp^1_{\frac{1}{2}}\big( S_{\frac{1}{2}} - (1-\lambda ) S_{\frac{1}{2}}\big)\mathbbm{1}_{\{\Delta S_\frac{1}{2}<0\}}\mathbbm{1}_{\{\frac{1}{2}< t\}}+\hvp^1_{\sigma}\big( (1-\lambda )S_{\sigma} -  S_{\sigma-}\big)\mathbbm{1}_{\{\sigma\leq t\}} \label{A:Ex1:dw}
\end{align}
for $t\in[0,1]$.
\end{prop}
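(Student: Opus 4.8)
The plan is to argue by verification, building on Theorem \ref{mainthm} and Proposition \ref{prop:log}. First I would record that the standing hypotheses hold: $S$ is bounded and bounded away from $0$ (since $\xi\le 3$, $\eta\le 2$ and $\tfrac23\le 1+a_t(\eta-1)\le\tfrac43$), so it admits an equivalent martingale measure and hence $(CPS^{\lambda'})$ locally for every $\lambda'\in(0,1)$; $AE(\log)=0<1$; and $u(1)<\infty$ because shadow wealth is bounded above while the design of $\eta$ (in the spirit of \cite{KS99}) keeps $E[\log V^{liq}_T]$ finite, the ``almost worst cases'' $\{\eta=\tfrac1n\}$ occurring with probability $\varepsilon 2^{-n}$ so that the logarithmic losses are summable. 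Theorem \ref{mainthm} then yields existence and uniqueness of the primal and dual optimisers, and for $U=\log$ Proposition \ref{prop:log} gives that the dual optimiser has the form $\widehat Y=\big(\tfrac1{\widehat V},\tfrac{\widehat S}{\widehat V}\big)$ with $\widehat V:=\widehat\varphi^0+\widehat\varphi^1\widehat S$ and $\widehat S=\widehat Y^1/\widehat Y^0$. It remains to identify these objects. I would take the candidate $\widehat\varphi$ of the statement and the candidate $\widehat S$ of \eqref{A:Ex1:hS}, set $\widehat Y$ as in \eqref{A:Ex1:DO}, and check: (i) $\widehat\varphi\in\cA(1)$; (ii) $\widehat Y\in\cB(1)$; (iii) $\widehat Y^0_T=1/V^{liq}_T(\widehat\varphi)$ and $\widehat Y^0\widehat\varphi^0+\widehat Y^1\widehat\varphi^1\equiv 1$. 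Given (i)--(iii), the Fenchel inequality (with $V(y)=-\log y-1$ and the supermartingale property from (ii)) gives $E[\log V^{liq}_T(\widehat\varphi)]=E[V(\widehat Y^0_T)]+1\ge u(1)\ge E[\log V^{liq}_T(\widehat\varphi)]$, so equality holds throughout, and by the uniqueness in Theorem \ref{mainthm} (of the optimal payoff and of the dual optimiser) $\widehat\varphi$ and $\widehat Y$ are the claimed optimisers.

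For (i) I would first pin $\widehat\varphi$ down by the ``binding constraint'' principle: on each interval the log-optimiser holds the largest stock position for which $V^{liq}_t(\widehat\varphi)$ would still be non-negative in the ($P$-null) hypothetical event $\{\eta=0\}$ at $\sigma$, together with the unconstrained log-optimal response to the $\xi$-jump at $\tfrac12$. On $[0,\tfrac12)$ the price is constant, so the timing of the initial purchase is irrelevant and one may take $\widehat\varphi^1$ constant; the level $\widehat\varphi^1_{1/2}=\tfrac{4-\lambda}{1+\lambda}$ is the solution of the one-dimensional strictly concave first-order condition for the jump at $\tfrac12$ (combining the immediate $\xi$-jump with the value of entering the $(\tfrac12,\sigma]$ regime), a routine KKT computation. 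The right jump at $\tfrac12$ to $\widehat\varphi^1_{1/2+}$ is forced by the same binding identity applied immediately after the $\xi$-jump, with the ask price $S_{1/2}$ used on $\{\Delta S_{1/2}>0\}$ (she buys) and the bid price $(1-\lambda)S_{1/2}$ on $\{\Delta S_{1/2}<0\}$ (she sells), giving the two displayed expressions. On $(\tfrac12,\sigma]$, differentiating the binding identity --- with the ask price applying because $a_t$ decreases and the position increases --- yields the linear relation $\widehat\varphi^1_t=\tfrac{\widehat\varphi^0_{t-}+\widehat\varphi^1_{t-}S_{t-}}{S_{t-}}\tfrac1{\lambda+(1-\lambda)a_t}$, whose solution has the stated absolutely continuous density and is strictly increasing. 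Defining $\widehat\varphi^0$ through the self-financing equalities \eqref{eq:sf2.1}--\eqref{eq:sf2.3}, one then checks from \eqref{eq:gtc} that $V^{liq}_t(\widehat\varphi)\ge 0$ for all $t$, with equality only on $\{\eta=0\}$; in particular $V^{liq}_T(\widehat\varphi)>0$ a.s., so $\widehat\varphi\in\cA(1)$.

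For (ii), $\widehat Y^0_0=1/(\widehat\varphi^0_0+0)=1$ is immediate, and $\widehat S$ lies in $[(1-\lambda)S,S]$ by inspection of \eqref{A:Ex1:hS}: it equals the ask on $[0,\tfrac12)\cup(\tfrac12,\sigma)$, the ask resp.\ bid at $\tfrac12$ according to the sign of $\Delta S_{1/2}$, and the bid on $[\sigma,1]$ where $S$ is constant (the smallness of $\lambda$ is used only to keep the constraint-binding construction feasible, e.g.\ $\widehat\varphi^1_{1/2+}>0$ and $\widehat V>0$, not for the spread condition). The crux is that $\widehat Y^0\varphi^0+\widehat Y^1\varphi^1=\widehat Y^0(\varphi^0+\varphi^1\widehat S)$ is a non-negative optional strong supermartingale for \emph{every} $\varphi\in\cA(1)$. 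I would obtain this from an approximating sequence: as in Proposition \ref{A:Ex1:propZn}, exhibit genuine $\lambda$-consistent price systems $Z^n=(Z^{0,n},Z^{1,n})$ with $Z^n_\tau\to\widehat Y_\tau$ and $Z^n_{\tau-}\to\widehat Y^p_\tau$ in probability for all $[0,1]$-valued stopping times $\tau$; since $\cA(1)\subseteq\cA(1;\widetilde{S}^n)$ for $\widetilde{S}^n:=Z^{1,n}/Z^{0,n}$ and $Z^{0,n}$ is an equivalent local martingale deflator for $\widetilde{S}^n$, the process $Z^{0,n}\varphi^0+Z^{1,n}\varphi^1$ is a non-negative local martingale plus a non-increasing finite-variation term, hence a supermartingale, and Fatou's lemma for conditional expectations --- exactly as in the closedness part of Lemma \ref{lpolar} --- passes this on to the limit $\widehat Y^0\varphi^0+\widehat Y^1\varphi^1$. (Alternatively one checks directly that $\widehat Y^0$ and $\widehat Y^1$ are individually optional strong supermartingales --- verifying the supermartingale inequality across the right jump at $\tfrac12$, the jump at $\sigma$, and the deterministic decrease on $(\tfrac12,\sigma]$ --- and then uses \eqref{D9}, which writes $\varphi^0+\varphi^1\widehat S=1+\varphi^1\sint\widehat{\mathcal{S}}+A^\varphi$ with $A^\varphi$ non-increasing, to reduce to the supermartingale property of $\widehat Y^0(1+\varphi^1\sint\widehat{\mathcal{S}})$.)

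Finally (iii) is immediate: $\widehat\varphi^1_T=0$ gives $V^{liq}_T(\widehat\varphi)=\widehat\varphi^0_T=\widehat V_T$, so $\widehat Y^0_T=1/V^{liq}_T(\widehat\varphi)$, while $\widehat Y^0\widehat\varphi^0+\widehat Y^1\widehat\varphi^1=\widehat Y^0\widehat V\equiv 1$ is a constant, hence a martingale with $E[\widehat Y^0_TV^{liq}_T(\widehat\varphi)]=1=x\,\widehat y(x)$; the expansion \eqref{A:Ex1:dw} of $\widehat V_t=1+\widehat\varphi^1\sint\widehat{\mathcal{S}}_t$ is then just the generalised integration-by-parts formula \eqref{D7} evaluated against \eqref{A:Ex1:hS}, a bookkeeping computation. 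I expect the main obstacle to be step (ii): because $\widehat Y$ genuinely fails to be c\`adl\`ag (it has a two-sided jump at $\tfrac12$), no classical supermartingale-deflator result applies, and one is essentially forced either to construct the approximating consistent price systems explicitly or to argue through the Mertens decomposition; a secondary technical nuisance is verifying that the constraint-binding ODE really does keep $V^{liq}(\widehat\varphi)$ non-negative and that $\widehat\varphi$ is genuinely self-financing at the two trading times $\tfrac12$ and $\sigma$.
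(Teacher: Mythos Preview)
Your approach is correct in principle but takes a genuinely different route from the paper. The paper proceeds \emph{forward} for Part 1: it first bounds $V_1^{liq}(\varphi)$ from above by the wealth obtained when trading at the most favourable prices in the spread (inequality \eqref{A:Ex1:eq1}), then exploits the multiplicative structure and the scaling of the logarithm to split $E[\log V_1^{liq}(\varphi)]$ into independent one-dimensional concave maximisation problems \eqref{A:Ex1:P1}--\eqref{A:Ex1:P2}. These are solved explicitly --- the key point being that for $t\in(\tfrac12,1]$ the derivative $\partial_\ell h(\widehat\ell_t,t)$ is still strictly positive at the boundary $\widehat\ell_t=\tfrac{1}{\lambda+(1-\lambda)a_t}$ (computation \eqref{B.1}), so the optimiser sits on the admissibility constraint, exactly as you anticipated with your ``binding constraint'' heuristic. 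Finally the paper checks that the induced $\widehat\varphi$ is self-financing with the correct signs of the increments, which simultaneously verifies admissibility and equality in \eqref{A:Ex1:eq1}. For Part 2 the paper does essentially nothing new: once $\widehat\varphi$ is known, Proposition~\ref{prop:log} and the relations \eqref{mt2:eq2} of Theorem~\ref{mt2} immediately yield \eqref{A:Ex1:DO}--\eqref{A:Ex1:dw}.

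Your verification route reverses the burden of proof: rather than deriving $\widehat\varphi$ by optimisation you postulate it, and rather than getting $\widehat Y\in\cB(1)$ for free from the general theory you must establish it directly (your step (ii)). As you rightly flag, this is where the real work lies --- it is essentially the content of Proposition~\ref{A:Ex1:propZn}, which in the paper comes \emph{after} Proposition~\ref{A:Ex1:prop1} as a separate illustration rather than as an ingredient of the proof. Your alternative for (ii), reducing to the supermartingale property of $\widehat Y^0(1+\varphi^1\sint\widehat{\mathcal{S}})$, does not obviously simplify matters: knowing that $\widehat Y^0$ and $\widehat Y^1$ are individually optional strong supermartingales is not enough, since the integrator $\varphi^0$ need not be non-negative. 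In short, your plan works, but the paper's direct primal computation is considerably shorter because it lets the abstract duality (Theorem~\ref{mainthm}, Proposition~\ref{prop:log}, Theorem~\ref{mt2}) absorb precisely the step you identify as the obstacle.
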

\begin{proof}
1) Since trading for any price within the bid-ask spread is always more favourable than trading under transaction costs (see \eqref{D9}), we have that
\begin{align}
V_1^{liq}(\vp)&\leq 1 + \varphi^1_{\frac{1}{2}}\Big(\big( (1-\lambda ) S_{\frac{1}{2}}-S_{\frac{1}{2}-}\big) \mathbbm{1}_{\{\Delta S_{\frac{1}{2}}<0\}} +  \Delta S_{\frac{1}{2}} \mathbbm{1}_{\{\Delta S_{\frac{1}{2}}> 0\}}\Big)\nonumber\\
&\qquad+ \varphi^1_{\frac{1}{2}}\big(S_{\frac{1}{2}}-(1-\lambda) S_{\frac{1}{2}} \big)\mathbbm{1}_{\{ \Delta S_{\frac{1}{2}} < 0\}}+ \varphi^1_\sigma \big( (1-\lambda ) S_\sigma - S_{\frac{1}{2}}\big)\nonumber\\
&= \Big(1+\ell_{\frac{1}{2}}\Big( \big((1-\lambda ) S_{\frac{1}{2}} - S_{\frac{1}{2}}\big) \mathbbm{1}_{\{\Delta S_{\frac{1}{2}} < 0\}} + \Delta S_{\frac{1}{2}} \mathbbm{1}_{\{\Delta S_{\frac{1}{2}} < 0\}}\Big)\nonumber \\
&\qquad\times\big(1+ \ell_{\frac{1}{2}+}\lambda  \mathbbm{1}_{\{\Delta S_{\frac{1}{2}}< 0\} } \big)\Big(1+ \ell_\sigma\big( (1-\lambda ) (1+ a_\sigma(\eta -1)) - 1\big)\Big),\label{A:Ex1:eq1}
\end{align}
where
\begin{align}
\ell_{\frac{1}{2}} &= \varphi^1_{\frac{1}{2}},\nonumber\\
\ell_{\frac{1}{2}+}&= \frac{\varphi^1_{\frac{1}{2}_+} S_{\frac{1}{2}}}{1+ \varphi^1_{\frac{1}{2}}\Big(\big((1-\lambda)S_{\frac{1}{2}}-S_{\frac{1}{2}-}\big) \mathbbm{1}_{\{\Delta S_{\frac{1}{2}} < 0\}} + \Delta S_{\frac{1}{2}} \mathbbm{1}_{\{\Delta S_{\frac{1}{2}} > 0\}}\Big)},\nonumber\\
\ell_t&= \frac{\varphi^1_t S_{\frac{1}{2}}}{1+ \varphi^1_{\frac{1}{2}}\Big(\big((1-\lambda)S_{\frac{1}{2}}-S_{\frac{1}{2}-}\big) \mathbbm{1}_{\{\Delta S_{\frac{1}{2}} < 0\}} +\Delta S_{\frac{1}{2}} \mathbbm{1}_{\{\Delta S_{\frac{1}{2}} > 0\} }\Big)+ \varphi^1_{\frac{1}{2}+} \lambda S_{\frac{1}{2}}\mathbbm{1}_{\{\Delta S_{\frac{1}{2}} < 0\} }},\label{A:Ex1:eq1a}
\end{align}
for $t \in ({\frac{1}{2}},1].$ By the scaling of the logarithm this implies that
\begin{align}\label{A:Ex1:eq2}
E\big[\log\big(V_1^{liq}(\vp)\big)\big] \leq{}& E\Big[\log \Big(1+\ell_{\frac{1}{2}}\Big(\big( (1-\lambda ) S_{\frac{1}{2}} - S_{\frac{1}{2}-}\big) \mathbbm{1}_{\{\Delta S_{\frac{1}{2}}<0 \}}+ \Delta S_{\frac{1}{2}} \mathbbm{1}_{\{\Delta S_{\frac{1}{2}} > 0 \}}\Big)\Big)\Big]\nonumber\\
&+ E [\log (1+\ell_{\frac{1}{2}+} \lambda ) ]+ E\Big[\log\Big(1+ \ell_\sigma\Big((1-\lambda)\big(1+a_\sigma(\eta-1)\big)-1\Big)\Big) \Big].
\end{align}
The basic idea to derive the optimal trading strategy $\hvp=(\hvp^0_t, \hvp^1_t)_{0 \leq t \leq 1}$ for \eqref{A:Ex1:PP} is now to maximise the right hand side of \eqref{A:Ex1:eq2} over all predictable processes $\ell = (\ell_t)_{0 \leq t \leq 1}$ and to show that solving \eqref{A:Ex1:eq1a} allows us to define a self-financing trading strategy under transaction costs such that we have equality in \eqref{A:Ex1:eq1}. For this, we observe that we can maximise the terms on the right hand side of \eqref{A:Ex1:eq2} independently of each other and only need to solve
\begin{align}
&E\Big[\log \Big(1+\ell_{\frac{1}{2}}\Big(\big( (1-\lambda ) S_{\frac{1}{2}} - S_{\frac{1}{2}-}\big) \mathbbm{1}_{\{\Delta S_{\frac{1}{2}}<0 \}}+ \Delta S_{\frac{1}{2}} \mathbbm{1}_{\{\Delta S_{\frac{1}{2}} > 0 \}}\Big)\Big)\Big]\to\max!,\quad \ell_{\frac{1}{2}}\in\cF_{\frac{1}{2}-},\label{A:Ex1:P1}\\
&E\Big[\log\Big(1+ \ell_\sigma\Big((1-\lambda)\big(1+a_\sigma(\eta-1)\big)-1\Big)\Big) \Big] \to \max!,\quad \text{  predictable $ (\ell_t)_{\frac{1}{2} <t \leq 1}$.}\label{A:Ex1:P2}
\end{align}
We show below that the solution to \eqref{A:Ex1:P1} and \eqref{A:Ex1:P2} are given by
\begin{align}
\widehat{\ell}_{\frac{1}{2}} &= \frac{4-\lambda}{1+\lambda},\\
\widehat{\ell}_t&= \frac{1}{\lambda + (1-\lambda ) a_t} \quad \mbox{for $t \in (\frac{1}{2}, 1].$}
\end{align}
This will then also imply the optimal value for 
\begin{align}
E[\log(1 + \ell_{\frac{1}{2}+} \lambda) ] \to \mbox{max!}
\end{align}
when we maximise over all possible limits $\ell_{\frac{1}{2}+}= \lim_{t \searrow \frac{1}{2}} \ell_t$ of predictable processes $(\ell_t)_{\frac{1}{2} < t \leq 1}$ for which the problem \eqref{A:Ex1:P2} is well-defined $>-\infty$, i.e. $\ell_t \leq \frac{1}{\lambda + (1-\lambda)a_t}$ for $t \in (\frac{1}{2},1]$ and therefore $\widehat{\ell}_{\frac{1}{2}+} = \frac{1}{\lambda+(1-\lambda) a_{\frac{1}{2}}}.$

We first illustrate how to obtain the solution to \eqref{A:Ex1:P2}. To that end, we observe that
$$h(\ell, t):= E\Big[\log \Big(1+ \ell \Big((1-\lambda)\big(1+a_t (\eta-1)\big) -1\Big)\Big)\Big]$$
is given by
$$h(\ell, t)= \log \Big(1+ \ell \big((1-\lambda)(1+a_t) -1\big)\Big)(1-\ve) + \sum^\infty_{n=1} \textstyle\log \Big(1+ \ell\Big((1-\lambda)\big(1+a_t (\frac{1}{n}-1)\big)-1\Big)\Big) \ve 2^{-n}$$
and its derivative $\frac{\partial h}{\partial \ell}(\ell, t)$ by 
\begin{align}
\frac{\partial h}{\partial \ell} (\ell, t)&= \frac{(1-\lambda) a_t-\lambda}{1+\ell\big((1-\lambda) a_t-\lambda\big)} (1-\ve) + \sum^\infty_{n=1} \frac{(1-\lambda ) a_t(\frac{1}{n}-1) - \lambda}{1 + \ell\big((1-\lambda) a_t(\frac{1}{n}-1)-\lambda\big)} \ve 2^{-n}.
\end{align}
As 
\begin{align}
\frac{\partial h}{\partial \ell} (\widehat{\ell}_t, t)={}&\frac{\big((1-\lambda)a_t\big)^2-\lambda^2}{2(1-\lambda )a_t} (1-\varepsilon)\nonumber\\
& - \sum^\infty_{n=1} \frac{\big((1-\lambda )a_t + \lambda \big)^2}{(1-\lambda )a_t} \varepsilon n2^{-n} + \sum^\infty_{n=1}\big((1-\lambda )a_t + \lambda \big) \varepsilon 2^{-n}\nonumber\\
={} & \frac{\big((1-\lambda)a_t\big)^2 - \lambda^2}{2(1-\lambda)a_t} (1-\varepsilon)\nonumber\\
& - \frac{\big((1-\lambda)a_t + \lambda\big)^2}{(1-\lambda)a_t} \frac{1}{2} \frac{1}{(1-\frac{1}{2})^2} \varepsilon+ \big((1-\lambda)a_t + \lambda\big) \varepsilon > 0  \label{B.1}
\end{align}
for $\widehat{\ell}_t=\frac{1}{\lambda + (1-\lambda)a_t}$ and $\varepsilon \in (0,\frac{1}{3})$ and $\lambda \in (0,1)$ sufficiently small, the concave function $\ell \mapsto h(\ell, t)$ is maximised over its domain $(- \frac{1}{(1+\lambda)a_t}, \widehat{\ell}_t]$ by $\widehat{\ell}_t.$

The solution to \eqref{A:Ex1:P1} is simply obtained by solving the first order condition $f'(\widehat{\ell}_{\frac{1}{2}})=0$ for 
\begin{align*}
f(\ell)&=E \left[\log \Big(1+ \ell\big((1-\lambda)S_{\frac{1}{2}}-S_{\frac{1}{2}-}\big)\Big) \mathbbm{1}_{\{\Delta S_{\frac{1}{2}} < 0\}} + \log (1+\ell \Delta S_{\frac{1}{2}}) \mathbbm{1}_{\{\Delta S_{\frac{1}{2}} > 0\}}\right]\\
&=\textstyle \log \Big(1+ \ell \big((1-\lambda ) \frac{1}{2} -1\big)\Big) \frac{1}{6} + \log (1+\ell 2) \frac{5}{6}
\end{align*}
and
\begin{align}\label{A.8}
f'(\ell) = \frac{(1-\lambda) \frac{1}{2}-1}{1+\ell((1-\lambda)\frac{1}{2}-1)} \frac{1}{6} + \frac{2}{1+ \ell 2} \frac{5}{6},
\end{align}
which gives $\widehat{\ell}_{\frac{1}{2}}= \frac{4-\lambda}{1+\lambda}.$

To obtain the optimal strategy $\hvp=(\hvp^0_t, \hvp^1_t)_{0 \leq t \leq T}$ to \eqref{A:Ex1:PP}, we only need to observe that solving \eqref{A:Ex1:eq1a} gives a self-financing and admissible trading strategy under transaction costs, as
\begin{align*}
\Delta_+\hvp^1_0 &= \frac{4 - \lambda}{1+\lambda} >0,\\
\Delta_+\hvp^1_{\frac{1}{2}} &= \frac{1 + \hvp^1_{\frac{1}{2}} \Delta S_{\frac{1}{2}}}{S_{\frac{1}{2}}} \frac{1} {\lambda + (1-\lambda)a_{\frac{1}{2}}} -\frac{4 - \lambda}{1+\lambda}> 0 \quad \mbox{on $\{\Delta S_{\frac{1}{2}}>0\},$}\\
\Delta_+\hvp^1_{\frac{1}{2}} &= \frac{1 + \hvp^1_{\frac{1}{2}} ((1-\lambda ) S_{\frac{1}{2}} - S_{\frac{1}{2}-})} {(1-\lambda)S_{\frac{1}{2}}} \frac{1}{a_{\frac{1}{2}}} -\frac{4 - \lambda}{1+\lambda}< 0 \quad \mbox{on $\{ \Delta S_{\frac{1}{2}} <0\},$}\\
d \hvp^{1,2}_t &= \mathbbm{1}_{\rrbracket \frac{1}{2}, \sigma \rrbracket} \frac{\hvp^0_{\frac{1}{2}+} + \hvp^1_{\frac{1}{2}} S_{\frac{1}{2}}} {S_{\frac{1}{2}}} \frac{1-\lambda}{(\lambda +(1-\lambda )a_t )^2} \frac{1}{3} dt
\end{align*}
and $\hvp^0$ can be defined by the self-financing conditions \eqref{eq:sf2.1} -- \eqref{eq:sf2.3} with equality.

2) That the solution to the dual problem \eqref{A:Ex1:DP} is given by \eqref{A:Ex1:DO} follows immediately from Proposition \ref{prop:log} and the formulas \eqref{A:Ex1:hS} and \eqref{A:Ex1:dw} from \eqref{mt2:eq2}. 
\end{proof}

Let us now come to the approximation of the dual optimiser by consistent price systems. For this, it is more convenient to think of the consistent prices systems $Z^n=(Z^{0,n}, Z^{1,n})$ as pairs $(Q^n, \widetilde{S}^n)$ of processes $\widetilde{S}^n=(\widetilde{S}^n_t)_{0 \leq t \leq 1}$ evolving in the bid-ask spread and equivalent martingale measures $Q^n$ for those.

Since $\widehat{Y}=(\widehat{Y}^0, \widehat{Y}^1)$ is a martingale on $[0, \frac{1}{2}]$ by the first order condition \eqref{A.8}, we can simply set
\begin{align*}
Z_t^{0,n} &= \widehat{Y}^0_t, \qquad t \in[0, \textstyle\frac{1}{2}],\\
Z_t^{1,n} &= \widehat{Y}^1_t, \qquad t \in[0, \textstyle\frac{1}{2}],
\end{align*}
for all $n$ or, equivalently,
\begin{align*}
\frac{dQ^n}{dP}\Big|_{\mathcal{F}_{\frac{1}{2}}}&=\widehat{Y}^0_{\frac{1}{2}},\\
\widetilde{S}^n_t&=S_0,\quad t \in[0, \textstyle\frac{1}{2}),\\
\widetilde{S}^n_{\frac{1}{2}}&=\begin{cases} S_{\frac{1}{2}}&: \Delta S_{\frac{1}{2}}>0,\\
(1-\lambda) S_{\frac{1}{2}}&: \Delta S_{\frac{1}{2}}<0.
\end{cases}
\end{align*}

On $\{\Delta S_{\frac{1}{2}}<0\}$ we can extend the probability measures $Q^n$ to measure $\widetilde{Q}^n\sim P$ such that $\tau\sim \exp(n)$ is exponentially distributed, the expectations $E_{\tQ^n} [\eta - 1] =:b_n$ decrease to $-1$, i.e.~$E_{\tQ^n} [\eta - 1] =b_n\searrow -1$, and $\tau$ and $\eta$ remain independent under $\tQ^n$. Indeed, let $(\widetilde{\eta}^n)^\infty_{n=1}$ be a sequence of strictly positive $\sigma(\eta)$-measurable random variables such that $E[\widetilde{\eta}^n]=1$ and $E[\widetilde{\eta}^n (\eta-1)]=b_n \searrow -1.$ Then
$$\frac{d\widetilde{Q}^n}{dP}=\widehat{Y}^0_{\frac{1}{2}} \exp\big(-(n-1)\tau\big) \widetilde{\eta}^n$$
is the Radon-Nikodym derivative of a probability measure $\widetilde{Q}^n\sim P$ such that $E_{\widetilde{Q}^n} [\eta-1]=b_n$ and $\tau \sim \exp(n)$ under $\widetilde{Q}^n.$ The density process $\widetilde{Z}^n$ of $\widetilde{Q}^n$ is given by
\be
\widetilde{Z}^n_t=\widehat{Y}^0_{\frac{1}{2}\wedge t}\textstyle\exp\big(-(n-1)(\sigma \wedge t-\frac{1}{2})^+\big)\big(1+(n \widetilde{\eta}^n -1) \mathbbm{1}_{\llbracket \sigma, 1 \rrbracket}(t)\big),\quad 0\leq t\leq1.\label{A:Ex1:app1}
\ee
Therefore the expectation of the jump $E_{\tQ^n}[(1-\lambda) \Delta S_\sigma]$ of the bid price $(1-\lambda)S$ at time $\sigma$ under $\tQ^n$ is strictly negative.

Since the stopping time $\sigma$ remains totally inaccessible on $(\frac{1}{2},1)$ under $\tQ^n$, the compensator $A^n_t$ of the bid price $(1-\lambda)S$ under $\tQ^n$ is a continuously decreasing process
$$A^n_t:=\int^{t \wedge \sigma}_{\frac{1}{2}} (1-\lambda)S_{\frac{1}{2}} a_sE_{\tQ^n} [\eta-1]n\, ds=(1-\lambda)S_{\frac{1}{2}}\int^{t \wedge \sigma}_{\frac{1}{2}} a_s b_n n \, ds,\quad t \in [\textstyle\frac{1}{2},1).$$
Therefore the $\widetilde{Q}^n$-martingale
$$M^n_t:=(1-\lambda)S_t - A^n_t, \quad t \in \textstyle[\frac{1}{2},1),$$
is continuously increasing, if there is no jump, and we need to stop it at
$$\sigma_n:= \inf \textstyle\{t > \frac{1}{2}~|~M^n_t >S_\frac{1}{2}\}$$
to keep it in the bid-ask spread $[(1-\lambda)S,S].$

As $n$ increases, the martingales $M^n$ increase steeper and steeper, if there is no jump, so that the stopping times $\sigma_n$ converge $P$-a.s.~to $\frac{1}{2}$ and the right jump
$$\Delta_+\widehat{S}_{\frac{1}{2}}=\lim_{n\to \infty}(M^n_{\sigma_n}-M^n_{\frac{1}{2}})=\lambda S_{\frac{1}{2}}$$
arises as the limit of the continuous compensators $A^n$. 

As the probability that $\sigma$ is very close to $\frac{1}{2}$ under $\tQ^n$ increases with $n$, we obtain that 
$$\lim_{n \to \infty} \tQ^n \left[\textstyle\frac{1}{2} \leq \sigma \leq \sigma_n~\Big|~\cF_{\frac{1}{2}}\right]=: c > 0 \quad \mbox{on}\quad \{\Delta S_{\frac{1}{2}} < 0\}.$$
But, since $\lim_{n \to \infty} P[\frac{1}{2} \leq \sigma \leq \sigma_n] =0$, this implies that the measures $\tQ^n$ loose the mass $c$ on the sets $\{\frac{1}{2} \leq \sigma \leq \sigma_n\}$, which causes a right jump of the limit of the density processes $\tZ^{0,n}$ of the $\tQ^n$, i.e.
$$\lim_{n \to \infty} (\tZ^{0,n}_{\sigma_n} - \tZ^{0,n}_{\frac{1}{2}})=-c < 0 \quad \mbox{on}\quad \{\Delta S_{\frac{1}{2}} < 0\}.$$
However, comparing $c$ with $\Delta_+ \widehat{Y}^0_{\frac{1}{2}}$ we obtain that 
$$\Delta_+ \widehat{Y}^0_{\frac{1}{2}} > -c \quad \mbox{on} \quad \{\Delta S_{\frac{1}{2}} < 0\}.$$
The reason for this is that the martingale $M^n$ does not jump to the bid price at time $\sigma$, but we rather have
$$M^n_\sigma =(1-\lambda) S_\sigma- A^n_\sigma > (1 -\lambda) S_\sigma.$$
In order to adjust for this, we need to modify $M^n$ to obtain a $\tQ^n$-martingale $\widetilde{M}^n$ such that 
$$\widetilde{M}^n_\sigma=(1-\lambda)S_\sigma.$$
This results in choosing the jump of $\widetilde{M}^n$ such that 
$$\Delta \widetilde{M}^n_\sigma =(1-\lambda) \Delta S_\sigma + \widetilde{A}^n_\sigma < (1-\lambda) \Delta S_\sigma$$
and therefore gives a steeper (than $A^n$) decreasing compensator $\widetilde{A}^n$, where $\widetilde{M}^n$ and $\widetilde{A}^n$ are both implicitly related by
\be
\widetilde{A}^n_t=\int^{t \wedge \sigma}_{\frac{1}{2}} E_{\tQ^n} [\Delta \widetilde{M}^n_\sigma | \mathcal{F}_{\sigma-}] n\, ds = \int^{t \wedge \sigma}_{\frac{1}{2}}\big((1-\lambda)S_{\frac{1}{2}} a_s b_n + \widetilde{A}^n_s\big)n\, ds.\label{Ex1:int}
\ee
As $\widetilde{A}^n$ is decreasing steeper than $A^n$ on $\{\sigma_n < \sigma\}$, we obtain that the stopping times
\begin{align}\label{A:Ex1:st}
\tilde{\sigma}_n:= \textstyle\inf \{ t > \frac{1}{2}~|~\widetilde{M}^n_t > S_{\frac{1}{2}} \}
\end{align}
decrease faster to $\frac{1}{2}$ than $\sigma_n$ and therefore the measures $\tQ^n$ loose less mass on the sets $\{ \frac{1}{2} \leq \sigma \leq \tilde{\sigma}_n \}$ so that
$$\Delta_+ \widehat{Y}^0_{\frac{1}{2}} = -\lim_{n \to \infty} \tQ^n \textstyle\left[\frac{1}{2} \leq \sigma \leq \tilde{\sigma}_n~\Big|~\cF_{\frac{1}{2}}\right].$$
To show the existence of the $\widetilde{Q}^n$-martingales $(\widetilde{M}^n_t)_{t \in [\frac{1}{2},1]}$, we only need to observe that 
$$\widetilde{A}^n_t=\textstyle(1-\lambda) S_{\frac{1}{2}}\left( \frac{n-1}{n} a_{\frac{1}{2}}b_n + \left(\int^t_{\frac{1}{2}} na_s b_n \exp\big(-n(s-\frac{1}{2})\big) ds - \frac{n-1}{n} a_{\frac{1}{2}}b_n\right) \exp\big(n(t-\frac{1}{2})\big)\right),$$
for $t \in [\frac{1}{2},1]$, is a solution to the integral equation \eqref{Ex1:int} satisfying the boundary condition $\widetilde{A}^n_{\frac{1}{2}}=0$ and therefore setting
\be\label{A:Ex1:app2}
\widetilde{M}_t^n= 
\begin{cases}
\widehat{S}_t&: 0 \leq t \leq \frac{1}{2},\\
(1-\lambda) S_{\frac{1}{2}} - \widetilde{A}^n_t &: \frac{1}{2} < t < \sigma,  \\
(1-\lambda) S_\sigma&: \sigma\leq t\leq 1
\end{cases}
\ee
gives the desired $\widetilde{Q}^n$-martingale.

Moreover, since $\widetilde{M}^n_{\tilde{\sigma}_n}= S_{\frac{1}{2}}$ on $\{\tilde{\sigma}_n < \sigma\}$ and $\tilde{\sigma}_n \searrow \frac{1}{2},$ we have that
\begin{align*}
&\lim_{n\to \infty}(-\widetilde{A}^n_{\sigma_n})\\
&= \lim_{n\to\infty}\textstyle(1-\lambda)S_{\frac{1}{2}}\left[\frac{n-1}{n} a_{\frac{1}{2}} b_n + \left(\int^{\tilde{\sigma}_n}_{\frac{1}{2}} na_s b_n \exp \big(-n(s-\frac{1}{2})\big)ds - \frac{n-1}{n} a_{\frac{1}{2}}b_n\right) \exp \big(n(\tilde{\sigma}_n-\frac{1}{2})\big)\right]\\
&=\textstyle (1-\lambda) S_{\frac{1}{2}} a_{\frac{1}{2}} \Big(1-\lim_{n\to \infty} \exp \big(n(\tilde{\sigma}_n - \frac{1}{2})\big)\Big)= S_{\frac{1}{2}}
\end{align*}
and hence
$$\lim_{n \to \infty} \textstyle\widetilde{Q}^n\big[\frac{1}{2} \leq \sigma \leq \tilde{\sigma}_n\big|\cF_{\frac{1}{2}}\big] = \widehat{Y}^0_{\frac{1}{2}} \Big(1-\lim_{n \to \infty} \exp \big(-n(\tilde{\sigma}_n - \frac{1}{2})\big)\Big) = \widehat{Y}^0_{\frac{1}{2}} \frac{\lambda}{\lambda + (1-\lambda) a_{\frac{1}{2}}} = -\Delta_+ \widehat{Y}^0_{\frac{1}{2}}$$
and $\lim_{n \to \infty} (\widetilde{Z}^n_{\tilde{\sigma}_n} - \widetilde{Z}^n_{\frac{1}{2}}) = \Delta_+ \widehat{Y}^0_{\frac{1}{2}}.$
To simplify notation, we set $\tilde{\sigma}_n=\frac{1}{2}$ on $\{\Delta S_{\frac{1}{2}} > 0\}$ and then use the above to define $Q^n=\tQ^n$ on $\F_{\tilde{\sigma}_n}$ and $\tS^n_t=\tM^n_t$ for $t\in(\frac{1}{2},\tilde{\sigma}_n]$.

To obtain an approximation sequence of consistent price systems on $\rrbracket \tilde\sigma_n,1\rrbracket$, we recall that from Proposition \ref{A:Ex1:prop1} the primal and dual optimisers on $\rrbracket \tilde\sigma_n,1\rrbracket$ are determined by the solution $\widehat{\ell}_t$ to the problem
\be\label{A1:Pinfty}
h(\ell,t)=E\Big[\log \Big(1+\ell \big((1-\lambda)(1+a_t(\eta-1)-1\big)\Big)\Big] \to \max!,\quad\ell \in \mathbb{R}.
\ee
In order to approximate the dual optimiser we therefore consider, for $n \in \mathbb{N}$, the auxiliary problems
\begin{align}\label{A1:Pn}
h^n(\ell,t):=E\Big[\log \Big(1+\ell \Big((1-\lambda)\big(1+a_t(\eta^n-1)\big)-1\Big)\Big)\Big] \to \max!,\quad\ell \in \mathbb{R},
\end{align} 
where
\begin{align*}
\eta^n(\omega):= 
\begin{cases}
\eta(\omega) &: \eta(\omega) \geq \frac{1}{n},\\
\frac{1}{n} &: \eta(\omega) < \frac{1}{n}
\end{cases}
\end{align*}
for $n \in \mathbb{N}.$ These problems can be interpreted as logarithmic utility maximisation problems (without transaction costs) for a one-period price process $\widetilde{R}^{n,t}$ given by $\widetilde{R}^{n,t}_0=1$ and $\widetilde{R}^{n,t}_1:=(1+\lambda)\big(1+a_t(\eta^n-1)\big)$. Therefore we obtain the following lemma from the theory of one-period frictionless utility maximisation.

\begin{lemma}\label{A:Ex1:lA3}
{\bf{1)}} The solution $\widehat{\ell}^n_t$ to problem \eqref{A1:Pn} exists and satisfies, for all $t \in [\frac{1}{2},1]$, that
\begin{align}
&E[\widehat{\eta}^n_t]=1,\label{A:Ex1:l1}\\
&E \Big[ \widehat{\eta}^n_t \Big((1-\lambda)\big(1+a_t(\eta^n-1)\big) - 1\Big) \Big] =0, \label{A:Ex1:l2}
\end{align}
where 
\begin{align}\label{A:Ex1:app4}
\widehat{\eta}^n_t:=\frac{1}{1+ \widehat{\ell}^n_t \Big((1-\lambda)\big(1+a_t(\eta^n-1)\big)-1\Big)}
\end{align}
for all $n \in \mathbb{N}$ and $t \mapsto \widehat{\ell}^n_t$ is continuous.

{\bf{2)}} Moreover, we have that
\begin{align}
\widehat{\ell}^n_t & \longrightarrow\widehat{\ell}_t,\label{Ex1:Pronv1}\\ 
\widehat{\eta}^n_t  &\longrightarrow \frac{1}{1+\widehat{\ell}_t ((1-\lambda)(2+a_t(\eta-1))-1)} =:\widehat{\eta}^\infty_t, \label{Ex1:Pronv2}
\end{align} 
as $n \to \infty,$ for all $t \in [\frac{1}{2},1].$
\end{lemma}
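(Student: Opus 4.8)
The plan is to treat part 1) as a routine one-period logarithmic-utility computation and part 2) by a compactness argument combined with passing to the limit in the first-order conditions; the only delicate point is that the limiting optimiser $\widehat\ell_t$ lies on the \emph{boundary} of the effective domain of $h(\cdot,t)$, whereas each $\widehat\ell^n_t$ is an interior maximiser.

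For part 1), note that $\eta^n$ takes only finitely many values, hence so does $R^n:=(1-\lambda)\big(1+a_t(\eta^n-1)\big)-1$, say the values $r_1<\dots<r_m$ with strictly positive probabilities $p_1,\dots,p_m$ (depending on $n$ and $t$). For $\lambda$ small and $t\in[\tfrac12,1]$ (so $a_t\in[\tfrac16,\tfrac13]$) one checks that $r_m>0$ (attained at $\eta^n=2$) and $r_1<0$ (attained at $\eta^n=\tfrac1n$). Therefore $\ell\mapsto h^n(\ell,t)=\sum_j p_j\log(1+\ell r_j)$ is smooth and strictly concave on the bounded open interval $(-1/r_m,-1/r_1)\ni 0$ and tends to $-\infty$ at both endpoints, so it admits a unique maximiser $\widehat\ell^n_t$ in the interior, characterised by $\partial_\ell h^n(\widehat\ell^n_t,t)=0$. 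Writing this equation out is precisely \eqref{A:Ex1:l2}, and \eqref{A:Ex1:l1} then follows from the algebraic identity $E[\widehat\eta^n_t]=E\big[\widehat\eta^n_t(1+\widehat\ell^n_t R^n)\big]-\widehat\ell^n_t\,E[\widehat\eta^n_t R^n]=1$. Finally, $t\mapsto\widehat\ell^n_t$ is continuous by the implicit function theorem applied to the smooth equation $\partial_\ell h^n(\ell,t)=0$, using $\partial_\ell^2 h^n<0$ and that $a_t$ is affine in $t$; uniqueness of the root guarantees that the local branch produced by the implicit function theorem is globally $\widehat\ell^n$.

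For part 2), recall from the proof of Proposition \ref{A:Ex1:prop1} that $\widehat\ell_t=\frac1{\lambda+(1-\lambda)a_t}$ maximises $h(\cdot,t)$, that it is the right endpoint of the half-open domain of $h(\cdot,t)$, and that by \eqref{B.1} together with strict concavity one has $\partial_\ell h(\ell,t)\ge\partial_\ell h(\widehat\ell_t,t)>0$ for every $\ell$ in that domain. Since $\eta^n\downarrow\eta$ pointwise, in fact stationarily (for each $\om$ one has $\eta^n(\om)=\eta(\om)$ once $n$ is large, as $\eta$ has atoms only at $2$ and at the $\tfrac1k$), we get $R^n\to R:=(1-\lambda)\big(1+a_t(\eta-1)\big)-1$ surely, the smallest value $r_1^{(n)}$ of $R^n$ decreases to $\inf R=-1/\widehat\ell_t$, and hence the right endpoint $-1/r_1^{(n)}$ of the domain of $h^n(\cdot,t)$ decreases to $\widehat\ell_t$. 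A short computation shows $\partial_\ell h^n(0,t)=E[R^n]>0$ for $\lambda$ small (using $E[\eta^n]\ge E[\eta]>1$), so $0<\widehat\ell^n_t<-1/r_1^{(n)}$ and the sequence $(\widehat\ell^n_t)_n$ is bounded. Let $\widehat\ell^{n_j}_t\to\ell^\ast\le\widehat\ell_t$ along a subsequence. If $\ell^\ast<\widehat\ell_t$, pick $\ell^\ast<\ell'<\widehat\ell_t$; then for large $j$ one has $\widehat\ell^{n_j}_t\le\ell'$ and $1+\widehat\ell^{n_j}_t R^{n_j}\ge 1+\ell'\,r_1^{(n_j)}\to 1-\ell'/\widehat\ell_t>0$, so the $\widehat\eta^{n_j}_t$ are uniformly bounded while the $R^{n_j}$ are bounded; dominated convergence in \eqref{A:Ex1:l2} then gives $\partial_\ell h(\ell^\ast,t)=\lim_j E[\widehat\eta^{n_j}_t R^{n_j}]=0$, contradicting $\partial_\ell h(\ell^\ast,t)>0$. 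Hence $\ell^\ast=\widehat\ell_t$; since every subsequence has a further subsequence converging to $\widehat\ell_t$, this proves \eqref{Ex1:Pronv1}, and \eqref{Ex1:Pronv2} follows by letting $n\to\infty$ in \eqref{A:Ex1:app4}, using $R^n\to R$ surely and $1+\widehat\ell_t R>0$ a.s.

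The crux is the degeneracy in part 2): because $\widehat\ell_t$ sits exactly on the boundary of $\dom\big(h(\cdot,t)\big)$, one cannot invoke a generic stability-of-maximisers statement. The argument instead exploits that the moving right endpoints $-1/r_1^{(n)}$ approach $\widehat\ell_t$ from above slowly enough that, along any subsequence of $\widehat\ell^n_t$ with limit strictly below $\widehat\ell_t$, the denominators $1+\widehat\ell^n_t R^n$ stay bounded away from zero — which is exactly what is needed to pass the first-order condition to the limit and reach a contradiction with $\partial_\ell h>0$. The remaining verifications (the signs of $r_m$, $r_1$ and $\partial_\ell h^n(0,t)$, and the implicit-function-theorem step) are routine.
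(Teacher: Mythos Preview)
Your proof is correct. Part 1) is essentially identical to the paper's: finitely many values, strict concavity forcing a unique interior critical point, then the implicit function theorem for continuity in $t$.

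For part 2) the paper takes a more direct route. Rather than a compactness/subsequence argument, it observes that $\widehat\ell_t$ lies in the \emph{interior} of the domain of $h^n(\cdot,t)$ for every $n$ (since the right endpoint $\ell^{n,\max}_t=-1/r_1^{(n)}$ is strictly larger than $\widehat\ell_t$), and shows $\partial_\ell h^n(\widehat\ell_t,t)\to\partial_\ell h(\widehat\ell_t,t)>0$. Hence $\partial_\ell h^n(\widehat\ell_t,t)>0$ for large $n$, which by strict concavity forces $\widehat\ell^n_t\in(\widehat\ell_t,\ell^{n,\max}_t)$; since $\ell^{n,\max}_t\downarrow\widehat\ell_t$, the squeeze gives \eqref{Ex1:Pronv1} immediately. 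This is shorter, but the convergence $\partial_\ell h^n(\widehat\ell_t,t)\to\partial_\ell h(\widehat\ell_t,t)$ is not entirely free: at $\ell=\widehat\ell_t$ the denominator $1+\widehat\ell_t r_1^{(n)}$ is of order $1/n$, and one needs the exponential decay $P(\eta^n=1/n)=\varepsilon 2^{-n+1}$ to beat the linear blowup. Your contradiction argument sidesteps exactly this verification by working at a point $\ell^\ast<\widehat\ell_t$ where the denominators stay uniformly bounded, so dominated convergence applies without any rate computation. The trade-off is that you need the extra subsequence bookkeeping; the paper's sandwich avoids that but relies on the specific tail of $\eta$.
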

\begin{proof}
1) This part follows essentially from the frictionless duality theory for utility maximisation; see for example \cite{S04b}. Since $\eta^n$ takes for $n \in \mathbb{N}$ only finitely many different values, the solution $\widehat{\ell}^n_t$ to problem \eqref{A1:Pn} is determined by $(h^n)'(\widehat{\ell}^n_t,t)=0$, where 
\begin{align*}
h^n(\ell, t)={}&E\Big[\log \Big(1 + \ell \big((1-\lambda)(1+a_t(\eta^n-1)\big)-1\Big)\Big]\\
={}&(1-\varepsilon) \log \Big(1+\ell\big((1-\lambda)a_t-\lambda\big)\Big)\\
&+\sum^{n-1}_{m=1} \varepsilon^{-m} \log \textstyle\Big(1-\ell \big((1-\lambda)a_t(1-\frac{1}{m})+\lambda)\big)\Big)\\
&+\varepsilon 2^{-n+1} \log \textstyle\Big(1-\ell \big((1-\lambda)a_t(1-\frac{1}{n})+\lambda)\big)\Big) 
\end{align*} 
and
\begin{align*}
\frac{\partial h^n}{\partial \ell} (\ell, t)={}&E\left[\frac{(1-\lambda)\big(1+a_t(\eta^n-1)\big)-1}{1+\ell\big((1-\lambda)(1+a_t(\eta^n-1))-1}\right]\\
={}&(1-\varepsilon) \frac{(1-\lambda) a_t - \lambda}{1+ \ell\big((1-\lambda)a_t2-1\big)}\\
&-\sum^{n-1}_{m=1} \varepsilon 2^{-m}\frac{(1-\lambda) a_t (1-\frac{1}{m}) + \lambda}{1-\ell ((1-\lambda)a_t(1-\frac{1}{m}) + \lambda)}-\varepsilon 2^{-m+1}\frac{(1-\lambda) a_t (1-\frac{1}{m}) + \lambda}{1-\ell ((1-\lambda)a_t(1-\frac{1}{m}) + \lambda)}.
\end{align*}
Since $(\frac{\partial^2}{\partial \ell^2}h^n)(\widehat{\ell}^n_{t}, t) > 0$ by concavity, we obtain by an application of the implicit function theorem that $t \mapsto \widehat{\ell}^n_t$ is continuous. Since $\lim_{n\to \infty} (\frac{\partial}{\partial \ell}h^n)(\widehat{\ell}_t) = (\frac{\partial}{\partial \ell}h) (\widehat{\ell}_t,t)>0$ for $\widehat{\ell}_t=\frac{1}{\lambda + (1-\lambda)\alpha_t}$ and $\lim_{\ell\nearrow \ell^{n,\max}_t}(h^n)'(\ell,t):= \frac{1}{(1-\lambda)a_t(1-\frac{1}{n}) + \lambda},$ we obtain that $\widehat{\ell}^n_t \in (\widehat{\ell}^\infty_t, \ell^{n,\max}_t)$ for sufficiently large $n$ and therefore \eqref{Ex1:Pronv1} and \eqref{Ex1:Pronv2}.
\end{proof}

After these preparations, we have now everything in place to give the approximating sequence of $\lambda$-consistent price systems.

\begin{prop}\label{A:Ex1:propZn}
Let the processes $(\widetilde{Z}^n_t)_{0 \leq t \leq 1}, (\widetilde{M}^n_t)_{\frac{1}{2} \leq t \leq 1}$ and $(\widehat{\eta}^n_t)_{\frac{1}{2} \leq t \leq 1}$ be as defined in \eqref{A:Ex1:app1}, \eqref{A:Ex1:app2}, and \eqref{A:Ex1:app4} and the stopping times $\widetilde{\sigma}_n$ as in \eqref{A:Ex1:st}. Then:
\begin{itemize}
\item[{\bf1)}] The processes $Z^n=(Z^{0,n}, Z^{1,n})$ given by 
\begin{align*}
Z^{0,n}_t=
\begin{cases}
\widehat{Y}^0_t &: 0 \leq t \leq \frac{1}{2},\\
\widetilde{Z}^n_t &: \frac{1}{2} < t \leq \widetilde{\sigma}_n,\\
\widetilde{Z}^n_{\widetilde{\sigma}_n} \big(1 + (\widehat{\eta}^n_t - 1) \mathbbm{1}_{\llbracket \sigma, 1 \rrbracket}(t)\big)&: \widetilde{\sigma}_n < t \leq 1
\end{cases}
\end{align*}
and
\begin{align*}
Z^{1,n}_t=
\begin{cases}
\widehat{Y}^1_t &: 0 \leq t \leq \frac{1}{2},\\
\widetilde{Z}^n_t \widetilde{M}^n_t&: \frac{1}{2} < t \leq \widetilde{\sigma}_n,\\
\widetilde{Z}^n_{\widetilde{\sigma}_n} \big(1 + (\widehat{\eta}^n_t - 1) [(1-\lambda ) (1+a_t(\eta^n-1 ) -1)] \mathbbm{1}_{\llbracket \sigma, 1 \rrbracket}(t) \big)S_{\frac{1}{2}}&: \widetilde{\sigma}_n < t \leq 1
\end{cases}
\end{align*}
are martingales and, in particular, $\lambda$-consistent price systems (for sufficiently large $n$).
\item[{\bf2)}] We have that
$$(Z^{0,n}_\tau, Z^{1,n}_\tau) \stackrel{P}{\longrightarrow} (\widehat{Y}^0_\tau, \widehat{Y}^1_\tau), \quad \mbox{as} \quad n \to \infty, $$ 

\end{itemize}
for all finite stopping times $\tau.$
\end{prop}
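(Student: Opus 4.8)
The plan is to verify the two assertions interval by interval, assembling the objects that have already been constructed above. For Part 1), I would first record that $(\hY^0,\hY^1)$ is a genuine $P$-martingale on $[0,\tfrac12]$: this is exactly the content of the first-order condition \eqref{A.8} for $\widehat{\ell}_{\frac12}$, which says that $\hY^0_{1/2}/\hY^0_0$ is the conditional density of an equivalent martingale measure, while $\hY^1=\hY^0\hS$ with $\hS$ constant on $[0,\tfrac12)$ and $\hS_{1/2}\in\{S_{1/2},(1-\lambda)S_{1/2}\}$, so $\tS^n=\hS$ lies in the bid-ask spread there. On $(\widetilde\sigma_n,1]$, after $\widetilde\sigma_n$ we are, conditionally on $\cF_{\widetilde\sigma_n}$, in a one-period market carrying only the jump at the totally inaccessible time $\sigma$; equations \eqref{A:Ex1:l1}--\eqref{A:Ex1:l2} of Lemma \ref{A:Ex1:lA3} state precisely that $\heta^n$ is the density of an equivalent martingale measure for this one-period model, and the continuity of $t\mapsto\widehat{\ell}^n_t$ (hence of $t\mapsto\heta^n_t$) makes the evaluation at the random time $\sigma$ measurable. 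On the middle piece $(\tfrac12,\widetilde\sigma_n]$, $Z^{0,n}=\tZ^n$ is by \eqref{A:Ex1:app1} the density process of the equivalent probability measure $\widetilde Q^n$, hence a strictly positive $P$-martingale, and $\widetilde M^n$ is a $\widetilde Q^n$-martingale because the $\widetilde A^n$ displayed before \eqref{A:Ex1:app2} solves the integral equation \eqref{Ex1:int}; by the Bayes rule $Z^{1,n}=\tZ^n\widetilde M^n$ is then a $P$-martingale. Since $\tfrac12$ and $\widetilde\sigma_n$ are stopping times at which $Z^{0,n},Z^{1,n}$ do not jump (on $\{\widetilde\sigma_n<\sigma\}$), the three martingale pieces glue together via the tower property into one $P$-martingale.

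It then remains, for Part 1), to check that $\tS^n:=Z^{1,n}/Z^{0,n}$ is valued in $[(1-\lambda)S,S]$. This is trivial on $[0,\tfrac12]$; on $(\tfrac12,\widetilde\sigma_n]$ we have $\tS^n=\widetilde M^n=(1-\lambda)S_{1/2}-\widetilde A^n$, so $\tS^n\ge(1-\lambda)S_t$ because the compensator $\widetilde A^n$ decreases from $0$, and $\tS^n\le S_{1/2}=S_t$ by the very definition \eqref{A:Ex1:st} of $\widetilde\sigma_n$ (using that $S$ is constant on $(\tfrac12,\sigma)$); on $(\widetilde\sigma_n,1]$ one reads off the defining formulas that $\tS^n$ equals $S_{1/2}$ before $\sigma$ and $(1-\lambda)S_\sigma$ from $\sigma$ on, both in the spread. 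The restriction to sufficiently large $n$ enters exactly where Lemma \ref{A:Ex1:lA3} guarantees that $\widehat{\ell}^n_t$ lies in the interior of its admissible range and where $b_n\searrow-1$ keeps $\widetilde Q^n$ a genuine equivalent probability measure.

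For Part 2), fix a finite stopping time $\tau$. On $\{\tau\le\tfrac12\}$ there is nothing to prove since the processes agree with $(\hY^0,\hY^1)$ there. On $\{\tau>\tfrac12\}$ I would use that $\widetilde\sigma_n\searrow\tfrac12$ $P$-a.s.\ (and $\widetilde\sigma_n=\tfrac12$ on $\{\Delta S_{1/2}>0\}$), so that for $P$-a.e.\ $\omega$ one has $\widetilde\sigma_n(\omega)<\tau(\omega)$ for all large $n$ and $\tau$ eventually falls into the third regime, where $Z^{0,n}_\tau=\tZ^n_{\widetilde\sigma_n}\big(1+(\heta^n_\tau-1)\mathbbm{1}_{\llbracket\sigma,1\rrbracket}(\tau)\big)$ and similarly for $Z^{1,n}_\tau$. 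One then combines the limits already at hand: $\tZ^n_{\widetilde\sigma_n}\to\hY^0_{1/2}+\Delta_+\hY^0_{1/2}=\hY^0_{1/2+}$ (this is the mass-loss computation $\lim_n\widetilde Q^n[\tfrac12\le\sigma\le\widetilde\sigma_n\mid\cF_{1/2}]=-\Delta_+\hY^0_{1/2}$ carried out before the proposition), $\widehat{\ell}^n_t\to\widehat{\ell}_t$ and $\heta^n_t\to\heta^\infty_t$ from Lemma \ref{A:Ex1:lA3}, and $\widetilde A^n\to0$ on $(\tfrac12,\sigma)$; matching these against the explicit form \eqref{A:Ex1:hS}--\eqref{A:Ex1:dw} of $(\hY^0,\hY^1)$ coming from \eqref{A:Ex1:DO}, and using that $\hY^0$ is constant on $(\tfrac12,\sigma)$ and on $[\sigma,1]$, yields $(Z^{0,n}_\tau,Z^{1,n}_\tau)\stackrel{P}{\longrightarrow}(\hY^0_\tau,\hY^1_\tau)$; the case $\{\Delta S_{1/2}>0\}$, where the middle regime is empty and there is no right jump, is simpler.

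The main obstacle is the pasting across the predictable time $\widetilde\sigma_n$ and the behaviour of this pasting in the limit: one must see that $\tZ^n_{\widetilde\sigma_n}$ converges to $\hY^0_{1/2+}$ rather than to $\hY^0_{1/2}$, i.e.\ that the continuously decreasing compensators $\widetilde A^n$ generate in the limit exactly the right jump $\Delta_+\hY^0_{1/2}$ while the corresponding $\widetilde Q^n$-mass on $\{\tfrac12\le\sigma\le\widetilde\sigma_n\}$ is lost. This is precisely the mechanism by which a sequence of c\`adl\`ag martingales can converge, in the sense of convergence in probability at all stopping times, to an optional strong supermartingale exhibiting a genuine right jump; keeping track of the rate at which $\widetilde\sigma_n\searrow\tfrac12$ (encoded in the closed form of $\widetilde A^n$) is the heart of the argument and the one place where a genuine estimate, rather than bookkeeping, is required.
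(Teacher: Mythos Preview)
Your plan follows essentially the same route as the paper: verify the martingale property piecewise on $[0,\tfrac12]$, $(\tfrac12,\widetilde\sigma_n]$, and $(\widetilde\sigma_n,1]$ using, respectively, the first-order condition \eqref{A.8}, the Bayes formula with the $\widetilde Q^n$-martingale $\widetilde M^n$, and the one-period duality relations \eqref{A:Ex1:l1}--\eqref{A:Ex1:l2}; then check the bid-ask spread and finally pass to the limit using the convergences $\widetilde\sigma_n\searrow\tfrac12$, $\widetilde Z^n_{\widetilde\sigma_n}\to\hY^0_{1/2+}$, $\widehat{\ell}^n_t\to\widehat{\ell}_t$ established in the surrounding text.

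There is, however, one imprecision in your bid-ask check on $(\widetilde\sigma_n,1]$. You assert that after the jump at $\sigma$ one has $\widetilde S^n=(1-\lambda)S_\sigma$ exactly. This is not what the formulas give: the third-regime expressions involve $\eta^n$, not $\eta$, so $\widetilde S^n_\sigma$ is built from $(1-\lambda)\big(1+a_\sigma(\eta^n-1)\big)$ rather than from $(1-\lambda)\big(1+a_\sigma(\eta-1)\big)=(1-\lambda)S_\sigma/S_{1/2}$. On the event $\{\eta<\tfrac1n\}$ these differ, and it is precisely here that the restriction to sufficiently large $n$ is needed --- the paper chooses $n(\lambda)$ so that $(1-\lambda)\big(1+a_t(\tfrac1n-1)\big)<1-a_t$, which forces $\widetilde S^n_t$ back into $[(1-\lambda)S_t,S_t]$. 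Your remark that the large-$n$ requirement ``enters exactly'' at the interior condition on $\widehat{\ell}^n_t$ and at $b_n\searrow-1$ therefore mislocates the issue: $b_n\searrow-1$ is not what makes $\widetilde Q^n$ equivalent (positivity of $\widetilde\eta^n$ does that for every $n$), and the genuine constraint on $n$ comes from the spread check after $\sigma$. Apart from this slip, your argument is sound and matches the paper.
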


\begin{proof}
1) From the first order condition $f'(\widehat{\ell}_{\frac{1}{2}})=E[\widehat{Y}^0_{\frac{1}{2}} (\widehat{S}_{\frac{1}{2}} - \widehat{S}_0)]=0$ in \eqref{A.8} we obtain that 
$$E[\widehat{Y}^0_{\frac{1}{2}}|\F_t]=1 - \hvp^1_{\frac{1}{2}} E[\widehat{Y}^0_{\frac{1}{2}}(\widehat{S}_{\frac{1}{2}}-\widehat{S}_0)|\F_t]=1= \widehat{Y}^0_t,\quad 0\textstyle \leq t \leq \frac{1}{2},$$
and therefore that $(\widehat{Y}^0_t)_{0 \leq t \leq \frac{1}{2}}$ and hence $(Z^{0,n}_t)_{0 \leq t \leq \frac{1}{2}}$ are martingales. This also implies that
$$E[\Delta \widehat{Y}^1_{\frac{1}{2}}|\F_{\frac{1}{2}-}] = E[ \widehat{Y}^0_{\frac{1}{2}} \Delta \widehat{S}_{\frac{1}{2}}] + E[\Delta \widehat{Y}^0_{\frac{1}{2}} S_0]=0$$
and therefore that $(\widehat{Y}^1_t)_{0 \leq t \leq \frac{1}{2}}$ and hence $(Z^{1,n}_t)_{0 \leq t \leq \frac{1}{2}}$ are martingales.

The martingale property of $Z^{0,n}$ on $(\frac{1}{2}, \widetilde{\sigma}_n]$ follows from the definition of $\widetilde{Z}^n$ as density process of $\widetilde{Q}^n$ and that of $Z^{1,n}$ on $(\frac{1}{2}, \widetilde{\sigma}_n]$ by Bayes formula, since $(\widetilde{M}^n_t)_{\frac{1}{2} \leq t \leq 1}$ is a $\widetilde{Q}$-martingale. That $Z^{0,n}$ and $Z^{1,n}$ are martingales on $(\widetilde{\sigma}_n, 1]$ as well then follows from the fact that $\eta^n$ and $\widehat{\eta}^n$ are $\sigma(\eta)$-measurable, $\eta$ is independent of $\sigma$ and one can therefore verify the martingale condition directly by using \eqref{A:Ex1:l1} and \eqref{A:Ex1:l2}.

To conclude that $Z^n=(Z^{0,n}, Z^{1,n})$ is a $\lambda$-consistent price system, it remains to observe that $\widetilde{S}^n:=\frac{Z^{1,n}}{Z^{0,n}}$ is valued in the bid-ask spread $[(1-\lambda)S,S]$ for sufficiently large $n \geq n(\lambda).$ To see this, we observe that we can fix $n (\lambda) \in \mathbb{N}$ such that $(1-\lambda)\big(1 + a_t(\frac{1}{n}-1)\big) < 1 - a_t$ for all $n \geq n (\lambda)$ and therefore have $(1-\lambda) S_t \leq \widetilde{S}^n_t \leq S_t$ for all $n \geq n (\lambda).$

2) For the proof of the convergence in probability at all finite stopping times, let $\tau$ be any finite stopping time and recall that 
$\widetilde{\sigma}_n \stackrel{P}{\longrightarrow} \frac{1}{2},$
$\widetilde{Z}^n_{\widetilde{\sigma}_n} \stackrel{P}{\longrightarrow} \widetilde{Y}^0_{\frac{1}{2}+},$
$\widetilde{M}_{\widetilde{\sigma}_n} \stackrel{P}{\longrightarrow} S_{\frac{1}{2}}$
and $\widehat{\eta}^n \stackrel{P}{\longrightarrow} \widehat{\eta}^\infty$, as $n \to \infty$, by the definitions and discussions above.
Therefore
\begin{align*}
Z^{0,n}_\tau ={}& \widehat{Y}^0_\tau \mathbbm{1}_{\{ \tau \leq \frac{1}{2}\}} + \widetilde{Z}^n_\tau \mathbbm{1}_{\{ \frac{1}{2} < \tau \leq \widetilde{\sigma}_n\}} + \widetilde{Z}^n_{\widetilde{\sigma}_n} (1 + (\widehat{\eta}^n - 1) \mathbbm{1}_{\llbracket \sigma, 1 \rrbracket} (\tau) \mathbbm{1}_{\{\widetilde{\sigma}_n < \tau \leq 1\} } \stackrel{P}{\longrightarrow} \widehat{Y}^0_\tau, \\
Z^{1,n}_\tau ={}& \widehat{Y}^1_\tau \mathbbm{1}_{\{\tau \leq \frac{1}{2}\}} +\widetilde{Z}^n_\tau \widetilde{M}^n_\tau \mathbbm{1}_{\{\frac{1}{2} < \tau \leq \widetilde{\sigma}_n \}} \\
&+ \widetilde{Z}^n_{\widetilde{\sigma}_n} (1+(\widehat{\eta}^n-1 ) \mathbbm{1}_{\llbracket \sigma, 1 \rrbracket} (\tau ) )S_{\frac{1}{2}} (1+a_\tau (\eta^n-1)\mathbbm{1}_{\llbracket \sigma, 1 \rrbracket} (\tau )) \mathbbm{1}_{\{ \widetilde{\sigma}_n < \tau \leq 1\}} \stackrel{P}{\longrightarrow} \widehat{Y}^1_\tau,
\end{align*}
as $ n \to \infty$, as $\mathbbm{1}_{\{ \frac{1}{2}< \tau \leq \widetilde{\sigma}_n\}} \stackrel{P}{\longrightarrow} 0$ and $\mathbbm{1}_{\{ \widetilde{\sigma}_n < \tau \leq 1\}} \stackrel{P}{\longrightarrow} \mathbbm{1}_{\{ \frac{1}{2}< \tau \leq 1\}}.$
\end{proof}

\subsection{Left limit of limits $\ne$ limit of left limits}\label{App:Ex2}
Let us now turn to the second example discussed in Section \ref{sec:ex2}.
\begin{prop}\label{A:Ex2:prop}
Let $\varepsilon \in (0, \frac{1}{3})$ and fix $\lambda \in (0,1)$ sufficiently small. Then: 

{\bf{1)}} The solution $\hvp=(\hvp^0_t, \hvp^1_t)_{0 \leq t \leq 1} \in \cA(1)$
to the problem 
\begin{equation}\label{A:Ex2:PP}
E\big[\log \big(V_1^{liq}(\vp)\big )\big] \to\max!, \quad \varphi \in \cA(1),
\end{equation}
for the ask price $S=(S_t)_{0 \leq t \leq 1}$ given by \eqref{Ex2:S} exists and is given by
\begin{align*}
\hvp^1_t = \sum^\infty_{j=1}\hvp^1_{t_j} \mathbbm{1}_{(t_{j-1}, t_j]}(t) + \hvp^1_{\frac{1}{2}} \mathbbm{1}_{[\frac{1}{2},1]}(t)
\end{align*}
for $t \in [0,1]$, where
\begin{align*}
\hvp^1_{t_1}&= \Delta_+ \hvp^1_0 = \frac{1}{\lambda +(1-\lambda )a_1}=:\widehat{\pi}_{t_1}> 0,\\
\hvp^1_{t_j}&= (1-\lambda \hvp^1_{t_1}) \frac{1}{(1-\lambda ) a_j}  \mathbbm{1}_{\{ \sigma > t_{j-1}\}}=:(1-\lambda \hvp^1_{t_1})\widehat{\pi}_{t_j},\quad j \geq 2,\\
\hvp^1_{\frac{1}{2}-}&=\lim_{j\to\infty}\hvp^1_{t_j}= (1-\lambda \hvp^1_{t_1}) \frac{1}{(1-\lambda) a_\infty} \mathbbm{1}_{\{\sigma=\frac{1}{2}\}}=(1-\lambda \hvp^1_{t_1})\widehat{\pi}_{\frac{1}{2}-},\\
\hvp^1_{\frac{1}{2}} &= \big(1-\lambda (\hvp^1_{t_1}- \hvp^1_{\frac{1}{2}-})\big) \frac{1}{\lambda + (1-\lambda ) \frac{1}{2}} \mathbbm{1}_{ \{\sigma = \frac{1}{2} \} }=:\big(1-\lambda (\hvp^1_{t_1}- \hvp^1_{\frac{1}{2}-})\big)\widehat{\pi}_{\frac{1}{2}}
\end{align*}
and $\hvp^0_0=1$ and $d\hvp^0$ is determined by the self-financing conditions \eqref{eq:sf2.1} - \eqref{eq:sf2.3} with equality.

{\bf{2)}} The solution $\widehat{Y} = (\widehat{Y}^0, \widehat{Y}^1)$ to the dual problem
\be
E[-\log(Y^0_T)-1] \to \min!, \quad Y=(Y^0, Y^1) \in \cB\big(\hy (x) \big),\label{A:Ex2:DP}
\ee
for $\hy (x) = u'(x)=\frac{1}{x}=1$ exists and is given by
\be
(\hY^0,\hY^1)=\left(\frac{1}{\hvp^0 + \hvp^1\hS},\frac{\widehat{S}}{\hvp^0 + \hvp^1\widehat{S}}\right),\label{A:Ex2:DO}
\ee
where
\be
\hS_t=1+\big((1-\lambda ) S_{t_1} - S_{t_1-}\big)\mathbbm{1}_{\{t_1\leq t\}}+ \sum^\infty_{j=2}(1-\lambda ) \Delta S_{t_j}\mathbbm{1}_{\{t_j\leq t\}}+\big((1-\lambda ) S_{\frac{1}{2}} - S_{\frac{1}{2}-}\big)\mathbbm{1}_{\{\frac{1}{2}\leq t\}}\label{A:Ex2:hS}
\ee
and
\begin{align}
\hvp^0_t + \hvp^1_t\hS_t={}&1+\hvp^1_{t_1}\big( (1-\lambda ) S_{t_1} - S_{t_1-}\big)\mathbbm{1}_{\{t_1\leq t\}}+ \sum^\infty_{j=2} \hvp^1_{t_j} (1-\lambda ) \Delta S_{t_j}\mathbbm{1}_{\{t_j\leq t\}}\nonumber\\
&\quad+ \hvp^1_{\frac{1}{2}-} \big(S_{\frac{1}{2}-} - (1-\lambda )S_{\frac{1}{2}-}\big)\mathbbm{1}_{\{\frac{1}{2}\leq t\}}+ \hvp^1_{\frac{1}{2}} \big((1-\lambda ) S_{\frac{1}{2}} - S_{\frac{1}{2}-}\big)\mathbbm{1}_{\{\frac{1}{2}\leq t\}}\nonumber\\
={}&\Big(1+\widehat{\pi}_{t_1} \big((1-\lambda ) S_{t_1} - S_{t_1-}\big)\mathbbm{1}_{\{t_1\leq t\}}\Big)\prod^\infty_{j=2}\big(1+\widehat{\pi}_{t_j}(1-\lambda ) \Delta S_{t_j}\mathbbm{1}_{\{t_j\leq t\}}\big)\nonumber\\
&\quad\times\Big(1+\widehat{\pi}_{\frac{1}{2}-} \big(S_{\frac{1}{2}-} - (1-\lambda )S_{\frac{1}{2}-}\big)\mathbbm{1}_{\{\frac{1}{2}\leq t\}}\Big)\Big(1+\widehat{\pi}_{\frac{1}{2}} \big((1-\lambda) S_{\frac{1}{2}} - S_{\frac{1}{2}-}\big)\mathbbm{1}_{\{\frac{1}{2}\leq t\}}\Big)\label{A:Ex2:dw}
\end{align}
for $t\in[0,1]$.
\end{prop}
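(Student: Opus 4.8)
The plan is to argue exactly as in the proof of Proposition~\ref{A:Ex1:prop1}, the only genuinely new feature being that the trading dates $t_1<t_2<\cdots$ accumulate at $t_\infty=\tfrac12$. For Part~1, I would first produce an upper bound for $E[\log V_1^{liq}(\vp)]$, $\vp\in\cA(1)$, which decouples into one--period logarithmic problems. Since $S\equiv1$ on $\llbracket0,\sigma\llbracket$ and $S$ is constant on $\llbracket\sigma,1\rrbracket$, the representation \eqref{eq:gtc} of the liquidation value, together with the principle \eqref{D9} that frictionless trading inside the bid--ask spread dominates frictional trading for $S$, shows that on $\{\sigma=t_k\}$ the value $V_1^{liq}(\vp)$ is at most that of an agent who trades frictionlessly at the ask $1$ on $\llbracket0,t_k\llbracket$, collects the jump $\vp^1_{t_k}a_k(\eta-1)$ and liquidates at the bid $(1-\lambda)S_{t_k}$, and analogously on $\{\sigma=\tfrac12\}$ with the position built up by buying at the ask immediately before $\tfrac12$ and the jump $\tfrac12(\eta-1)$. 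Expressing the successive holdings as fractions of the running (shadow) wealth, as with the $\ell$'s in \eqref{A:Ex1:eq1a}, factorises $V_1^{liq}(\vp)$ on each of these events; taking logarithms, and using $\eta\perp\sigma$ together with the fact that $\{\sigma\ge t_j\}$ and $\{\sigma=\tfrac12\}$ are revealed strictly before the corresponding jump, $E[\log V_1^{liq}(\vp)]$ is bounded above --- after maximising over the predictable fractions --- by $\sum_{j\ge1}P[\sigma=t_j]\sup_\ell E[\log(1+\ell\rho_j(\eta))]+P[\sigma=\tfrac12]\sup_\ell E[\log(1+\ell\rho_\infty(\eta))]$, where $\rho_1(\eta)=(1-\lambda)a_1(\eta-1)-\lambda$ and $\rho_\infty(\eta)=(1-\lambda)\tfrac12(\eta-1)-\lambda$ (the ``$-\lambda$'' reflecting the purchase at the ask), while in the pure--selling periods $j\ge2$ one has $\rho_j(\eta)=(1-\lambda)a_j(\eta-1)$.

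Each function $\ell\mapsto E[\log(1+\ell\rho_j(\eta))]$ is concave, and --- by the same kind of computation as in \eqref{B.1} and \eqref{A.8} --- is maximised at the right endpoint of its domain, i.e.\ at the largest $\ell$ for which the hypothetical outcome $\{\eta=0\}$ does not drive the corresponding wealth below zero. For $j\ge2$ the required strict positivity of the derivative at that endpoint reduces, using $P[\eta=2]=1-\varepsilon$ and $P[\eta=\tfrac1n]=\varepsilon 2^{-n}$, to $E[1/\eta]=\tfrac12+\tfrac{3\varepsilon}{2}<1$, which holds because $\varepsilon<\tfrac13$; for the two remaining periods it reduces to an inequality valid for $\lambda$ sufficiently small. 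This identifies the maximisers and hence the quantities $\widehat\pi_{t_1}=\tfrac1{\lambda+(1-\lambda)a_1}$, $\widehat\pi_{t_j}=\tfrac1{(1-\lambda)a_j}$ ($j\ge2$), $\widehat\pi_{1/2-}=\tfrac1{(1-\lambda)a_\infty}$ and $\widehat\pi_{1/2}=\tfrac1{\lambda+(1-\lambda)/2}$ appearing in the statement. One then defines $\hvp^1$ by the claimed formulas and $\hvp^0$ by \eqref{eq:sf2.1}--\eqref{eq:sf2.3} with equality, and verifies that $\hvp^1$ is predictable and of finite variation (the right jumps $\Delta_+\hvp^1_{t_j}$ are summable because $\sum_j(a_{j+1}^{-1}-a_j^{-1})$ telescopes), that $V^{liq}_t(\hvp)\ge0$ for all $t$ (by construction the hypothetical $\{\eta=0\}$ value is exactly $0$ at every trading date and $V^{liq}$ can only decrease in between), and that $\hvp$ attains the above bound with equality; uniqueness of both optimisers then follows from the strict concavity of $U=\log$, as in \cite{KS99}.

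For Part~2, Proposition~\ref{prop:log} gives $(\hY^0,\hY^1)=\big(\tfrac1{\hvp^0+\hvp^1\hS},\tfrac{\hS}{\hvp^0+\hvp^1\hS}\big)$ with $\hS=\hY^1/\hY^0$ characterised through \eqref{mt2:eq2}. Since, by Part~1, the optimiser $\hvp$ buys only at time $0$ (the lump $\hvp^1_{t_1}=\Delta_+\hvp^1_0$) and at $\tfrac12$ (the left jump $\hvp^1_{1/2}-\hvp^1_{1/2-}>0$) and otherwise only sells (the right jumps $\Delta_+\hvp^1_{t_j}<0$), \eqref{mt2:eq2} forces $\hS=S$ on $\llbracket0,t_1\llbracket$ and at the buying right jump, $\hS^p=S_-$ at the left jump at $\tfrac12$, and $\hS=(1-\lambda)S$ at every selling jump; reading this off yields \eqref{A:Ex2:hS}, and in particular $\hS^p_{1/2}=S_{1/2-}\ne(1-\lambda)S_{1/2-}=\hS_{1/2-}$. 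Finally \eqref{A:Ex2:dw} is obtained by evaluating the sandwiched integral $1+\hvp^1\sint\widehat{\mathcal{S}}$ of \eqref{p13} along $\hvp^1$, its product form simply recording that between two consecutive trades the shadow wealth is multiplied by $1+\widehat\pi\cdot(\text{jump of }\hS)$.

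The main obstacle I anticipate lies at the accumulation point $t_\infty=\tfrac12$: one must justify that the limiting one--period problem is indeed the one used above, that the left limits $\hvp^1_{1/2-}=\lim_j\hvp^1_{t_j}$ and $\hS_{1/2-}=(1-\lambda)S_{1/2-}$ produced by the construction are genuinely the left limits --- whereas the \emph{predictable} versions must instead have $\hvp^1$ buying at $\hS^p_{1/2}=S_{1/2-}$, which is precisely the ``left limit of limits $\ne$ limit of left limits'' effect --- that $\hvp^1$ remains of finite variation and admissible through this limit, and that the myopic (tower--property) decoupling over $(\cF_{t_j})_j$ remains valid up to $j=\infty$. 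The elementary derivative computation showing that each one--period optimum sits at the endpoint of its domain --- the reason the maximal--position strategy is genuinely optimal rather than merely admissible --- is the other point that needs care.
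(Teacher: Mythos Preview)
Your approach is the paper's approach: bound $V^{liq}_1(\vp)$ above by frictionless trading at a well-chosen price in the bid--ask spread, use the multiplicative structure of the logarithm to decouple into one--period problems, show each is maximised at the right endpoint of its domain, and then verify that the resulting strategy is self-financing and admissible under transaction costs with equality in the bound; Part~2 is indeed Proposition~\ref{prop:log} together with \eqref{mt2:eq2}. Your identification of the delicate point at $t_\infty=\tfrac12$ (the distinction $\hS^p_{1/2}=S_{1/2-}\neq (1-\lambda)S_{1/2-}=\hS_{1/2-}$) is exactly right.

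One imprecision worth fixing: the displayed upper bound $\sum_{j\ge1}P[\sigma=t_j]\sup_\ell E[\log(1+\ell\rho_j(\eta))]+P[\sigma=\tfrac12]\sup_\ell E[\log(1+\ell\rho_\infty(\eta))]$ is not the decomposition the argument actually uses. The shadow price $\hS$ drops by $\lambda$ at $t_1$ on \emph{every} path (not only on $\{\sigma=t_1\}$), and rises by $\lambda$ at $\tfrac12-$ on $\{\sigma=\tfrac12\}$; these contributions --- the terms $E[\log(1+\pi_{t_1}((1-\lambda)S_{t_1}-S_{t_1-}))]$ and $E[\log(1+\pi_{\frac12-}\lambda)]$ in the paper's \eqref{A:Ex2:eq2} --- are missing from your sum. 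Concretely, the first one--period problem is
\[
\sup_{\pi}\Big\{P[\sigma=t_1]\,E\big[\log\big(1+\pi\rho_1(\eta)\big)\big]+P[\sigma>t_1]\,\log(1-\lambda\pi)\Big\},
\]
not just the first summand. This does not spoil your conclusion, because the extra term is decreasing in $\pi$ and, for $\lambda$ small, the derivative of the full objective is still strictly positive at $\pi=\tfrac{1}{\lambda+(1-\lambda)a_1}$, so the optimum remains at the boundary; but without that term the claim ``$\hvp$ attains the bound with equality'' would fail. Once you write the bound as in \eqref{A:Ex2:eq1}--\eqref{A:Ex2:eq2}, everything else in your plan goes through verbatim.
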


\begin{proof}
1) We begin with the solution to the primal problem \eqref{A:Ex2:PP}. As already explained in Section \ref{sec:ex2}, since $S$ is constant on $[t_{j-1}, t_j)$ for $j \in \mathbb{N}$ and $[\frac{1}{2},1]$, it does not matter, where the positions are rebalanced during the intervals $[t_{j-1}, t_j)$ and we can therefore assume that the trades take place immediately after time $t_{j-1}$ for $j \in \mathbb{N}$ and there is no trading after time $\frac{1}{2}$. Next we recall that trading for any price within the bid-ask spread is always more favourable than trading under transaction costs. So we have
\begin{align}
V_1^{liq}(\vp) \leq 1 &+ \vp^1_{t_1}\big( (1-\lambda ) S_{t_1} - S_{t_1-}\big)+ \sum^\infty_{j=2} \vp^1_{t_j} (1-\lambda ) \Delta S_{t_j}\notag\\
&+ \vp^1_{\frac{1}{2}-} \big(S_{\frac{1}{2}-} - (1-\lambda )S_{\frac{1}{2}-}\big)+ \vp^1_{\frac{1}{2}} \big((1-\lambda ) S_{\frac{1}{2}} - S_{\frac{1}{2}-}\big)\label{A:Ex2:eq1}
\end{align}
for all $\varphi \in \cA(1)$. By the scaling of the logarithm this allows us to estimate 
\begin{align}
E\big[\log \big(V_1^{liq}(\vp)\big)\big] \leq &E\Big[\log\Big(1+\pi_{t_1} \big((1-\lambda ) S_{t_1} - S_{t_1-}\big)\Big)\Big]\notag\\
&+ \sum^\infty_{j=1} E\big[\log\big(1+\pi_{t_j}(1-\lambda ) \Delta S_{t_j}\big)\big]\notag\\
&+ E\Big[\log \Big(1+\pi_{\frac{1}{2}-} \big(S_{\frac{1}{2}-} - (1-\lambda )S_{\frac{1}{2}-}\big)\Big)\Big]\notag\\
&+ E\Big[\log \Big(1+\pi_{\frac{1}{2}} \big((1-\lambda) S_{\frac{1}{2}} - S_{\frac{1}{2}-}\big)\Big)\Big], \label{A:Ex2:eq2}
\end{align} 
where
\begin{align}
\pi_{t_1} &= \vp^1_{t_1}\notag\\
\pi_{t_j} &= \frac{\vp^1_{t_j}}{1+\vp^1_{t_1} ((1-\lambda) S_{t-1} - S_{t_1-}) + \sum^{j-1}_{k=2} \vp^1_{t_k} (1-\lambda) \Delta S_{t_k}},\notag\\
\pi_{\frac{1}{2}-}&= \frac{\vp^1_{\frac{1}{2}-}} {1+ \vp^1_{t_1}((1-\lambda ) S_{t_1} - S_{t_1-}) + \sum^\infty_{k=2} \vp^1_{t_k} (1-\lambda ) \Delta S_{t_k}},\notag\\
\pi_{\frac{1}{2}} &= \frac{\vp^1_{\frac{1}{2}}} {1 + \vp^1_{t_1} ((1-\lambda ) S_{t_1} - S_{t_1-}) + \sum^\infty_{k=2} \vp^1_{t_k} (1-\lambda ) \Delta S_{t_k} + \vp^1_{\frac{1}{2}-}(S_{\frac{1}{2}-} - (1-\lambda )S_{\frac{1}{2}-})}. \label{A:Ex2:eq3}
\end{align}
The basic idea to derive the optimal trading strategy $\hvp=(\hvp^0_t, \hvp^1_t)_{0 \leq t \leq 1}$ for \eqref{A:Ex2:PP} under transaction costs is now to maximise the right hand side of \eqref{A:Ex2:eq2} over all predictable processes $\pi=(\pi_t)_{0 \leq t \leq 1}$ and to show that this allows us to define by solving \eqref{A:Ex2:eq3} a self-financing strategy under transaction costs such that we have equality in \eqref{A:Ex2:eq1}. For this, we observe that we can maximise the terms on the right hand side of \eqref{A:Ex2:eq2} independently of each other and only need to solve
\begin{align}
&E\Big[\log \Big(1+\pi_{t_1}\big((1-\lambda ) S_{t_1} - S_{t_1-}\big)\Big)\Big] \to \max!, &\pi_{t_1} \in \F_{t_1-},\label{A:Ex2:P1}\\ 
&E\Big[\log \Big(1+\pi_{t_j}(1-\lambda ) \Delta S_{t_j} \Big)\Big] \to \max!, &\pi_{t_j} \in \F_{t_j-},\label{A:Ex2:P2}\\ 
&E\Big[\log \Big(1+\pi_{\frac{1}{2}}\big((1-\lambda ) S_{\frac{1}{2}} - S_{\frac{1}{2}-}\big)\Big)\Big] \to \max!,& \pi_{\frac{1}{2}} \in \F_{\frac{1}{2}-}, \label{A:Ex2:P3}
\end{align}
where the solutions are, as explained below, given by
\begin{align}
\widehat{\pi}_{t_1}& = \frac{1}{\lambda + (1-\lambda) a_1},\label{A:Ex2:S1}\\
\widehat{\pi}_{t_j} &= \frac{1}{(1-\lambda) a_j} \mathbbm{1}_{\{ \sigma > t_{j-1}\}}, \quad j \geq 2, \label{A:Ex2:S2}\\
\widehat{\pi}_{\frac{1}{2}} &= \frac{1}{\lambda + (1-\lambda ) \frac{1}{2}} \mathbbm{1}_{\{ \sigma = \frac{1}{2}\}}. \label{A:Ex2:S3}
\end{align}
This will also give the optimal value for 
\begin{align}\label{A:Ex2:eq4}
E\Big[\log\Big(1 +\pi_{\frac{1}{2}-} \big(S_{\frac{1}{2}-} - (1-\lambda) S_{\frac{1}{2}-}\big)\Big)\Big] \to \max!
\end{align}
when maximising over all possible limits $\pi_{\frac{1}{2}-}=\lim_{j \to \infty} \pi_{t_j}$ of processes $\pi=(\pi_t)_{0 \leq t \leq 1}$ for which the problems \eqref{A:Ex2:P2} are well-defined $>-\infty$, i.e.~$\widehat{\pi}_{t_j} \leq \frac{1}{(1-\lambda)a_j} \mathbbm{1}_{\{\sigma > t_j\}}$ for all $j\geq 2$. As $S_{\frac{1}{2}-}-(1-\lambda) S_{\frac{1}{2}-} = \lambda S_{\frac{1}{2}-} > 0$ on $\{\sigma=\frac{1}{2}\}$, this is precisely the upper boundary $\widehat{\pi}_{\frac{1}{2}-}= \lim_{j \to \infty} \widehat{\pi}_{t_j}= \frac{1}{(1-\lambda) \frac{2}{3}}$ of the domain of \eqref{A:Ex2:eq4}.

The proof that the solution to problems \eqref{A:Ex2:P1} - \eqref{A:Ex2:P3} are given by \eqref{A:Ex2:S1} - \eqref{A:Ex2:S3} follows by the same arguments as that of Proposition \ref{A:Ex1:prop1} and is therefore omitted. Note, however, that these arguments use that $\varepsilon\in(0,\frac{1}{3})$ and $\lambda$ is sufficiently small, as in \eqref{B.1}.

To conclude we only need to observe that defining $\hvp=(\hvp^0_t, \hvp^1_t)_{0 \leq t \leq 1}$ by solving \eqref{A:Ex2:eq3}, i.e.~by $(\hvp^0_0, \hvp^1_0)=(1,0)$,
\begin{align*}
\hvp^1_{t_1}&=\widehat{\pi}_{t_1},\\
\hvp^1_{t_j}&=\left(1+\hvp^1_{t_1}\big((1-\lambda) S_{t_1}-S_{t_1}\big) + \sum^{j-1}_{k=2} \hvp^1_{t_k} (1-\lambda) \Delta S_{t_k}\right) \widehat{\pi}_{t_j}\\
&=(1-\lambda \hvp^1_{t_1}) \frac{1}{(1-\lambda)a_j} \mathbbm{1}_{\{ \sigma > t_{j-1}\}},\quad j \geq2,\\
\hvp^1_{\frac{1}{2}-}&=\lim_{j\to\infty}(1-\lambda \hvp^1_{t_1}) \frac{1}{(1-\lambda)a_j}\mathbbm{1}_{\{ \sigma > t_{j-1}\}}=(1-\lambda \hvp^1_{t_1}) \frac{1}{(1-\lambda)\frac{2}{3}} \mathbbm{1}_{\{ \sigma =\frac{1}{2}\}},\\
\hvp^1_{\frac{1}{2}}&=\left(1+\hvp^1_{t_1}\big((1-\lambda ) S_{t_1} - S_{t_1-}\big) + \sum^\infty_{k=2} \hvp^1_{t_k}(1-\lambda )\Delta S_{t_k}+ \hvp^1_{\frac{1}{2}-} \big(S_{\frac{1}{2}-} - (1-\lambda ) S_{\frac{1}{2}-}\big)\right) \widehat{\pi}_{\frac{1}{2}}\\
&= \big(1-\lambda (\hvp^1_{t_1}- \hvp^1_{\frac{1}{2}-})\big) \frac{1}{\lambda+(1-\lambda)\frac{1}{2}}
\end{align*}
and $d\hvp^0$ by the self-financing condition \eqref{sfc} with equality gives a self-financing and admissible trading strategy under transaction costs, since
\begin{align}
\Delta_+\hvp^1_0 &=\hvp^1_{t_1}=\frac{1}{\lambda + (1-\lambda) a_1}>0,\nonumber\\
\Delta_+\hvp^1_{t_1}&=\hvp^1_{t_2} - \hvp^1_{t_1}=\frac{1-\lambda \hvp^1_{t_1}}{1-\lambda} \frac{1}{a_2} - \frac{1}{\lambda + (1-\lambda) a_1} <0,\nonumber\\
\Delta_+ \hvp^1_{t_j}&=\hvp^1_{t_{j+1}}-\hvp^1_{t_j} = \frac{1-\lambda\hvp^1_{t_1}}{1-\lambda} \left(\frac{1}{a_{j+1}}-\frac{1}{a_j}\right) <0, \quad j \geq 2,\nonumber\\
\Delta\hvp^1_{\frac{1}{2}}&=\hvp^1_{\frac{1}{2}}-\hvp^1_{\frac{1}{2}-}=\big(1-\lambda (\hvp^1_{t_1}-\hvp^1_{\frac{1}{2}-})\big)\frac{1}{\lambda +(1-\lambda)\frac{1}{2}}-(1-\lambda\hvp^1_{t_1}) \frac{1}{(1-\lambda)\frac{2}{3}} >0,\label{A:Ex2:pr}
\end{align}
where we use that $\lambda$ is sufficiently small in \eqref{A:Ex2:pr}.

2) Again that the solution to the dual problem \eqref{A:Ex2:DP} is given by \eqref{A:Ex2:DO} follows immediately from Proposition \ref{prop:log} and the formulas \eqref{A:Ex2:hS} and \eqref{A:Ex2:dw} from \eqref{mt2:eq2}. 
\end{proof}

Let us now explain how one can construct a sequence $Z^n=(Z^{0,n}, Z^{1,n})$ of $\lambda$-consistent price systems that is approximating the dual optimiser $\hY=(\hY^0,\hY^1)$.

\begin{lemma}\label{A:Ex2:lSn}
The solution $\hvp^n=(\hvp^{0,n}_t, \hvp^{1,n}_t)_{0 \leq t \leq 1}$ to the frictionless utility maximisation problem
\be\label{A:Ex2:flPn}
E[\log (1+\varphi^1 \sint \widehat{S}^n_1 )]\to \max!, \quad \varphi \in \cA (1; \widehat{S}^n),
\ee
for the price process $\widehat{S}^n=(\widehat{S}^n_t)_{0 \leq t \leq 1}$ defined in \eqref{C2} is given by
$$\hvp^{1,n}=\sum^n_{j=1} \hvp^{1,n}_{t_j}\mathbbm{1}_{(t_{j-1}, t_j]}(t)$$
for $t \in [0,1],$ where
\begin{align*}
\hvp^{1,n}_{t_1}&=\widehat{\pi}^n_{t_1} >0,\\
\hvp^{1,n}_{t_j} &= (1-\lambda \hvp^{1,n}_{t_1}) \widehat{\pi}^n_{t_j} \mathbbm{1}_{\{\sigma > t_{j-1}\}}, \quad 2 \leq j \leq n,\\
\hvp^{1,n}_{\frac{1}{2}} &=\big(1-\lambda (\hvp^{1,n}_{t_1} - \hvp^{1,n}_{t_n})\big) \widehat{\pi}^n_{\frac{1}{2}} \mathbbm{1}_{\{\sigma > t_n\}},
\end{align*}
$\hvp^{0,n}$ is defined by the frictionless self-financing condition with equality and $(\widehat{\pi}^n_{t_j})^n_{j=1}$ and $\widehat{\pi}^n_{\frac{1}{2}}$ are the solutions to
\begin{align*}
&E[\log (1+\pi_{t_j} \Delta \widehat{S}^n_{t_j})] \to \max!,\quad \pi_{t_j} \in \F_{t_j-},\quad 1 \leq j \leq n,\\
&E[\log (1+\pi_{\frac{1}{2}} \Delta \widehat{S}^n_{\frac{1}{2}})] \to \max!,\quad \pi_{\frac{1}{2}} \in \F_{\frac{1}{2}-}.
\end{align*}
Moreover, we have that
\begin{align*}
&E[\widehat{\eta}^n_{\frac{1}{2}}] = E[\widehat{\eta}^n_j]=1, \quad 1 \leq j \leq n,\\
&E[\widehat{\eta}^n_{\frac{1}{2}} \Delta \widehat{S}^n_{\frac{1}{2}}]=E[\widehat{\eta}^n_j \Delta\widehat{S}^n_{t_j}]=0, \quad 1 \leq j \leq n,
\end{align*}
where $\widehat{\eta}^n_j = \frac{1}{1 + \widehat{\pi}^n_{t_j} \Delta \widehat{S}^n_{t_j}}$ and $\widehat{\eta}^n_{\frac{1}{2}}=\frac{1}{1 + \widehat{\pi}^n_{\frac{1}{2}} \Delta \widehat{S}^n_{\frac{1}{2}}}.$
\end{lemma}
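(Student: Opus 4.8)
The plan is to reduce \eqref{A:Ex2:flPn} to a finite family of one-period frictionless log-utility problems, exploiting that the filtration $\F$ is generated by $S^n$ and that $\hS^n$ from \eqref{C2} is piecewise constant with jumps confined to the finite deterministic set $\{t_1,\dots,t_n,\tfrac12\}$. First I would note that on each interval $[t_{j-1},t_j)$ and on $[\tfrac12,1]$ both $\hS^n$ and the $\sigma$-fields $\F_t$ are constant, so for any $\varphi\in\cA(1;\hS^n)$ the wealth $1+\varphi^1\sint\hS^n$ is constant there and its terminal value depends only on the holdings $\varphi^1_{t_j}\in L^0(\F_{t_j-})$ and $\varphi^1_{\frac12}\in L^0(\F_{\frac12-})$; replacing $\varphi^1$ by its left-continuous step version carrying exactly these values changes neither the terminal wealth nor admissibility. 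Hence it is no loss of generality to take $\varphi^1=\sum_{j=1}^n\varphi^1_{t_j}\mathbbm{1}_{(t_{j-1},t_j]}+\varphi^1_{\frac12}\mathbbm{1}_{\{\frac12\}}$, with $\varphi^1$ set to zero once $\sigma$ has occurred (after which the remaining increments of $\hS^n$ vanish on the event where \eqref{C2} applies).

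Next I would set $W_{t_j}:=1+\varphi^1\sint\hS^n_{t_j}$ and reparametrise through $\pi_{t_j}:=\varphi^1_{t_j}/W_{t_{j-1}}\in L^0(\F_{t_j-})$, so that $W_{t_j}=W_{t_{j-1}}\bigl(1+\pi_{t_j}\Delta\hS^n_{t_j}\bigr)$ and hence $1+\varphi^1\sint\hS^n_1=\prod_{j=1}^n\bigl(1+\pi_{t_j}\Delta\hS^n_{t_j}\bigr)\bigl(1+\pi_{\frac12}\Delta\hS^n_{\frac12}\bigr)$. By multiplicativity of the logarithm,
\[
E\bigl[\log(1+\varphi^1\sint\hS^n_1)\bigr]=\sum_{j=1}^n E\bigl[\log(1+\pi_{t_j}\Delta\hS^n_{t_j})\bigr]+E\bigl[\log(1+\pi_{\frac12}\Delta\hS^n_{\frac12})\bigr],
\]
and since the constraints $\pi_{t_j}\in\F_{t_j-}$ and $\pi_{\frac12}\in\F_{\frac12-}$ concern disjoint increments, each summand may be maximised separately. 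Each resulting one-period problem $E[\log(1+\pi\,\xi)]\to\max!$, with $\xi$ an $\F_{t_j}$-measurable random variable taking finitely many values, is strictly concave in $\pi$ on the nonempty convex set $\{\pi:1+\pi\xi>0\text{ a.s.}\}$, finite near $\pi=0$ and tending to $-\infty$ at the boundary, so a unique optimiser $\widehat{\pi}^n_{t_j}$ (respectively $\widehat{\pi}^n_{\frac12}$) exists; pinning down its location uses $\varepsilon\in(0,\tfrac13)$ and $\lambda$ sufficiently small, exactly as in the proof of Proposition~\ref{A:Ex1:prop1} and the sign estimate \eqref{B.1}. The first-order condition reads $E[\heta^n_j\,\Delta\hS^n_{t_j}\mid\F_{t_j-}]=0$ with $\heta^n_j:=(1+\widehat{\pi}^n_{t_j}\Delta\hS^n_{t_j})^{-1}$, whence $E[\heta^n_j\,\Delta\hS^n_{t_j}]=0$, and taking expectations in the identity $\heta^n_j(1+\widehat{\pi}^n_{t_j}\Delta\hS^n_{t_j})=1$ gives $E[\heta^n_j]=1$; the same computation applies to $\heta^n_{\frac12}$.

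It then remains to transcribe the solution back into holdings: from $\varphi^1_{t_j}=\widehat{\pi}^n_{t_j}W_{t_{j-1}}$ and the telescoping product for $W_{t_{j-1}}$ one reads off the stated step-function form of $\hvp^{1,n}$ together with the asserted closed expressions for $\hvp^{1,n}_{t_1}$, $\hvp^{1,n}_{t_j}$ and $\hvp^{1,n}_{\frac12}$, the indicators $\mathbbm{1}_{\{\sigma>t_{j-1}\}}$ and $\mathbbm{1}_{\{\sigma>t_n\}}$ simply recording that trading ceases once $\sigma$ has happened. Since the one-period factors $1+\widehat{\pi}^n_{t_j}\Delta\hS^n_{t_j}$ are strictly positive for $\lambda$ small, the wealth $1+\hvp^{1,n}\sint\hS^n$ stays non-negative, so the reconstructed $\hvp^n=(\hvp^{0,n},\hvp^{1,n})$, with $\hvp^{0,n}$ defined by the frictionless self-financing equation, lies in $\cA(1;\hS^n)$ and attains the supremum in \eqref{A:Ex2:flPn} by the decoupling above.

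The only genuinely delicate point will be that first reduction, i.e.\ the claim that no frictionless admissible strategy trading inside the intervals, or ``at'' rather than ``immediately before'' the jump times, can do strictly better. This rests on $\hS^n$ and $(\F_t)$ being constant on $[t_{j-1},t_j)$ and on $[\tfrac12,1]$ together with there being only finitely many jump times, so that --- in contrast to Proposition~\ref{A:Ex2:prop} --- no infinite-product or limiting argument enters and the problem is genuinely finite-dimensional. Everything after that is the classical one-period duality for the logarithm, so the first step is the one I would write out most carefully.
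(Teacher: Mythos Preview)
Your proposal is correct and follows precisely the route the paper indicates: the paper's own proof reads in full ``The proof follows by similar scaling arguments as that of Proposition~\ref{A:Ex2:prop} and is therefore omitted,'' and your multiplicative decomposition of the wealth via the logarithm, the reparametrisation through the fractions $\pi_{t_j}=\varphi^1_{t_j}/W_{t_{j-1}}$, and the reduction to decoupled one-period problems are exactly those scaling arguments spelled out. The only difference is that you have written out what the paper left implicit, including the verification of the closed-form expressions for $\hvp^{1,n}_{t_j}$ and $\hvp^{1,n}_{\frac12}$ from the telescoping product and the first-order conditions yielding the moment identities for $\heta^n_j$.
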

\bp
The proof follows by similar scaling arguments as that of Proposition \ref{A:Ex2:prop} and is therefore omitted.
\ep
Since $\eta^n \to \eta$, as $n \to \infty$, we obtain as in Lemma \ref{A:Ex1:lA3} that $\widehat{\pi}^n_{t_j} \to \widehat{\pi}_{t_j}$ for $j \in \mathbb{N}$ and $\widehat{\pi}^n_{\frac{1}{2}} \to \widehat{\pi}_{\frac{1}{2}},$ as $n \to \infty$, and therefore also $\hvp^{1,n}_{t_j} \to \hvp^1_{t_j}$ for $j \in \mathbb{N}$ and $\hvp^{1,n}_{\frac{1}{2}} \to \hvp^1_{\frac{1}{2}}$, as $n \to \infty.$ As this implies that $\Delta_+ \hvp^{1,n}_0 > 0,$ $\Delta_+ \hvp^{1,n}_{t_j} < 0$ for $1 \leq j \leq n-1$ and $\Delta_+ \hvp^{1,n}_{t_n} > 0$
for sufficiently large $n$, the optimal strategy for the frictionless utility maximisation problem \eqref{A:Ex2:flPn} coincides with the solution $\hvp^n=(\hvp^{0,n}_t, \hvp_t^{1,n})_{0 \leq t \leq 1}$ to the utility maximisation problem under transaction costs
$$E[\log (V^n_T(\varphi))] \to \max!, \quad \varphi \in \cA^n(x),$$
for the price process $S^n =(S^n_t)_{0 \leq t \leq 1}$ given by \eqref{Ex2:Sn}, where 
$$V^n_T(\varphi):=\varphi^0_T + (\varphi^1_T)^+ (1-\lambda)S^n_T - (\varphi^1_T)^-S^n_T$$
and $\cA^n(x)$ denotes the set of all self-financing and admissible trading strategies under transaction costs $\lambda$ for the price process $S^n$.

For the frictionless dual problem corresponding to \eqref{A:Ex2:flPn} we obtain that the solution $\widehat{Y}^n=(\widehat{Y}^n_t)_{0 \leq t \leq 1}$ is given by 
$$\widehat{Y}^n_t=\frac{1}{1+\hvp^{1,n} \sint \widehat{S}^n_t}=\prod_{j=1}^n\big(1+(\widehat{\eta}^n_j-1)\mathbbm{1}_{\{ t_j\leq t\}}\big)\big(1+(\widehat{\eta}^n_{\frac{1}{2}}-1)\mathbbm{1}_{\{ \frac{1}{2}\leq t\}}\big), \quad t \in [0,1],$$
where
\begin{align}
&1+\hvp^{1,n}\sint \widehat{S}^n_t = 1 + \sum^n_{j=1}\hvp^{1,n}_{t_j} \Delta \widehat{S}_{t_j} + \hvp^{1,n}_{\frac{1}{2}} \Delta \widehat{S}_{\frac{1}{2}} \nonumber\\
&\quad= \Big( 1 + \widehat{\pi}^n_{t_1} \big((1-\lambda) S^n_{t_1} - S^n_{t_1-}\big) \mathbbm{1}_{\{t_1 \leq t\}}\Big)\prod^{n-1}_{j=2} \big(1+ \widehat{\pi}^n_{t_j} (1-\lambda) \Delta S^n_{t_j} \mathbbm{1}_{\{ t_j \leq t \} }\big)\nonumber\\
&\qquad\times \Big(1+\widehat{\pi}^n_{t_n} \big(S^n_{t_n} - (1-\lambda) S^n_{t_n-} \big) \mathbbm{1}_{\{ t_n \leq t \}} \Big)\times\Big(1+ \widehat{\pi}^n_{\frac{1}{2}} \big((1-\lambda) S^n_{\frac{1}{2}} - S^n_{\frac{1}{2}-}\big) \mathbbm{1}_{\{ \frac{1}{2} \leq t \}}\Big),\label{A:Ex2:dwn}
\end{align}
and is the density of an equivalent martingale measure for $\widehat{S}^n$. Therefore
\be
\widehat{Z}^n_t= (\widehat{Z}^{0,n}_t, \widehat{Z}^{1,n}_t):=(\widehat{Y}^n_t, \widehat{Y}^n_t \widehat{S}^n_t), \quad t \in [0,1],
\ee
is a $\lambda$-consistent price system for the price process $S^n=(S^n_t)_{0 \leq t \leq 1}$. Comparing the formulas \eqref{A:Ex2:dw} with \eqref{A:Ex2:dwn} and \eqref{C3} and \eqref{C2} we immediately obtain that
\be
(\widehat{Z}^{0,n}_\tau, \widehat{Z}^{1,n}_\tau) \stackrel{P}{\longrightarrow} (\widehat{Z}^0_\tau, \widehat{Z}^1_\tau), \quad \mbox{as} \quad n \to \infty,\label{A:Ex2:hZn}
\ee
for all finite stopping times $\tau$ and \eqref{C6}, as $\widehat{\pi}^n_{t_j} \to \widehat{\pi}_{t_j}$ for all $j \in \mathbb{N}.$ To turn the $\widehat{Z}^n$'s (for sufficiently large $n$) into $\lambda$-consistent price systems for the price $S=(S_t)_{0 \leq t \leq 1}$, we need to modify the $\widehat{Z}^n$'s on $\{t_n < \sigma < \frac{1}{2}\}$ to obtain martingales $Z^n=(Z^{0,n}_t, Z^{1,n}_t)_{0 \leq t \leq 1}$ such that their ratio $\widetilde{S}^n:= \frac{Z^{1,n}}{Z^{0,n}}$ is valued in the bid-ask spread on $\{t_n < \sigma < \frac{1}{2}\}$ as well.

\begin{prop}
Let $\widehat{Z}^n=(\widehat{Z}^{0,n}, \widehat{Z}^{1,n})$ be as defined in \eqref{A:Ex2:hZn} and $\overline{\eta}$ be a strictly positive $\sigma(\eta)$-measurable random variable such that $E[\overline{\eta}]=1$ and $E[\overline{\eta}(\eta-1)]=0$. Then:
\bi
\item[{\bf1)}] The processes $Z^n=(Z^{0,n}, Z^{1,n})$ given by 
\begin{align*}
Z^{0,n}_t(\omega)&=\begin{cases} \widehat{Y}^n_{t_n\wedge t} (\omega)\big(1 + (\overline{\eta}(\omega) - 1) \mathbbm{1}_{\{\sigma\leq t\}}\big) &:\sigma(\om)\in (t_n, \frac{1}{2}),\\
\widehat{Y}^n_ t&:\text{else},
\end{cases}\\
\widetilde{S}_t(\omega)&= 
\begin{cases}
\widehat{S}^n_t\mathbbm{1}_{\{t<\sigma\}}+S_t(\omega)\mathbbm{1}_{\{\sigma\leq t\}}&: \sigma (\omega) \in (t_n, \frac{1}{2}),\\
\widehat{S}^n_t (\omega)&:\text{else},\\
\end{cases}\\
Z^{1,n}_t(\om)& = Z^{0,n}_t(\om) \widetilde{S}^n_t(\om)
\end{align*}
are (for sufficiently large $n$) $\lambda$-consistent price systems.

\item[{\bf2)}] We have that
\be
(Z^{0,n}_\tau, Z^{1,n}_\tau) \stackrel{P}{\longrightarrow} (\widehat{Y}^0_\tau, \widehat{Y}^1_\tau),\qquad\text{as $n\to \infty$,}\label{A:Ex2:convZn}
\ee
 for all finite stopping times $\tau$ and \eqref{C6}, i.e.
\be
\widetilde{S}^n_{\frac{1}{2}-}= \frac{Z^{1,n}_{\frac{1}{2}-}}{Z^{0,n}_{\frac{1}{2}-}} \stackrel{P}{\longrightarrow} S_{\frac{1}{2}-}, \quad \mbox{as} \quad n \to \infty.\label{A:Ex2:convZn2}
\ee
\ei
\end{prop}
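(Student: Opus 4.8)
The plan is to treat the surgery region $\{t_n<\sigma<\tfrac12\}$ separately from its complement, on which $Z^n=\widehat Z^n$ is already known to be a $\lambda$-consistent price system for $S^n$ (by Lemma \ref{A:Ex2:lSn} and the discussion following it), and to reduce everything to four ingredients: (i) the moment identities of Lemma \ref{A:Ex2:lSn}; (ii) the conditions $E[\overline\eta]=1$ and $E[\overline\eta(\eta-1)]=0$ together with the independence of the $\sigma(\eta)$-measurable variable $\overline\eta$ from $\sigma$; (iii) an elementary large-$n$ estimate of the bid-ask inclusion, in the spirit of the final paragraph of the proof of Proposition \ref{A:Ex1:propZn}; and (iv) the already established convergence \eqref{A:Ex2:hZn}, whose limit is $(\widehat Y^0,\widehat Y^1)$.

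For part 1), I would first note that since $S$ is constant off the countably many predictable times $t_1,t_2,\dots$ and $\tfrac12$, ``local martingale'' here means ``martingale'', and this reduces to the vanishing of the conditional mean of each jump. On $\{\sigma\le t_n\}\cup\{\sigma=\tfrac12\}$ we have $Z^n=\widehat Z^n$, which is a martingale by Lemma \ref{A:Ex2:lSn}, and its ratio $\widehat S^n$ lies in $[(1-\lambda)S,S]$ once $n\ge n(\lambda)$: before the (unique) jump of $S^n$ one has $\widehat S^n\in\{S_t,(1-\lambda)S_t\}=\{1,1-\lambda\}$, while at and after that jump one may take $\widehat S^n=(1-\lambda)S^n$ (cf. \eqref{C2}), for which $(1-\lambda)S_t\le(1-\lambda)S^n_t\le S_t$ follows from $S^n\ge S$ (lower inequality) and, using $\eta^n\ge\tfrac1n$ together with the uniform bound $a_j<\tfrac23$, from $n\ge n(\lambda)$ (upper inequality). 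On $\{t_n<\sigma<\tfrac12\}$ the process $S^n$ has no jump in $(t_n,\tfrac12)$, so $Z^n$ coincides with $\widehat Z^n$ on $[0,\sigma)$, has a single further jump at $\sigma$ with factor $\overline\eta$, and stays constant afterwards; conditioning on $\mathcal F_{\sigma-}$ and using $S_{\sigma-}=1$, $S_\sigma=1+a_j(\eta-1)$ on $\{\sigma=t_j\}$, the independence of $(\overline\eta,\eta)$ from $\sigma$, $E[\overline\eta]=1$ and $E[\overline\eta(\eta-1)]=0$ gives $E[\Delta Z^{0,n}_\sigma\mid\mathcal F_{\sigma-}]=E[\Delta Z^{1,n}_\sigma\mid\mathcal F_{\sigma-}]=0$, while $\widetilde S^n$ equals $\widehat S^n\in\{1,1-\lambda\}$ on $[0,\sigma)$ and $S_t$ on $[\sigma,1]$, hence lies in the bid-ask spread. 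Since $\{\sigma>t_n\}\in\mathcal F_{t_n}$ and the two descriptions agree on $[0,t_n]$, these local verifications glue into the global martingale property of $(Z^{0,n},Z^{1,n})$, so $Z^n$ is a $\lambda$-consistent price system for $n\ge n(\lambda)$.

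For part 2), the processes $Z^n$ and $\widehat Z^n$ differ only on $\{t_n<\sigma<\tfrac12\}$, a set of probability $\sum_{j>n}\tfrac12 2^{-j}\to0$; combined with $\widehat Z^n_\tau\stackrel{P}{\longrightarrow}(\widehat Y^0_\tau,\widehat Y^1_\tau)$ from \eqref{A:Ex2:hZn} this gives \eqref{A:Ex2:convZn}. For \eqref{A:Ex2:convZn2} I would argue pathwise: on $\{\sigma=\tfrac12\}$ one has $Z^n=\widehat Z^n$ and, by \eqref{C2}, $\widetilde S^n_t=S^n_t=1=S_t$ for $t\in[t_n,\tfrac12)$, so $\widetilde S^n_{1/2-}=1=S_{1/2-}$ for every $n$; on $\{\sigma=t_j\}$ one has, for $n<j$, $\widetilde S^n=S$ on $[\sigma,1]$, hence $\widetilde S^n_{1/2-}=S_{t_j}=S_{1/2-}$, whereas for $n\ge j$ one has $Z^n=\widehat Z^n$ and $\widetilde S^n_{1/2-}=S^n_{t_j}=1+a_j(\eta^n-1)\to 1+a_j(\eta-1)=S_{1/2-}$ since $\eta^n\to\eta$. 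In all cases $\widetilde S^n_{1/2-}\to S_{1/2-}$ $P$-a.s., a fortiori in probability.

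I expect the main difficulty to lie in part 1), namely in making rigorous the replacement---on $\{\sigma>t_n\}$---of the martingale jump that $\widehat Z^n$ carries at $\tfrac12$ by the earlier jump at $\sigma$ driven by $\overline\eta$: one must check that the correct conditioning field at $\sigma$ is $\mathcal F_{\sigma-}$, under which both $\overline\eta$ and the increment $a_j(\eta-1)$ of $S$ remain centered as required, and that the surgery is compatible with the filtration generated by $S$; the accompanying uniform-in-$j$, large-$n$ bid-ask inclusion $\widehat S^n\in[(1-\lambda)S,S]$ on the complement of $\{t_n<\sigma<\tfrac12\}$ is the only other nonautomatic point. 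Everything else is bookkeeping resting on Lemma \ref{A:Ex2:lSn}, \eqref{A:Ex2:hZn} and \eqref{C2}.
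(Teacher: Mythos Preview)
Your proposal is correct and follows the paper's plan. The one difference worth noting is in part 1): rather than verifying the martingale property jump by jump and then gluing, the paper writes $Z^{0,n}$ (and similarly $Z^{1,n}$) as a ``fork convex'' pasting of two \emph{globally} defined martingales $M^1=\widehat Y^n$ and $M^2=\widehat Y^n_{t_n\wedge\cdot}\big(1+(\overline\eta-1)\mathbbm{1}_{\llbracket\sigma,1\rrbracket}\big)$ along the predictable set $F=\bigcup_{j\ge n}\{\sigma>t_j\}\times(t_j,1]$, so that the martingale property is inherited automatically from $M^1$ and $M^2$. This device neatly sidesteps the measurability issue you rightly anticipate: your gluing via ``$\{\sigma>t_n\}\in\mathcal F_{t_n}$'' is not quite sufficient as written, since the case split in the definition of $Z^n$ actually depends on $\{\sigma<\tfrac12\}$, which is \emph{not} $\mathcal F_{t_n}$-measurable; the progressive switching encoded in $F$ (or, equivalently, directly checking $E[\Delta Z^n_{t_j}\mid\mathcal F_{t_j-}]=0$ for each $j>n$, where both sub-events $\{\sigma=t_j\}$ and $\{\sigma>t_j\}$ contribute) is the correct fix. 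Your part 2) argument matches the paper's and in fact supplies more pathwise detail on \eqref{A:Ex2:convZn2} than the paper, which simply cites \eqref{A:Ex2:hZn}, \eqref{C6} and $P(\sigma\in(t_n,\tfrac12))\to0$.
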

\begin{proof}
1) To see the martingale property of $Z^{0,n}$, we simply observe that the process $M^1_t=\widehat{Y}^n_t = \widehat{Z}^{0,n}_t$ and $M^2_t= \widehat{Y}^n_{t_n\wedge t} (1 + (\overline{\eta}-1) \mathbbm{1}_{ \{\sigma \leq t \}})$ are strictly positive martingales and so their ``fork convex'' combination or ``pasting''
\begin{align*}
Z^{0,n}_t=
\begin{cases}
M^1_t &: 0 \leq t < t_n,\\
M^1_{t_n}\left(\mathbbm{1}_F \sint \frac{M^1_t}{M^1_{t_n}} + \mathbbm{1}_{F^c} \sint \frac{M^2_t}{M^2_{t_n}}\right)&: t_n \leq t \leq 1
\end{cases}
\end{align*}
is a martingale as well, where the predictable set $F$ is given by $F:= \cup^\infty_{j=n} \{\sigma > t_j\} \times (t_j, 1).$

Similarly, we obtain that
\begin{align*}
Z^{1,n}_t=
\begin{cases}
N^1_t&: 0 \leq t < t_n,\\
N^1_{t_n}\left(\mathbbm{1}_F \sint \frac{N^1_t}{N^1_{t_n}} + \mathbbm{1}_{F^c} \sint \frac{N^2_t}{N^2_{t_n}}\right)&: t_n \leq t \leq 1
\end{cases}
\end{align*}
is a martingale, since $N^1_t=\widehat{Z}^{1,n}_t$ and $N^2_t= \widehat{Z}^{1,n}_{t_n \wedge t}\big(1+(\overline{\eta}-1) S_t \mathbbm{1}_{\{ \sigma \leq t\}}\big)$ are.

That $Z^n=(Z^{0,n}, Z^{1,n})$ is for sufficiently large $n$ a $\lambda$-consistent price system then follows from the fact that $\widetilde{S}^n_t$ is valued in the bid-ask spread $[(1-\lambda) S_t, S_t]$ for $t < \sigma$ as well as $\sigma \leq t$ for sufficiently large $n$.

2) The convergences \eqref{A:Ex2:convZn} and \eqref{A:Ex2:convZn2} then simply follow from \eqref{A:Ex2:hZn} and \eqref{C6} after observing that $P\big(\sigma \in (t_n, \frac{1}{2})\big) \longrightarrow 0,$ as $n \to \infty.$
\end{proof}

\bibliographystyle{abbrv}

\end{document}